\def\dOi{13(3:31)2017}
\theoremstyle{definition}
\theoremstyle{plain}
\theoremstyle{remark}
\begin{document}
\title{Coherent Presentations of Monoidal categories}

\author[P.L.~Curien]{Pierre-Louis Curien\rsuper a}
\address{{\lsuper a}Université Paris Diderot}
\email{\href{mailto:curien@pps.univ-paris-diderot.fr}{\texttt{curien@pps.univ-paris-diderot.fr}}}
\author[S.~Mimram]{Samuel Mimram\rsuper b}
\address{{\lsuper b}École Polytechnique}
\email{\href{mailto:samuel.mimram@lix.polytechnique.fr}{\texttt{samuel.mimram@lix.polytechnique.fr}}}

\keywords{presentation of a category, quotient category, localization, residuation}
\subjclass{F.4.2 Grammars and Other Rewriting Systems}

\begin{abstract}
  Presentations of categories are a well-known algebraic tool to provide
  descriptions of categories by  means of generators, for objects and
  morphisms, and relations on morphisms. We generalize here this notion, in
  order to consider situations where the objects are considered modulo an
  equivalence relation,
  which is described
  by equational generators. When those form a convergent (abstract) rewriting
  system on objects, there are three very natural constructions that can be used
  to define the category which is described by the presentation: one consists in
  turning equational generators into identities (\ie considering a quotient
  category), one consists in formally adding inverses to equational generators
  (\ie localizing the category), and one consists in restricting to objects
  which are normal forms. We show that, under suitable coherence conditions on
  the presentation, the three constructions coincide, thus generalizing
  celebrated results on presentations of groups, and we extend those conditions to
  presentations of monoidal categories.
\end{abstract}

\maketitle

\tableofcontents

\section{Introduction}
Motivated by the generalization of rewriting techniques to the setting of
higher-dimensional categories, we introduce  a notion of presentation of a
monoidal category modulo a rewriting system, in order to be able to present a
monoidal category as generated by objects and morphisms, quotiented by relations
on both morphisms \emph{and objects}. This work can somehow be seen as an
extension of traditional techniques of rewriting modulo a
theory~\cite{baader1999term}: the quotient on objects is described by a
rewriting system, whose rules are called here \emph{equational}, and we want to
consider objects up to those rules. In order to handle this situation, there are
mainly two possible approaches: either implicit (work on the equivalence classes
modulo equational rules) or explicit (consider equational rules as invertible
operations).
We provide conditions on both the original rewriting system and the equational
one, so that that the two approaches coincide. Namely, we show that they imply
some form of \emph{coherence} for the equational rewriting system, \ie that
there is essentially one way of transforming an object into another using the
equational rules, thus implying that the quotient and the localization are
equivalent. An important methodological point has to be stressed  here: our
aim is not to provide the most general conditions for this to hold, but
sufficient conditions, which are applicable to a wide range of examples and can
efficiently be checked on the presentation of a (monoidal or 2-) category.

Let us further expose our motivations, which come from higher-dimensional
rewriting theory~\cite{t3rt}. A \emph{string rewriting system}~$P$ consists in
an alphabet~$P_1$ and a set~$P_2\subseteq P_1^*\times P_1^*$ of rules. Such a
system induces a monoid~$\pcat{P}={P_1^*}/{\overset*\ToT}$ obtained by
quotienting the free monoid~$P_1^*$ on~$P_1$ by the smallest congruence
$\overset*\ToT$ containing the rules in~$P_2$; when the rewriting system is
convergent, \ie both confluent and terminating, normal forms provide canonical
representatives of equivalence classes. Given a monoid~$M$, we say that $P$ is a
\emph{presentation} of~$M$ when~$M$ is isomorphic to~$\pcat{P}$: in this case,
the elements of~$P_1$ can be seen as \emph{generators} for~$M$, and the elements
of~$P_2$ as a complete set of \emph{relations} for~$M$. For instance, the
additive monoid $\N\times\N$ admits the presentation~$P$
with~$P_1=\set{\gen a,\gen b}$ and $P_2=\set{\gen{ba}\To\gen{ab}}$: namely, the
string rewriting system is convergent, and its normal forms are words of the
form $\gen{a}^p\gen{b}^q$, with $(p,q)\in\N\times\N$, from which it is easy to
build the required isomorphism.

The notion of presentation is easy to generalize from monoids to categories (a
monoid being the particular case of a category with one object): a presentation
of category consists in generators for objects and morphisms, together with
rules relating morphisms in the free category generated by the
generators. Starting from this observation, the notion of presentation was
generalized in order to present $n$-categories
(computads~\cite{street1976limits, power1991n} or
polygraphs~\cite{burroni1993higher}), thus providing us with a notion of
\emph{higher-dimensional rewriting system}. However for dimensions $n\geq 2$, this
notion of presentation has important limitations. In particular, not
every $n$-category admits a presentation. We shall illustrate this on a simple
example of a monoidal category (which is the particular case of a 2-category with
only one 0-cell).

Consider the \emph{simplicial category}~$\Delta$ whose objects are natural
numbers $p\in\N$ and morphisms $f:p\to q$ are monotone functions $f:[p]\to[q]$
where $[p]$ is the set $\set{0,\ldots,p-1}$ considered as a finite poset with
$0<\ldots<p-1$. This category is monoidal, with tensor product being given by
addition on objects ($p\otimes q=p+q$) and by ``juxtaposition'' on morphisms,
and it is well known that it admits the following presentation as a monoidal
category~\cite{mac1998categories, lafont2003towards}: its objects are generated
by one object~$\gen a$, its morphisms are generated by
$\gen m:\gen a\otimes\gen a\to\gen a$ and $\gen e:0\to\gen a$, and the relations
are
\begin{align*}
  \gen\alpha:\gen m\circ(\gen m\otimes\id_{\gen a})&=\gen m\circ(\id_{\gen a}\otimes\gen m)
  &
  \gen\lambda:\gen m\circ(\gen e\otimes\id_{\gen a})&=\id_{\gen a}
  &
  \gen\rho:\gen m\circ(\id_{\gen a}\otimes\gen e)&=\id_{\gen a}
\end{align*}
This means that every morphism of $\Delta$ can be obtained as a composite of
$\gen e$ and $\gen m$, and that two such formal composites represent the same
morphism precisely when they can be related by the congruence generated by the
above relations. As we can see on this example, a presentation~$P$ of a monoidal
category consists in generators for objects (here $P_1=\set{\gen a}$),
generators for morphisms ($P_2=\set{\gen e,\gen m}$) together with their source
and target, and relations between composites of morphisms
($P_3=\set{\gen\alpha,\gen\lambda,\gen\rho}$) together with their source and
target. Notice that such a presentation \emph{does not allow for relations
  between objects}, and thus is restricted to presenting monoidal categories
whose underlying monoid of objects is free.

This limitation can be better understood by trying to present the monoidal
category~$\Delta\times\Delta$ with tensor product extending componentwise the
one of~$\Delta$: the underlying monoid of objects is $\N\times\N$, which is not
free. If we try to construct a presentation for this monoidal category, 
we are led to
consider a presentation containing ``two copies'' of the previous presentation:
we consider a presentation~$P$ with $P_1=\set{\gen a,\gen b}$ as object
generators (where~$\gen a$ and~$\gen b$ respectively correspond to the objects
$(1,0)$ and $(0,1)$), with
$P_2=\set{\gen m_{\gen a},\gen e_{\gen a},\gen m_{\gen b},\gen e_{\gen b}}$ as
morphism generators with
\[
\gen{m_a}:\gen a\otimes\gen a\to\gen a
\qquad\qquad
\gen{e_a}:0\to\gen a
\qquad\qquad
\gen{m_b}:\gen b\otimes\gen b\to\gen b
\qquad\qquad
\gen{e_b}:0\to\gen b
\]
and with
$P_3=\set{\gen{\alpha_a},\gen{\lambda_a},\gen{\rho_a},\gen{\alpha_b},\gen{\lambda_b},\gen{\rho_b}}$
as relations. If we stop here adding relations, the presented category has
$\set{\gen a,\gen b}^*$ as underlying monoid of objects, \ie the free product of
$\N$ with itself, which is not right: recalling the above presentation for
$\N\times\N$, we should moreover add a relation~$\gen g:\gen{ba}=\gen{ab}$.
However, such a relation between objects is not allowed in the usual notion of
presentation (where only relations between morphisms are  considered). In order
to provide a meaning to it, three constructions are available:
\begin{itemize}
\item restrict~$P$ to some canonical representatives of objects modulo the
  equivalence generated by~$\gen g$ (typically the words of the form
  $\gen a^p\gen b^q$ with $(p,q)\in\N\times\N$),
\item quotient by~$\gen g$ the monoidal category~$\pcat{P}$ presented by~$P$, or
\item formally invert the morphism~$\gen g$ in~$\pcat{P}$.
\end{itemize}
We show that under reasonable assumptions on the presentation, all three
constructions coincide, thus providing a notion of \emph{coherent presentation
  modulo}. In the article, we begin by studying the case of presentations modulo
of categories and then generalize it to monoidal categories.

This article is based on the conference article~\cite{clerc2015presenting},
extending it on two major points. First, the assumptions on the opposite
presentations turned out to be unnecessary (see the new proof of
Theorem~\ref{thm:quotient-loc}), making our conditions more natural, simpler to
check and applicable to a wider range of presentations. Second, the extension to
the case of presentations of monoidal categories is  new.



%
We begin by recalling the notion of presentation of a category
(Section~\ref{sec:pres-cat}), then we extend it to work modulo a relation on
objects (Section~\ref{sec:pres-mod}), and consider the quotient and localization
\wrt to the relation (Section~\ref{sec:quotient-localization}). In order to
compare those constructions, we consider equational rewriting systems equipped
with a notion of residuation (Section~\ref{sec:residuation}) and satisfying a
particular ``cylinder'' property (Section~\ref{sec:cylinder}). We then show
that, under suitable coherence conditions, the category of normal forms is
isomorphic to the quotient (Section~\ref{sec:cat-nf}) and equivalent with the
localization (Section~\ref{sec:equiv-loc}). The notion of presentation modulo is
then generalized to monoidal categories (Section~\ref{sec:mon-pres}), as well as
the residuation techniques (Section~\ref{sec:mon-res}) and cylinder properties
(Section~\ref{sec:mon-cyl}), which finally allows us to generalize our coherence
theorem to monoidal categories (Section~\ref{sec:mon-coh}).

\bigskip
\noindent
This work was partially supported by \href{http://cathre.math.cnrs.fr/}{CATHRE
  French ANR project \hbox{ANR-13-BS02-0005-02}.} We would like to thank
Florence Clerc for her contributions to the preliminary version of this
work~\cite{clerc2015presenting}, as well as the anonymous referees for their
insightful remarks leading to improvements of this article.

\section{Presentations of categories modulo a rewriting system}
\subsection{Presentations of categories}
\label{sec:pres-cat}
Recall that a \emph{graph} $(P_0,s_0,t_0,P_1)$ consists of two sets $P_0$ and
$P_1$, of \emph{vertices} and \emph{edges} respectively, together with two
functions $s_0,t_0:P_1\to P_0$ associating to an edge its \emph{source} and
\emph{target} respectively:
\[
\vxym{
  P_0&\ar@<-.5ex>[l]_-{s_0}\ar@<.5ex>[l]^-{t_0}P_1
}
\]
Such a graph generates a category with $P_0$ as objects and the set~$P_1^*$ of
(directed) paths as morphisms. If we denote by $i_1:P_1\to P_1^*$ the coercion
of edges to paths of length~1, and $s_0^*,t_0^*:P_1^*\to P_0$ the functions
associating to a path its source and target respectively, we thus obtain a
diagram as on the left below:
\begin{equation}
  \label{eq:1-presentation}
  \vxym{
    &\ar@<-.5ex>[dl]_<<<<<{s_0}\ar@<.5ex>[dl]^<<<<<{t_0}P_1\ar[d]^{i_1}\\
    P_0&\ar@<-.5ex>[l]_{s_0^*}\ar@<.5ex>[l]^{t_0^*}P_1^*
  }
  \qquad\qquad\qquad\qquad\qquad
  \vxym{
    &\ar@<-.5ex>[dl]_<<<<<{s_0}\ar@<.5ex>[dl]^<<<<<{t_0}P_1\ar[d]^{i_1}&\ar@<-.5ex>[dl]_<<<<<{s_1}\ar@<.5ex>[dl]^<<<<<{t_1}P_2\\
    P_0&\ar@<-.5ex>[l]_{s_0^*}\ar@<.5ex>[l]^{t_0^*}P_1^*
  }
\end{equation}
in~$\Set$ which is commuting, in the sense that $s_0^*\circ i_1=s_0$ and
$t_0^*\circ i_1=t_0$.

\begin{defi}
  \label{def:presentation}
  A \emph{presentation}
  \[
  P\qeq(P_0,s_0,t_0,P_1,s_1,t_1,P_2)
  \]
  as pictured on the right of~\eqref{eq:1-presentation}, consists in a graph
  $(P_0,s_0,t_0,P_1)$ as above, the elements of $P_0$ (\resp $P_1$) being called
  \emph{object} (\resp \emph{morphism}) \emph{generators}, together with a set
  $P_2$ of \emph{relations} (or \emph{2-generators}) and two functions
  $s_1,t_1:P_2\to P_1^*$ satisfying the globular identities
  \[
  s_0^*\circ s_1=s_0^*\circ t_1
  \qquad\qquad\qquad
  t_0^*\circ s_1=t_0^*\circ t_1
  \]
  The category $\pcat{P}$ \emph{presented} by~$P$ is the category obtained from
  the category generated by the graph $(P_0,s_0,t_0,P_1)$ by quotienting
  morphisms by the smallest congruence \wrt composition identifying any two
  morphisms $f$ and $g$ such that there exists $\alpha\in P_2$ satisfying
  $s_1(\alpha)=f$ and $t_1(\alpha)=g$.
\end{defi}

\noindent
In the following, we often simply write $(P_0,P_1,P_2)$ for a presentation as
above, leaving the source and target maps implicit. We write $f:x\to y$ for an
edge $f\in P_1$ with $s_0(f)=x$ and $t_0(f)=y$, and $\alpha:f\To g$ for a
relation with $f$ as source and $g$ as target; the globular identities impose
that $f$ and $g$ have the same source (\resp target). We sometimes write
$\alpha:f\ToT g$ to indicate that $\alpha:f\To g$ or $\alpha:g\To f$ is an
element of~$P_2$, and we denote by $\overset *\ToT$ the smallest congruence such
that $f\overset*\ToT g$ whenever there exists $\alpha:f\To g$ in~$P_2$.

\begin{exa}
  The monoid $\N/2\N$ (seen as a category with only one object) admits the
  presentation~$P$ with
  \[
  P_0=\set{\gen x}
  \qquad\qquad
  P_1=\set{\gen f:\gen x\to\gen x}
  \qquad\qquad
  P_2=\set{\gen\varepsilon:\gen f\circ\gen f\To\id_{\gen x}}
  \]
\end{exa}

\noindent
Instead of considering $\overset*\ToT$ simply as a relation, it is often useful
to consider ``witnesses'' for this relation. From a categorical perspective,
this can be formalized as follows.
A presentation~$P$ generates a 2-category with invertible 2-cells (also called a
(2,1)-category), whose underlying category is the free category generated by the
underlying graph of~$P$, and whose set of 2-cells is generated by~$P_2$ and
denoted~$P_2^*$. The category presented by $P$ can be obtained from this
2-category by identifying 1-cells where there is a 2-cell in
between~\cite{burroni1993higher, lafont2003towards}. We write
$\alpha:f\overset*\ToT g$ for such a 2-cell, which provides an explicit witness
of the fact that $f$ and $g$ are identified in the presented category.

It is easily seen that any category admits a presentation:

\begin{lem}
  \label{lemma:standard-presentation}
  Any category~$\C$ admits a presentation~$P^\C$, called its \emph{standard
    presentation}, with~$P_0^\C$ being the set of objects of~$\C$, $P_1^\C$
  being the set of morphisms of~$\C$ and $P_2^\C$ being the set of pairs
  $(f_2\circ f_1,g)\in P_1^{\C*}\times P_1^{\C*}$ with $f_1,f_2,g\in P_1$ such
  that $s_0(f_1)=s_0(g)$, $t_0(f_2)=t_0(g)$ and $f_2\circ f_1=g$ in~$\C$ (with
  projections as source and target functions).
\end{lem}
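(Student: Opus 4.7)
The plan is to produce an explicit isomorphism $\pcat{P^\C} \cong \C$. First, I would build a functor $F\colon \pcat{P^\C} \to \C$ starting from the evident ``evaluation'' functor $\Phi$ from the free category on the graph $(P_0^\C, P_1^\C)$ to $\C$: $\Phi$ is the identity on objects and sends each path $[f_n, \dots, f_1]$ to the composite $f_n \circ \dots \circ f_1$ in $\C$, with the empty path at $x$ going to $\id_x$. Every relation $(f_2 \circ f_1, g) \in P_2^\C$ satisfies $\Phi(f_2 \circ f_1) = g = \Phi(g)$ by definition, so $\Phi$ is constant on $\overset*\ToT$-classes and factors through to yield the required $F$.

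For the reverse direction, I would define $G\colon \C \to \pcat{P^\C}$ to be the identity on objects and to send each morphism $f \in P_1^\C$ to the class $[f]$ of the corresponding length-one path. Functoriality on composition is exactly what the 2-generators encode: for composable $f, g \in \C$ with $h = g \circ f$, the relation $(g \circ f, h) \in P_2^\C$ gives $G(g) \cdot G(f) \overset*\ToT G(h)$. Granted that $G$ also preserves identities (see below), one then has $F \circ G = \id_\C$ on the nose, while $G \circ F = \id_{\pcat{P^\C}}$ follows by an easy induction on path length, using the same relations to collapse any length-$n$ path to the length-one path representing the same composite in $\C$.

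The step I expect to be the main obstacle is the identity-preservation clause for $G$, \ie showing that the length-one path $[\id_x]$ is identified with the empty path at $x$ in $\pcat{P^\C}$. This has to be extracted from the relations generated by the identity laws of $\C$ together with the congruence closure, since $P_2^\C$ as described contains only binary composition witnesses. Everything else is a mechanical verification once this point is in hand.
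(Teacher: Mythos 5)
The paper offers no proof of this lemma (it is introduced with ``it is easily seen''), so there is nothing to compare against; your strategy of building an evaluation functor $F$ and a quasi-inverse $G$ is the standard and correct one. However, the step you defer as ``the main obstacle'' is not merely an obstacle: with $P_2^\C$ exactly as in the statement, it is false that the length-one path on $\id_x$ is identified with the empty path at $x$ in $\pcat{P^\C}$, so it cannot be ``extracted from the relations''. Every relation in $P_2^\C$ has a path of length $2$ as its source and a path of length $1$ as its target; closing under composition with arbitrary paths and then under equivalence never relates a nonempty path to an empty one, since nonemptiness is preserved by every generating identification in both directions. Hence the class of the empty path at $x$ (the identity of $\pcat{P^\C}$) is a singleton, distinct from the class of $[\id_x]$. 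Concretely, for $\C$ the terminal category the presented category is the two-element monoid $\{1,e\}$ with $e^2=e$, not $\C$. This also breaks both composites: your $G$ fails to preserve identities (and redefining $G(\id_x)$ to be the empty path breaks preservation of composition whenever $g\circ f=\id_x$ with $f,g$ nontrivial), and the base case $n=0$ of your induction for $G\circ F=\id$ needs exactly the missing identification.

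The repair is standard and small: either add to $P_2^\C$ the relations identifying the length-one path on each $\id_x$ with the empty path at $x$, or take $P_1^\C$ to be the non-identity morphisms of $\C$ only. With either modification your argument goes through verbatim: $F$ is well defined because all relations are evaluated to equalities in $\C$, $G$ is a genuine functor, $F\circ G=\id_\C$ on the nose, and $G\circ F=\id$ follows by your induction on path length (whose base case is now covered). So the right conclusion is that the lemma as literally stated needs this minor amendment, and your proof is correct for the amended statement; you should not present the identity clause as something that follows from the given data.
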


\noindent
In general, a category actually admits many presentations. It can be shown that
two finite presentations present the same category if and only if they are
related by a sequence of Tietze transformations: those transformations generate
all the operations one can do on a presentation without modifying the presented
category~\cite{tietze1908topologischen,guiraud2014polygraphs}. For instance,
Knuth-Bendix completions are a particular case of
those~\cite{guiraud2013homotopical}.

\begin{defi}
  \label{def:tietze}
  Given a presentation~$P$, a \emph{Tietze transformation} consists in
  \begin{itemize}
  \item adding (\resp removing) a generator~$f\in P_1$ and a 2-generator
    $\alpha:f\To g\in P_2$ with $g\in(P_1\setminus\set{f})^*$,
  \item adding (\resp removing) a 2-generator $\alpha:f\To g\in P_2$ such that
    $f$ and $g$ are equivalent \wrt the congruence generated by the relations in
    $P_2\setminus\set{\alpha}$.
  \end{itemize}
\end{defi}



\subsection{Presentations modulo}
\label{sec:pres-mod}

In a presentation~$P$ of a category, the elements of~$P_2$ generate relations,
and the presented category is obtained by quotienting the morphisms of the free
category on the underlying graph by all these relations. We now extend this
notion in order to also allow the quotienting of objects in the process of
constructing the presented category.

\begin{defi}
  A \emph{presentation modulo} $(P,\tilde P_1)$ consists of a presentation
  $P=(P_0,P_1,P_2)$ together with a set $\tilde P_1\subseteq P_1$, whose
  elements are called \emph{equational generators}.
\end{defi}

%

\noindent
The morphisms of $\fcat{P}$ generated by the equational generators are called
\emph{equational morphisms}. Intuitively, the category presented by a
presentation modulo should be the ``quotient category'' $\pcat{P}/\tilde P_1$,
as explained in the next section, where objects equivalent under~$\tilde P_1$
(\ie related by equational morphisms) are identified.
We believe that the reason why presentations modulo of categories were not
introduced before is that they are actually unnecessary, in the sense that we
can always convert a presentation modulo into a regular presentation, see
Lemma~\ref{lemma:presentation-without-modulo} below. However, the techniques
developed here extend in the case of monoidal categories where it is not the
case anymore, see Section~\ref{sec:monoidal}, and moreover our framework already
enables one to obtain interesting results on presented categories (such as the
equivalence between quotient and localization, see~\cite{clerc2015presenting}
for details).  In this article, we will focus more on the case of presentations of monoidal
categories.

\begin{defi}
  \label{def:pres-quotient}
  \label{def:pres-demodulo}
  Given a presentation modulo $(P,\tilde P_1)$, we define the \emph{quotient
    presentation}~$P/\tilde P_1$ as the (non-modulo) presentation
  $(P'_0, P'_1, P'_2)$ where
  \begin{itemize}
  \item $P'_0 = P_0 /{\cong_1}$ where $\cong_1$ is the smallest equivalence
    relation on~$P_0$ such that $x\cong_1 y$ whenever there exists a generator
    $f:x\to y$ in $\tilde P_1$, and we denote by $\qclass x$ the equivalence
    class of $x \in P_0$,
  \item the elements of $P'_1$ are $f:\qclass x \to \qclass y$ for $f : x \to y$
    in~$P_1$,
  \item the elements of $P'_2$ are of the form $\alpha:f\To g$ for
    $\alpha:f \To g$ in $P_2$, or $\alpha_f:f\To\id_{\qclass x}$ for $f:x\To y$
    in $\tilde P_1$.
  \end{itemize}
\end{defi}

\noindent
We will sometimes consider presentations modulo with ``arrows reversed'':

\begin{defi}
  Given a presentation modulo $(P,\tilde P_1)$, the \emph{opposite
    presentation modulo} $(P^\op,\tilde P_1^\op)$ is given by
  $P^\op = (P_0, P_1^\op, P_2^\op)$, where
  $P_1^\op = \setof{ f^\op : y \to x}{f : x \to y \in P_1}$ and where
  $P_2^\op = \setof{ \alpha ^\op : f^\op \To g ^\op}{\alpha : f \To
    g}$ with $f^\op = f_1^\op \circ ... \circ f_k^\op$ for
  $f = f_k \circ\ldots\circ f_1$, and where $\tilde P_1^\op$ is the
  subset of $P_1^\op$ corresponding to $\tilde P_1$.
\end{defi}

\subsection{Quotient and localization of a presentation modulo}
\label{sec:quotient-localization}

As explained above, we want to quotient our presentations modulo by equational
morphisms, in order for the equational morphisms to induce equalities in the
presented category. Given a category~$\C$ and a set $\Sigma$ of morphisms, there
are essentially two canonical ways to ``get rid'' of the morphisms of $\Sigma$
in~$\C$: we can either force them to be identities, or to be isomorphisms,
giving rise to the following two notions of quotient and localization of a
category. These are standard constructions in category theory and we recall them
below.

\begin{defi}
  \label{def:quotient}
  The \emph{quotient} of a category~$\C$ by a set $\Sigma$ of morphisms of~$\C$
  is a category~$\C/\Sigma$ together with a \emph{quotient functor}
  $Q:\C\to\C/\Sigma$ sending the elements of $\Sigma$ to identities, such that
  for every functor $F:\C\to\D$ sending the elements of $\Sigma$ to identities,
  there exists a unique functor~$\tilde F$ such that $\tilde F\circ Q=F$.
  \[
  \vxym{
    \C\ar[d]_Q\ar[r]^F&\D\\
    \C/\Sigma\ar@{.>}[ur]_{\tilde F}&
  }
  \]
\end{defi}

\noindent
Such a quotient category always exists for general
reasons~\cite{bednarczyk1999generalized} and is unique up to isomorphism. Given
a presentation modulo $(P,\tilde P_1)$, the category presented by the associated
(non-modulo) presentation $P/\tilde P_1$ described in
Definition~\ref{def:pres-demodulo}, corresponds to considering the category
presented by the (non-modulo) presentation~$P$ and quotient it by $\tilde P_1$.
\begin{lem}
  \label{lemma:presentation-without-modulo}
  \label{lem:presentation-without-modulo}
  For every presentation modulo $(P,\tilde P_1)$, the categories
  $\pcat{P}/\tilde{P_1}$ and $\pcat{P/\tilde P_1}$ are isomorphic.
\end{lem}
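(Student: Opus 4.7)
The plan is to produce mutually inverse functors between the two categories by exploiting the universal property of the quotient (Definition~\ref{def:quotient}) on one side and the ``presenting'' universal property on the other. Write $Q : \pcat{P} \to \pcat{P}/\tilde P_1$ for the quotient functor and $[x]$ for the $\cong_1$-class of $x \in P_0$.

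For the forward direction, I first define a functor $F_0 : \pcat{P} \to \pcat{P/\tilde P_1}$ by $x \mapsto [x]$ on objects and by sending each generator $f : x \to y$ of $P_1$ to the corresponding generator $f : [x] \to [y]$ of $P/\tilde P_1$; this extends to a functor because each relation $\alpha : f \To g$ of $P_2$ is still a relation of $P/\tilde P_1$. For any equational generator $f \in \tilde P_1$, the added 2-generator $\alpha_f : f \To \id_{[x]}$ ensures that $F_0(f)$ is an identity in $\pcat{P/\tilde P_1}$, so the universal property of the quotient produces a unique functor $F : \pcat{P}/\tilde P_1 \to \pcat{P/\tilde P_1}$ with $F \circ Q = F_0$.

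For the reverse direction, observe that for every $f : x \to y$ in $\tilde P_1$ the morphism $Q(f)$ is an identity in $\pcat{P}/\tilde P_1$, which forces $Q(x) = Q(y)$; iterating, $Q$ is constant on each $\cong_1$-equivalence class of $P_0$. We may therefore define a functor $G_0$ from the free category on the underlying graph of $P/\tilde P_1$ to $\pcat{P}/\tilde P_1$ by $[x] \mapsto Q(x)$ and $f \mapsto Q(f)$. It respects every relation of $P/\tilde P_1$: each original $\alpha : f \To g$ in $P_2$ yields $Q(f) = Q(g)$ in $\pcat{P}/\tilde P_1$ because $f = g$ already holds in $\pcat{P}$, and each new $\alpha_f : f \To \id_{[x]}$ yields $Q(f) = \id_{Q(x)}$ by the defining property of $Q$. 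Hence $G_0$ descends to a functor $G : \pcat{P/\tilde P_1} \to \pcat{P}/\tilde P_1$.

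To conclude, I check on generators that $F$ and $G$ are mutually inverse. Tracing definitions, $G \circ F \circ Q$ agrees with $Q$ on every object and morphism generator of $\pcat{P}$, so the uniqueness clause of the universal property of $Q$ forces $G \circ F = \id$; symmetrically, $F \circ G$ acts as the identity on the generators of $P/\tilde P_1$ and is therefore the identity functor. The argument is essentially pure bookkeeping: the only real obstacle is to keep careful track of the ``same'' generator as it appears in $\pcat{P}$, in $\pcat{P}/\tilde P_1$, in the free category on the graph of $P/\tilde P_1$, and in $\pcat{P/\tilde P_1}$, and to use the two universal properties in the right directions.
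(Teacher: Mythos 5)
Your proof is correct, and it is organized differently from the paper's. The paper does not construct a functor $\pcat{P/\tilde P_1}\to\pcat{P}/\tilde P_1$ at all: it shows that $\pcat{P/\tilde P_1}$, equipped with the functor you call $F_0$, itself satisfies the universal property of the quotient of $\pcat{P}$ by $\tilde P_1$ — i.e.\ it verifies, for an \emph{arbitrary} test functor $F:\pcat{P}\to\C$ killing equational morphisms, that $F$ factors uniquely through $F_0$ — and then concludes by uniqueness of quotients. You instead exhibit an explicit inverse $G$ and check $G\circ F=\id$ and $F\circ G=\id$ on generators. The substance is the same in both cases: your observation that $Q$ is constant on $\cong_1$-classes (via the zig-zag of equational generators each sent to an identity) is exactly the well-definedness argument the paper runs for its $\tilde F$ on object classes, just instantiated at the single functor $Q$ rather than at a generic $F$. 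What the paper's route buys is slightly more: having verified the full universal property, it gets the factorization $\tilde F$ for every $F$ for free, which is the form in which the statement is reused later. What your route buys is that the isomorphism is written down concretely on both sides, and the only appeal to the universal property of $\pcat{P}/\tilde P_1$ is the cheap uniqueness clause; the rest is the universal property of a \emph{presented} category (functors out of it are determined by, and freely specified on, generators respecting relations), which is arguably the more elementary tool here. Both arguments are complete; the bookkeeping caveat you flag at the end (tracking the ``same'' generator through four different categories) is indeed the only place where care is needed, and you handle it correctly.
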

\begin{proof}
  It is enough to show that $\pcat{P/\tilde P_1}$ is a quotient of $\pcat{P}$ by
  $\tilde P_1$. We define a quotient functor $Q:\pcat{P}\to\pcat{P/\tilde P_1}$
  on generators by $Q(x)=\qclass x$ for $x\in P_0$ and $Q(f)=f$ for~$f\in P_1$:
  this extends to a functor since for every 2-generator $\alpha\in P_2$ there is
  a corresponding 2\nbd{}generator in $P/\tilde P_1$. For every generator
  $f\in\tilde P_1$, we immediately have $Q(f)=\id$.
  Suppose given a functor $F:\pcat{P}\to\C$ sending equational morphisms to
  identities. We define a functor $\tilde F:\pcat{P/\tilde P_1}\to\C$ sending an
  object $\qclass x$ of $\pcat{P/\tilde P_1}$ to $\tilde F\qclass x=F x$. This
  does not depend on the choice of the representative of the class:
  given two representatives $y,y'\in\qclass x$, there exists a zig-zag of
  equational morphisms from~$y$ to~$y'$, all of which are sent by $F$ to
  identities, \ie $Fy=Fy'$.
  Given a morphism $f=f_k\circ\ldots\circ f_1$ in $\pcat{P/\tilde P_1}$ with
  $f_i\in P_1$, we define $\tilde Ff=Ff_k\circ\ldots\circ Ff_1$. For similar
  reasons, this is also well-defined.
  %
  %
  The functor~$\tilde F$ satisfies $F=\tilde F\circ Q$, and it is the only such
  functor:
  given an object $[x]$ of $\pcat{P/\tilde P_1}$, one has necessarily
  $\tilde F[x]=\tilde F\circ Q(x)=F x$ and similarly, given a generating
  morphism $f$ in $\pcat{P/\tilde P_1}$, one has necessarily
  $\tilde Ff=\tilde F\circ Q(f)=Ff$.
\end{proof}


A second, slightly different construction, consists in turning elements
of~$\Sigma$ into isomorphisms (instead of identities):

\begin{defi}
  \label{def:localization}
  The \emph{localization} of a category~$\C$ by a set $\Sigma$ of morphisms is a
  category~$\loc\C\Sigma$ together with a \emph{localization functor}
  $L:\C\to\loc\C\Sigma$ sending the elements of~$\Sigma$ to isomorphisms, such
  that for every functor $F:\C\to\D$ sending the elements of $\Sigma$ to
  isomorphisms, there exists a unique functor~$\tilde F$ such that
  $\tilde F\circ L=F$.
  \[
  \vxym{
    \C\ar[d]_L\ar[r]^F&\D\\
    \loc\C\Sigma\ar@{.>}[ur]_{\tilde F}&
  }
  \]
\end{defi}



\begin{rem}
  Note that there is a canonical functor
  \[
    \tilde Q
    \qcolon
    \loc\C\Sigma
    \qto
    \C/\Sigma
  \]
  between the localization and the quotient, induced by the universal property
  of the localization applied to the quotient functor.
\end{rem}

\noindent
In the case where the category is presented, its localization admits the
following presentation.

\begin{lem}
  \label{lem:presentation_loc_inverse}
  \label{lem:pres-loc}
  Given a presentation $P=(P_0,P_1,P_2)$ and a subset $\Sigma$ of $P_1$, the
  category presented by $P'=(P_0,P'_1,P'_2)$ where
  \[
  P'_1\qeq P_1 \uplus \setof{ \ol f : y \to x}{f : x \to y \in \Sigma}
  \]
  and
  \[
  P'_2\qeq P_2\uplus\setof{\ol f\circ f\To\id, f\circ\ol f\To\id}{f\in\Sigma}
  \]
  is a localization of the category $\pcat P$ by $\Sigma$.
\end{lem}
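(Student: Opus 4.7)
The plan is to construct the expected localization functor $L:\pcat{P}\to\pcat{P'}$ by inclusion of generators, then verify the universal property by defining the comparison functor $\tilde F$ by its action on generators of $P'$. Since $P_1 \subseteq P'_1$ and $P_2 \subseteq P'_2$, the evident assignment sending $x\in P_0$ to itself and $f\in P_1$ to itself extends to a functor $L:\pcat{P}\to\pcat{P'}$, because every relation in $P_2$ is still a relation in $P'_2$. The two new families of 2-generators $\bar f\circ f\To\id$ and $f\circ\bar f\To\id$ guarantee that for each $f\in\Sigma$, the formal arrow $\bar f$ is a two-sided inverse of $f$ in $\pcat{P'}$, so $L$ sends $\Sigma$ into isomorphisms.

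For the universal property, given $F:\pcat{P}\to\D$ sending $\Sigma$ to isomorphisms, I would define $\tilde F:\pcat{P'}\to\D$ on generators by $\tilde F(x)=F(x)$ for $x\in P_0$, $\tilde F(f)=F(f)$ for $f\in P_1$, and $\tilde F(\bar f)=F(f)^{-1}$ for $f\in\Sigma$ (which makes sense precisely because $F(f)$ is assumed invertible). To see that this extends to a functor I must check that each 2-generator of $P'_2$ becomes an actual equality in $\D$: for the relations coming from $P_2$ this is automatic since $F$ is already a functor on $\pcat{P}$; for the new relations it reduces to the equalities $F(f)^{-1}\circ F(f)=\id$ and $F(f)\circ F(f)^{-1}=\id$, which hold by definition of inverse. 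By construction $\tilde F\circ L=F$.

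Uniqueness is then forced: any $G:\pcat{P'}\to\D$ with $G\circ L=F$ must agree with $F$ on objects and on $P_1$, and the relation $\bar f\circ f=\id$ in $\pcat{P'}$ together with invertibility of $G(f)=F(f)$ pins down $G(\bar f)=F(f)^{-1}=\tilde F(\bar f)$.

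There is no real obstacle; the only slightly non-trivial step is the well-definedness of $\tilde F$ on the congruence generated by $P'_2$, but by design the added 2-generators are matched exactly by the inverse equations holding in $\D$, so the verification is immediate. The same kind of argument could also be phrased abstractly by invoking the universal property of $\pcat{P'}$ as the category freely presented by $(P_0,P'_1,P'_2)$, using $F$ and the chosen inverses of $F(f)$ to define the required morphism out of it.
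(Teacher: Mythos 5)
Your proposal is correct and follows essentially the same route as the paper's proof: define $L$ as the inclusion of generators, define $\tilde F$ on generators by $\tilde F(\ol f)=F(f)^{-1}$, check well-definedness against the relations in $P'_2$, and deduce uniqueness from the forced values on generators. Your uniqueness argument is in fact spelled out slightly more explicitly than in the paper, which simply asserts it.
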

\begin{proof}
  The localization functor~$L$ is defined by $Lx=x$ for~$x\in P_0$, and
  $L f = f$ for $f\in P_1^*$. This functor is well-defined since for any
  2-generator $\alpha:f \To g$ in $P_2$, we have that $L f = f$ and $L g = g$,
  and there is a 2-generator $f \To g$ in $P'_2$ by definition. Besides, for any
  $f$ in $\Sigma$, $L f = f$ is an isomorphism since $\ol f$ is an inverse
  for~$f$.
  %
  Suppose given $F:\pcat{P}\to\C$ sending the elements of $\Sigma$ to
  isomorphisms. We define a functor $\tilde F:\pcat{P'} \to \C$ on the
  generators by $\tilde F x = F x$ for $x\in P_0$, $\tilde F f = F f$ for
  $f\in P_1$ and $\tilde F \ol f = (F f)^{-1}$. This functor is well-defined,
  since for any 2-generator $\alpha : f \To g$ in $P_2 \subset P'_2$, we have
  $\tilde F f = F f = F g = \tilde F g$ and
  $\tilde F (f \circ \ol f) = F f \circ F \ol f = F f \circ (F f)^{-1} = \id$
  and similarly $\tilde F (\ol f \circ f) = \id$. This functor satisfies
  $\tilde F\circ L=F$ and is the unique such functor.
\end{proof}

\begin{exa}
  \label{ex:loc-vs-quot}
  Let us consider the category
  \[
  \C\qeq\vxym{x\ar@<.5ex>[r]^-f\ar@<-.5ex>[r]_-g&y}
  \]
  with two objects and two non-trivial morphisms. Its localization by
  $\Sigma=\set{f,g}$ is equivalent to the category with one object and $\Z$ as
  set of morphisms (with addition as composition), whereas its quotient
  by~$\Sigma$ is the category with one object and only the identity as
  morphism. Notice that they are not equivalent.
\end{exa}

\noindent
The description of the localization of a category provided by the universal
property is often difficult to work with. When the set~$\Sigma$ has nice
properties, the localization admits a much more tractable
description~\cite{gabriel1967calculus,borceux1994handbook}.

\begin{defi}
  \label{def:left-calculus-of-fractions}
  A set $\Sigma$ of morphisms of a category~$\C$ is a \emph{left calculus of
    fractions} when
  \begin{enumerate}
  \item the set $\Sigma$ is closed under composition : for $f$ and $g$
    composable morphisms in $\Sigma$, $g \circ f$ is in $\Sigma$.
  \item $\Sigma$ contains the identities $\id _x$ for $x$ in $P_0$.
  \item for every pair of coinitial morphisms $u:x\to y$ in $\Sigma$ and
    $f:x\to z$ in $\C$, there exists a pair of cofinal morphisms $v:z\to t$
    in~$\Sigma$ and $g:y\to t$ in~$\C$ such that $v\circ f=g\circ u$.
    \[
    \svxym{
      &t&\\
      y\ar@{.>}[ur]^g&&\ar@{.>}[ul]_vz\\
      &\ar[ul]^ux\ar[ur]_f&
    }
    \]
  \item for every morphism $u:x\to y$ in $\Sigma$ and pair of parallel
    morphisms $f,g:y\to z$ such that $f\circ u=g\circ u$ there exists a morphism
    $v:z\to t$ in $\Sigma$ such that $v\circ f=v\circ g$.
  \end{enumerate}
  \[
  \vxym{
   x\ar[r]^u&y\ar@<.5ex>[r]^f\ar@<-.5ex>[r]_g&z\ar@{.>}[r]^v&t
  }
  \]
\end{defi}

\begin{rem}
  \label{rem:epi-cf}
  Note that the last condition is always satisfied when every
  morphism~$u\in\Sigma$ is epi, since in this case we can take~$v$ to be the
  identity.
\end{rem}

\begin{thm}[\cite{gabriel1967calculus,borceux1994handbook}]
  \label{thm:category-of-fractions}
  When $\Sigma$ is a left calculus of fractions for a category~$\C$, the
  localization~$\loc\C\Sigma$ can be described as the \emph{category of
    fractions}, whose objects are the objects of~$\C$ and morphisms from $x$ to
  $y$ are equivalence classes of pairs of cofinal morphisms $(f,u)$ with
  $f:x\to i\in\C$ and $u:y\to i\in\Sigma$ under the equivalence relation
  identifying two such pairs $(f_1,u_1)$ and $(f_2,u_2)$ when there exists two
  morphisms $w_1,w_2\in\Sigma$ such that $w_1\circ u_1=w_2\circ u_2$ and
  $w_1\circ f_1=w_2\circ f_2$, as shown on the left below:
  \begin{equation}
    \label{eq:fractions}
    \svxym{
      &i_1\ar[d]^(.6){w_1}&\\
      x\ar[ur]^{f_1}\ar[dr]_{f_2}&j&\ar[dl]^{u_2}\ar[ul]_{u_1}y\\
      &i_2\ar[u]_(.6){w_2}&\\
    }
    \qquad\qquad\qquad\qquad
    \svxym{
      &&k&&\\
      &i\ar@{.>}[ur]^h&&\ar@{.>}[ul]_wj\\
      x\ar[ur]^f&&\ar[ul]_uy\ar[ur]^g&&\ar[ul]_vz
    }
  \end{equation}
  The identity on an object $x$ is the equivalence class of $(\id_x,\id_x)$ and
  the composition of two morphisms $(f,u):x\to y$ and $(g,v):y\to z$ is the
  equivalence class of $(h\circ f,w\circ v):x\to z$ where the morphisms $h$ and
  $w$ are provided by property 1 of
  Definition~\ref{def:left-calculus-of-fractions}, as shown on the right above.
  The localization functor $L:\C\to\loc\C\Sigma$ is the identity on objects and
  sends an morphism $f:x\to y$ to $(f,\id_y)$.
\end{thm}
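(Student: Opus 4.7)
The plan is to verify the Gabriel–Zisman construction directly, by showing that the proposed data forms a category, that $L$ is functorial and inverts the elements of $\Sigma$, and that $L$ satisfies the universal property of Definition~\ref{def:localization}.

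The first step, verifying that $\loc\C\Sigma$ is a well-defined category, is the technical heart of the proof. Reflexivity and symmetry of the equivalence relation on pairs $(f,u)$ are immediate; for transitivity, given witnesses $(w_1,w_2)$ for $(f_1,u_1)\sim(f_2,u_2)$ and $(w_2',w_3)$ for $(f_2,u_2)\sim(f_3,u_3)$, I would apply condition~3 of Definition~\ref{def:left-calculus-of-fractions} to the coinitial pair $(w_2,w_2')$ at~$i_2$, obtaining arrows $g$ and $v\in\Sigma$ with $v\circ w_2'=g\circ w_2$; the candidate witnesses $g\circ w_1$ and $v\circ w_3$ then satisfy the required equalities by direct substitution. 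The delicate point is that $g\circ w_1$ need not lie in $\Sigma$, and I would repair this by further applications of conditions~3 and~4 (noting that $g\circ w_2$ is already in $\Sigma$) in order to exhibit a common post-composition of both candidate witnesses that lies in $\Sigma$. Composition of two fractions $(f,u):x\to y$ and $(g,v):y\to z$ is defined using condition~3 to complete the coinitial pair $(u,g)$, yielding $[(h\circ f,w\circ v)]$; here $w\circ v$ lies in $\Sigma$ by condition~1. Independence of both the particular square filling and the choice of representatives reduces to further applications of conditions~3 and~4, and associativity together with the identity laws (using condition~2) are routine diagram chases.

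Once the category is in place, I would define $L:\C\to\loc\C\Sigma$ by $Lx=x$ on objects and $Lf=[(f,\id_y)]$ for $f:x\to y$, and verify functoriality using the trivial square fillings that arise whenever one of the arrows being completed is an identity. For $u:x\to y$ in $\Sigma$, I would then check that $[(\id_y,u)]$ is a two-sided inverse of $Lu$: one composite collapses directly to $[(\id_y,\id_y)]$ via the trivial square filling, while the other evaluates to $[(u,u)]$, which is identified with $[(\id_x,\id_x)]$ by the witnesses $w_1=\id_y$ and $w_2=u$.

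For the universal property, given $F:\C\to\D$ sending the elements of $\Sigma$ to isomorphisms, I would set $\tilde Fx=Fx$ on objects and $\tilde F[(f,u)]=(Fu)^{-1}\circ Ff$ on morphisms. Well-definedness on equivalence classes follows by inverting $Fw_1,Fw_2$ in the defining equalities; functoriality reduces to the relation $Fh\circ Fu=Fw\circ Fg$ obtained by applying $F$ to the square of condition~3 used in the definition of composition. The factorization $\tilde F\circ L=F$ is immediate, and uniqueness follows from the decomposition $[(f,u)]=L(u)^{-1}\circ L(f)$, established by computing the relevant composite via a trivial square filling with identities. The principal obstacle throughout is ensuring that the auxiliary arrows produced by condition~3 can be arranged to lie in $\Sigma$ whenever required — this is precisely where the closure properties (conditions~1 and~2) and the equalizer condition~4 are indispensable.
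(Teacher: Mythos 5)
A preliminary remark on the comparison: the paper does not prove this theorem at all --- it is recalled from Gabriel--Zisman and Borceux with a citation --- so your proposal can only be measured against the standard textbook argument, which is indeed the one you outline. The second half of your sketch (the functor $L$, the inverse $[(\id_y,u)]$ of $Lu$, the assignment $\tilde F[(f,u)]=(Fu)^{-1}\circ Ff$, its well-definedness, functoriality via the square $h\circ u=w\circ g$ coming from condition~3, and uniqueness via the decomposition $[(f,u)]=(Lu)^{-1}\circ Lf$) is correct and sufficiently detailed.

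The genuine gap sits exactly at the point you flag and then defer. With the equivalence relation as stated in the theorem --- both witnesses $w_1,w_2$ required to lie in $\Sigma$ --- the proposed repair of transitivity does not go through. Condition~3 places only \emph{one} of the two completing arrows in $\Sigma$ (the one attached to the target of the non-distinguished morphism), so after forming the candidate witnesses $g\circ w_1$ and $v\circ w_3$ you can use conditions 3 and 4 to produce an $e$ with $e\circ(g\circ w_1)\in\Sigma$, but this $e$ is not itself known to be in $\Sigma$, so the other witness $e\circ v\circ w_3$ may fall out of $\Sigma$; correcting that one in turn spoils the first, and the process regresses. The same obstruction reappears when you check that the composite $(h\circ f,w\circ v)$ is independent of the chosen square filling. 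The standard references avoid this by defining the equivalence relation with \emph{arbitrary} morphisms $w_1,w_2$ subject only to $w_1\circ u_1=w_2\circ u_2\in\Sigma$ --- equivalently, by presenting $\mathrm{Hom}(x,y)$ as a filtered colimit over the comma category of $\Sigma$-arrows out of $y$, where conditions 3 and 4 are precisely the two filteredness axioms and transitivity is automatic. With that relation every step of your verification closes, and the universal-property computations are unaffected since they only use invertibility of $F(w_1\circ u_1)$ and $F(u_1)$. You should either adopt that formulation outright, or insert a lemma identifying the two relations; the latter is itself not a routine consequence of conditions 1--4 and would need a genuine argument.
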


\begin{exa}
  This construction draws its name from the following example. Consider the
  category~$\mathcal{Z}$ with one object $\ast$, whose morphisms are
  integers~$n\in\Z$, composition is given by multiplication and identity
  is~$1$. The set $\Sigma=\Z\setminus\set{0}$ of all morphisms excepting~$0$ is
  a left calculus of fractions since (1) it is closed under multiplication,
  (2)~it contains~$1$, (3) every pair of non-zero integers admits a non-zero
  common multiple, and it satisfies (4) by Remark~\ref{rem:epi-cf} since every
  element can be canceled. The associated localization $\loc{\mathcal{Z}}\Sigma$
  is the category with one object~$\ast$, morphisms being rational numbers
  in~$\mathbb{Q}$, with multiplication as composition: a morphism $(f,u)$ in the
  localization corresponds to the fraction $f/u$ and the quotient to the usual
  one, identifying $(fw)/(uw)$ with $f/u$.
\end{exa}

Given a presentation modulo, when the (abstract) rewriting system on objects
given by the equational generators is convergent, normal forms for objects
provide canonical representatives of objects modulo equational generators, and
therefore we are actually provided with three possible and equally reasonable
constructions for the category presented by a presentation
modulo~$(P,\tilde P_1)$:
\begin{enumerate}
\item the full subcategory~$\nfcat{P}{\tilde P_1}$ of $\pcat{P}$ whose objects
  are normal forms \wrt $\tilde P_1$,
\item the quotient category $\pcat{P}/\tilde P_1$,
\item the localization $\loc{\pcat{P}}{\tilde P_1}$.
\end{enumerate}
The aim of the following two sections is to provide reasonable assumptions on
the presentation modulo ensuring that the  first two categories are isomorphic
(normal forms provide a concrete description of the quotient), and equivalent to
the third one (which captures the coherence of equational morphisms). We
introduce these assumptions gradually in the next section. We first give some
examples illustrating the fact that those constructions are not the same in
general.

\begin{exa}
  Consider the category~$\C$ with two objects and two non-identity
  morphisms depicted on the left:
  \[
  \C=
  \vxym{
    x\ar@<.5ex>[r]^f\ar@<-.5ex>[r]_g&y
  }
  \qquad\quad
  \mathcal{T}=
  \vxym{
    y\ar@(ur,dr)^\id
  }
  \qquad\quad
  \mathcal{N}=
  \vxym{
    y\ar@(ur,dr)^{0,1,2,\ldots}
  }
  \qquad\quad
  \mathcal{Z}=
  \vxym{
    y\ar@(ur,dr)^{\ldots,-1,0,1,\ldots}
  }
  \]
  It admits a presentation~$P$ with $P_0=\set{x,y}$,
  $P_1=\set{f:x\to y,g:x\to y}$ and $P_2=\emptyset$.
  We also write~$\mathcal{T}$ for the terminal category (with one object and one
  identity morphism) and~$\mathcal{N}$ (\resp $\mathcal{Z}$) for the category
  with one object and the additive monoid $\N$ (\resp $\Z$) as monoid of
  endomorphisms.
  Taking $\tilde P_1=\set{f,g}$, we have (see also Example~\ref{ex:loc-vs-quot})
  \[
  \nfcat{P}{\tilde P_1}
  =
  \pcat{P}/\tilde P_1
  =
  \mathcal{T}
  \qquad\qquad
  \loc\C\Sigma\cong\mathcal{Z}
  \]
  Taking $\tilde P_1=\set{f}$, we have
  \[
  \nfcat{P}{\tilde P_1}=\mathcal{T}
  \qquad\qquad
  \loc\C\Sigma\cong\pcat{P}/\tilde P_1=\mathcal{N}
  \]
Thus  the three constructions are not equivalent in general. Note
  that in both cases, $\tilde P_1$ (or its closure under composition and
  identities) is not a left calculus of fractions, because condition (3) of
  Definition~\ref{def:left-calculus-of-fractions} is not satisfied.
\end{exa}

\begin{exa}
  \label{ex:huet}
  Consider the category admitting a presentation~$P$ with
  \[
  P_0=\set{x,x',y,y'}
  \qquad\qquad
  P_1=\set{f:x\to x',g:x\to y,g':y\to x,h:y\to y'}
  \]
  \ie graphically
  \begin{equation}
    \label{eq:ex-huet-graph}
    \vxym{
      x'&\ar[l]_-fx\ar@/^/[r]^-g&\ar@/^/[l]^-{g'}y\ar[r]^-h&y'
    }
  \end{equation}
  and relations
  \[
  P_2=\set{f\circ g'\circ g\To f,h\circ g\circ g'\To h}
  \]
  This presentation is a direct translation in
  our setting of the classical example in abstract rewriting systems showing
  that local confluence does not necessarily imply confluence~\cite{huet1980confluent}.
  Consider the set~$\tilde P_1=\set{g,g'}$ of equational morphisms. The quotient
  category is
  \[
  \pcat{P}/\tilde P_1
  \qeq
  \xymatrix{
    x'&\ar[l]_-fx\ar[r]^-h&y'
  }
  \]
  The localization admits the presentation given by Lemma~\ref{lem:pres-loc},
  with morphism generators
  \[
  f:x\to x'
  \qquad
  g:x\to y
  \qquad
  g':y\to x
  \qquad
  h:y\to y'
  \qquad
  \ol g:y\to x
  \qquad
  \ol g':x\to y
  \]
  and relations
  \[
  f\circ g'\circ g\To f
  \qquad
  h\circ g\circ g'\To h
  \qquad
  \ol g\circ g\To\id_x
  \qquad
  g\circ\ol g\To\id_y
  \qquad
  \ol g'\circ g'\To\id_y
  \qquad
  g'\circ\ol g'\To\id_x
  \]
  By Knuth-Bendix completion, these relations can be completed with the
  following derivable relations:
  \[
  f\circ\ol g\To f\circ g'
  \qquad
  h\circ\ol g'\To h\circ g
  \]
  (for instance the first relation can be derived by
  $f\circ\ol g=f\circ g'\circ g\circ\ol g=f\circ g'$), giving rise to a
  convergent rewriting system. The localization has the normal form
  $g'\circ g:x\to x$ as non-trivial endomorphism on~$x$, whereas all
  endomorphisms of the quotient are trivial: hence here too, quotienting is not
  equivalent to localizing.
\end{exa}

\section{Confluence properties}
\label{sec:confl}
In this section, we introduce local conditions that can be seen as a
generalization of classical local confluence properties in our context, in which
rewriting rules correspond to equational generators only, and in which we keep
track of 2-cells witnessing local confluence.

\subsection{Residuation}
\label{sec:residuation}
We begin by extending to our setting the notion of residual, which is often
associated to a confluent rewriting system in order to ``keep track'' of
rewriting steps once others have been performed~\cite{levy1978reductions,
  terese, dehornoy2013foundations}.

\begin{asm}
  \label{apt:convergence}
  We suppose fixed a presentation modulo $(P,\tilde P_1)$ such that
  \begin{enumerate}
  \item for every pair of distinct coinitial generators $f:x\to y_1$ in
    $\tilde P_1$ and $g:x\to y_2$ in $P_1$, there exist a pair of cofinal
    morphisms $g':y_1\to z$ in $P_1^*$ and $f':y_2\to z$ in $\tilde P_1^*$ and a
    2-generator $\alpha:g'\circ f\ToT f'\circ g$ in~$P_2$:
    \begin{equation}
      \label{eq:residuation-tile}
      \vxym{
        y_1\ar@{.>}[r]^{g'}&z\\
        x\ar[u]^-f\ar[r]_-g\ar@{}[ur]|-{\displaystyle\overset\alpha\Longleftrightarrow}&y_2\ar@{.>}[u]_{f'}
      }
    \end{equation}
  \item there is no infinite path with generators in~$\tilde P_1$.
  \end{enumerate}
\end{asm}

\noindent
These assumptions ensure in particular that the (abstract) rewriting system on
vertices with~$\tilde P_1$ as set of rules is convergent. Given a
vertex~$x\in P_0$, we write $\nf{x}$ for the associated normal form, \ie the
unique object $\nf{x}$ such that there is a morphism $f:x\to\nf{x}$ in
$\tilde P_1^*$ and there is no generator of the form $f:\nf{x}\to x'$
in~$\tilde P_1$. The classical Newman's lemma~\cite{newman1942theories} holds in
our framework:

\begin{lem}
  \label{lem:newman}
  For any pair of coinitial morphisms $f:x\to y_1$ in $\tilde P_1^*$ and
  $g:x\to y_2$ in $P_1^*$, there exist a pair of cofinal morphisms $g':y_1\to z$
  in $P_1^*$ and $f':y_2\to z$ in $\tilde P_1^*$ and a 2-cell
  $\alpha:g'\circ f\overset*\ToT f'\circ g$ in~$P_2^*$.
\end{lem}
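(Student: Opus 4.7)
The plan is to adapt the classical Newman's lemma proof to this asymmetric setting: only $\tilde P_1$ is assumed terminating (Assumption~\ref{apt:convergence}(2)) and local confluence is only between single generators in $\tilde P_1$ and $P_1$ (Assumption~\ref{apt:convergence}(1)). I would proceed by a double induction, with outer well-founded induction on the source $x$ under $\tilde P_1$-reduction and inner induction on the length of~$g$.

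The base cases, where $f$ or $g$ is the identity, are immediate: take the tile whose only non-trivial side is the other morphism, with identity 2-cell. For the inductive step, decompose $f = f_2 \circ f_1$ with $f_1 : x \to x' \in \tilde P_1$ and $g = g_2 \circ g_1$ with $g_1 : x \to x'' \in P_1$, both single generators. Applying Assumption~\ref{apt:convergence}(1) to the pair $(f_1, g_1)$ yields $g_1' : x' \to z_0 \in P_1^*$, $f_1' : x'' \to z_0 \in \tilde P_1^*$, and a 2-generator $\alpha_1 : g_1' \circ f_1 \ToT f_1' \circ g_1$.

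I would close the remaining area in two steps: at $x'$, invoke the outer induction hypothesis (since $x' <_{\tilde P_1} x$) on $(f_2, g_1')$ to obtain $\bar g : y_1 \to w \in P_1^*$, $\bar f : z_0 \to w \in \tilde P_1^*$, and a 2-cell; at $x''$, invoke the inner induction hypothesis on $(\bar f \circ f_1', g_2)$, whose $g$-component is strictly shorter, to obtain $g' : w \to z \in P_1^*$, $f' : y_2 \to z \in \tilde P_1^*$, and a 2-cell. Pasting the three 2-cells along their shared boundaries produces $\alpha : (g' \circ \bar g) \circ f \overset*\ToT f' \circ g$.

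The hard part is coordinating the two inductive hypotheses so that both subproblems are covered. The subproblem at $x''$ need not lie strictly below $x$ in the $\tilde P_1$-order (when $g_1$ is non-equational), so the outer well-founded IH does not apply there directly; on the other hand, the length of $g_1'$ produced by local confluence may exceed that of $g$, so the inner induction on the length of $g$ alone does not close the subproblem at $x'$. The fix is to formulate the inductive hypothesis so that the $g$-length induction is parametric in the source---this way the inner IH applies at $x''$ for the strictly shorter $g_2$, while the outer well-foundedness handles the first subproblem at $x' <_{\tilde P_1} x$ regardless of how long $g_1'$ may be.
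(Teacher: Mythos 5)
You have correctly located the difficulty, but the fix you propose does not exist: no well-founded order on pairs (source, length of $g$) supports both recursive calls. Call~(i) passes from $(x,g)$ to $(x', g_1/f_1)$ where $x'$ is strictly below $x$ for $\tilde P_1$-reduction but $g_1/f_1$ is an arbitrary path in $P_1^*$, possibly much longer than $g$; call~(ii) passes to $(x'', g_2)$ where $|g_2|=|g|-1$ but $x''$ is reached by a \emph{non-equational} generator and hence is in general incomparable with $x$ for the $\tilde P_1$-order. Whatever the nesting, one of the two calls escapes the inductive hypothesis: if the source induction is outermost, the length induction at stage $|g|$ is relative to the fixed source $x$ and does not cover the subproblem at $x''$; if the length induction is outermost, stage $|g_1/f_1|$ may exceed $|g|$ and the subproblem at $x'$ is not covered. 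More abstractly, any order in which $(x',m)<(x,n)$ for all $m,n$ whenever $x$ rewrites equationally to $x'$, and $(x'',n-1)<(x,n)$ for all $x,x''$, admits the infinite descending chain $(x_0,1)>(x_1,5)>(x_0,4)>(x_1,10)>\cdots$ as soon as $x_0$ has one equational reduct $x_1$. The classical Newman argument avoids this because \emph{both} legs of the peak descend in the terminating order, which fails here since $g_1$ need not be equational; and the strip-lemma variants fail too, because the residual $f_1/g_1$ of a single equational step is a whole equational path.

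The ingredient that actually closes this induction in the paper is the weight function of Assumption~\ref{apt:1-termination}: since $\omega_1(g_1/f_1)<\omega_1(g_1)$, the pair $(\omega_1(g),|g|)$, ordered lexicographically, strictly decreases under both calls (strictly in the first component for call~(i), weakly in the first and strictly in the second for call~(ii)), and then a single well-founded induction on this measure suffices, with no induction on the source at all. This is precisely the mechanism of Lemma~\ref{lem:residuation-zigzag}, from which the paper derives Proposition~\ref{prop:residual-relation}, a statement that subsumes Lemma~\ref{lem:newman} by exhibiting the cofinal pair as $(g/f,f/g)$. (The alternative escape is the strong-confluence hypothesis of Remark~\ref{rem:1-termination}, under which $f/g$ is a single generator or an identity and the strip argument goes through with no termination at all.) So your skeleton is salvageable, but only by importing Assumption~\ref{apt:1-termination} and replacing the length of $g$ by $\omega_1(g)$ as the principal measure; as written, with length alone, the induction is not well founded and the proof does not go through. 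Note that the paper itself offers no proof of this lemma from Assumption~\ref{apt:convergence} alone, and it is far from clear that one exists.
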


\noindent
For every pair of distinct morphisms $(f,g)$ as in the Assumption 1, we
suppose fixed an arbitrary choice of a particular triple $(g',\alpha,f')$
associated to it, and write $g/f$ for $g'$, $f/g$ for $f'$ and $\rho_{f,g}$ for
$\alpha$:
\[
\vxym{
  y_1\ar@{.>}[r]^{g/f}&z\\
  x\ar[u]^-f\ar[r]_-g\ar@{}[ur]|-{\displaystyle\overset{\rho_{f,g}}\Longleftrightarrow}&y_2\ar@{.>}[u]_{f/g}
}
\]
The morphism $g/f$ (\resp $f/g$) is called the \emph{residual} of $g$ after $f$
(\resp $f$ after $g$): intuitively, $g/f$ corresponds to what remains of~$g$
once~$f$ has been performed. It is natural to extend this definition to paths as
follows:

\begin{defi}
  \label{def:residual}
  Given two coinitial paths $f:x\to y$ and $g:x\to z$ and $P_1^*$ such that
  either~$f$ or $g$ is in $\tilde P_1^*$, we define the \emph{residual} $g/f$
  of~$g$ after $f$ as above when $f$ and $g$ are distinct generators, and by
  means of the following rules:  
  \[\begin{array}{l}
f/f=\id_y\qquad\qquad
  g/\id_x=g
  \qquad\qquad
  \id_x/f=\id_y
  \\
  (g_2\circ g_1)/f=(g_2/(f/g_1))\circ(g_1/f)
  \qquad
  g/(f_2\circ f_1)=(g/f_1)/f_2
\end{array}
  \]
  (by convention the residual $g/f$ is not defined when neither~$f$ nor~$g$
  belongs to $\tilde P_1^*$). Graphically,
  \[
  \vxym{
    \ar[r]^\id&\\
    \ar[u]^g\ar[r]_{\id}&\ar[u]_{g/\id=g}
  } \qquad
  \vxym{
    \ar[r]^f & \\
    \ar[u]^{\id} \ar[r]_f & \ar[u]_{\id /f = \id}
  } \qquad
  \vxym{
    \ar[r]^{f / (g_2 \circ g_1)} & \\
    \ar[u]^{g_2} \ar[r]^{f/g_1} & \ar[u]_{g_2 / (f/g_1)} \\
    \ar[u]^{g_1} \ar[r]_f & \ar[u]_{g_1 /f}
  } \qquad
  \vxym{
    \ar[r]^{f_1 /g} & \ar[r]^{f_2 /(g /f_1)} & \\
    \ar[u]^g \ar[r]_{f_1} & \ar[r]_{f_2} \ar[u]^{g / f_1} & \ar[u]_{(g/f_1)/f_2}
  }
  \]
\end{defi}

\noindent
The above rules, when applied from left to right, provide a non-deterministic
algorithm for computing residuals of paths along paths.  We will show in
Lemma~\ref{lem:residual-well-defined} that, under an additional assumption, this
algorithm teminates and that the result does not depend on the order in which
the above rules are applied. Moreover, it can be checked that residuation is
compatible with associativity and identity laws, so that altogether the notion
of residuation is well-defined on paths.
%

\begin{rem}
  In condition (1) of Assumption~\ref{apt:convergence}, in order for Newman's
  lemma (and in fact also all subsequent properties) to hold, it would be enough
  to suppose that we have $g'\circ f\overset*\ToT f'\circ g$ instead of
  requiring that there is exactly one 2-generator~$\alpha$ mediating the two
  morphisms. However, this would makes some formulations  more involved, without
  bringing more generality in practice.
\end{rem}

\begin{rem}
  It might seem at first that Assumption~\ref{apt:convergence} is sufficient to
  ensure that quotienting by~$\tilde P_1$ or localizing \wrt $\tilde P_1$ give
  rise to equivalent categories, but Example~\ref{ex:huet} shows that this is
  not the case and more assumptions are needed. In particular termination, which
  is introduced below.
\end{rem}

To ensure that the definition of residuation is well-founded, and thus always defined, we will
make the following additional assumption. We first recall that a poset
$(N,\leq)$ is \emph{noetherian} if there is no infinite descending chain
$n_0>n_1>n_2>\ldots$ of elements of $N$; the typical example of such a poset
is~$(\N,\leq)$. A \emph{noetherian monoid} $(N,+,0,\leq)$ is a (non-necessarily
commutative) monoid $(N,+,0)$ together with a structure of noetherian poset
$(N,\leq)$, such that for every $x,y,y',z\in N$,
\[
y>y'
\qquad\text{implies}\qquad
x+y+z>x+y'+z
\]
and $0$ is the minimum element. Again, a typical example of such a monoid is
$(\N,+,0,\leq)$.

\begin{asm}
  \label{apt:1-termination}
  There is a weight function $\omega_1:P_1\to N$, where $(N,+,0,\leq)$ is a
  noetherian monoid, such that for every generator $g\in P_1$ and
  $f\in\tilde P_1$, we have $\omega_1(g/f)<\omega_1(g)$, where we extend
  $\omega_1$ on elements of $P_1^*$ by
  $\omega_1(g\circ f)=\omega_1(g)+\omega_1(f)$ and $\omega_1(\id)=0$.
\end{asm}

\begin{rem}
  Note in particular that, with the previous assumption, we always have
  \[
  \omega_1(g)<\omega_1(h)+\omega_1(g)+\omega_1(f)=\omega_1(h\circ g\circ f)
  \]
  for composable morphisms $f$, $g$ and $h$.
\end{rem}


In order to study confluence of the rewriting system provided by equational
morphisms, through the use of residuals, we first introduce the following
category, which allows us to consider, at the same time, both residuals~$g/f$
and~$f/g$ of two coinitial morphisms~$f$ and~$g$.

\begin{defi}
  \label{def:zig-zag}
  The \emph{zig-zag presentation} associated to the presentation modulo
  $(P,\tilde P_1)$ is the presentation $Z=(Z_0,Z_1,Z_2)$ with $Z_0=P_0$,
  $Z_1 = P_1\uplus\ol{\tilde P_1}$ (generators in $\ol{\tilde P_1}$ are of the
  form $\ol f:B\to A$ for any generator $f:A\to B$ in~$\tilde P_1$) and
  relations in $Z_2$ are of the form
  $g\circ\ol f\To \overline{(f/g)}\circ (g/f)$
  \[
  \vxym{
    y_1\ar[r]^{g/f}&z\\
    x\ar@{<-}[u]^-{\ol f}\ar[r]_-g\ar@{}[ur]|-{\Longrightarrow}&y_2\ar@{<-}[u]_{\ol{f/g}}
  }
  \]
  or $f\circ\ol f\To\id_y$ for any pair of distinct coinitial generators
  $f:x\to y\in\tilde P_1$ and $g:x\to z\in P_1$.
\end{defi}

\begin{lem}
  \label{lem:residuation-zigzag}
  The rewriting system on morphisms in $Z_1^*$ with $Z_2$ as rules is
  convergent. Given two coinitial morphisms $f:x\to y$ in $\tilde P_1^*$ and
  $g:x\to z$ in $P_1^*$, the normal form of~$g\circ\ol f$ is
  $\ol{(f/g)}\circ(g/f)$.
\end{lem}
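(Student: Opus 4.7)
The proof splits into three parts: termination of the zig-zag rewriting, its confluence, and an inductive identification of the normal form of $g\circ\ol f$.

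\emph{Termination.} I attach to each composite $w\in Z_1^*$ its \emph{forward weight} $\omega_Z(w)\in N$, defined as the sum of $\omega_1(h)$ over all factors $h$ of $w$ lying in $P_1$; the $\ol{\tilde P_1}$ factors contribute $0$. The rule $g\circ\ol f \To \ol{f/g}\circ(g/f)$ replaces the $P_1$ factor $g$ by the composite $g/f$, whose forward-weight contribution is $\omega_1(g/f)$ by additivity of $\omega_1$ along composition; Assumption~\ref{apt:1-termination} yields $\omega_1(g/f) < \omega_1(g)$. The collapse rule $f\circ\ol f\To\id_y$ removes a $P_1$ factor $f$ of weight $\omega_1(f)>0$, since $0=\omega_1(f/f)<\omega_1(f)$ by the same assumption applied to $g=f$. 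In either case the middle-substitution property of the noetherian monoid $N$ forces a strict decrease of $\omega_Z$, and well-foundedness of $(N,\leq)$ yields termination.

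\emph{Confluence.} Every left-hand side in $Z_2$ has the shape ``a $P_1$ factor followed by an $\ol{\tilde P_1}$ factor''. Since these two letter classes are disjoint, two distinct redexes in a single composite cannot share a factor and therefore occupy disjoint positions. Independent reductions then commute by local substitution, which gives local confluence; Newman's lemma combined with termination upgrades this to global confluence.

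\emph{Normal form.} Any composite of the shape $\ol h\circ k$ with $h\in\tilde P_1^*$ and $k\in P_1^*$ is manifestly irreducible, so by termination and confluence it suffices to exhibit a reduction $g\circ\ol f\to^* \ol{f/g}\circ(g/f)$. I argue by well-founded induction on the lexicographic pair $(\omega_1(g),|g|)$, where $|g|$ is the length of $g$ as a path. The cases $g=\id$, $f=\id$, or $g$ and $f$ both single generators are handled directly (in the last case rule~2 applies if $g=f$, and rule~1 otherwise). When $|g|=1$ and $|f|\geq 2$, decompose $f=f_2\circ f_1$ with $f_1\in\tilde P_1$ a single generator, so $\ol f=\ol{f_1}\circ\ol{f_2}$; apply the basic rule to $g\circ\ol{f_1}$ and then the induction hypothesis to $(g/f_1)\circ\ol{f_2}$, which is permissible because $\omega_1(g/f_1)<\omega_1(g)$. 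When $|g|\geq 2$, decompose $g=g_2\circ g_1$ with $g_1\in P_1$ single; apply the induction hypothesis to $g_1\circ\ol f$ (strictly smaller in lex order since $|g_1|=1<|g|$) and then to $g_2\circ\ol{f/g_1}$ (strictly smaller since $|g_2|<|g|$). The recursive clauses of Definition~\ref{def:residual} identify the resulting composite with $\ol{f/g}\circ (g/f)$ in every case.

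\emph{Main obstacle.} The delicate bookkeeping lies in the termination step when $g/f$ is itself a composite and $N$ is allowed to be non-commutative: additivity of $\omega_1$ along composition together with the middle-substitution axiom is exactly what ensures $\omega_Z$ strictly decreases. A secondary care is required in the choice of induction measure, since residuation by a $\tilde P_1$ generator can enlarge paths while strictly reducing their forward weight, so neither $|g|$ nor $\omega_1(g)$ alone works, and the lex pair $(\omega_1(g),|g|)$ is the right compromise.
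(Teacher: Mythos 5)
Your proof is correct and follows essentially the same route as the paper's: zero weight on the formal inverses for termination, absence of critical pairs (no overlap between left-hand sides) for confluence, and a well-founded induction unfolding the recursive residuation clauses to identify the normal form of $g\circ\ol f$. The only (harmless) difference is your induction measure, the lexicographic pair $(\omega_1(g),|g|)$, where the paper inducts on $\omega_1(g\circ\ol f)$ directly; your choice is in fact slightly more careful, since it does not rely on generators having strictly positive weight.
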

\begin{proof}
  We extend the weight function of Assumption~\ref{apt:1-termination} to
  morphisms in~$Z_1^*$ by setting $\omega_1(\ol f)=0$ for~$\ol{f}$
  in~$\ol{\tilde P_1}$. This ensures that the rewriting system on morphisms
  in~$Z_1^*$ with~$Z_2$ as rules is terminating. Moreover, because the left
  members of rules are of the form $g\circ\ol{f}$ with $g\in P_1$ and
  $\ol{f}\in\ol{\tilde P_1}$, there are no critical pairs (a morphism of the
  form $g\circ\ol{f}$ cannot non-trivially overlap with a morphism of the form
  $g'\circ\ol{f'}$), which implies that the rewriting system is confluent.
  Given two coinitial morphisms~$f:x\to y$ in~$\tilde P_1^*$ and~$g:x\to z$
  in~$P_1^*$, we prove by well-founded induction on~$\omega_1(g\circ\ol f)$ that
  the normal form of $g\circ\ol f$ is $\ol{(f/g)}\circ(g/f)$.
  If either~$f$ or~$g$ is an identity, this  is direct.
  Otherwise, $f = f_2 \circ f_1$ and $g = g_2 \circ g_1$ where $f_1$, $f_2$,
  $g_1$ and $g_2$ are non-identity morphisms.
  \[
  \vxym{
      \ar[r]^{(g_1 / f_1) /f_2} & \ar[rr]^{g_2 / (f/g_1)} && \\
    &&& \\
    \ar[uu]^{f_2} \ar[r]^{g_1 / f_1}\ar@{}[uur]|{\overset*\To}  & \ar[uu]_{f_2 / (g_1 / f_1)}&& \\
    \ar[u]^{f_1} \ar[r]_{g_1}\ar@{}[ur]|{\overset*\To}&\ar[rr]_{g_2} \ar[u]_{f_1/g_1}\ar@{}[uuurr]|{\overset*\To} && \ar[uuu]_{(f/g_1)/g_2}
  }\]
  By induction, we have
  \[
  g_1 \circ \ol {f_1}\overset*\To \ol{(f_1 / g_1)} \circ (g_1 / f_1)
  \qquad\text{and}\qquad
  (g_1 / f_1) \circ \ol {f_2}\overset*\To \ol{(f_2 / (g_1/f_1))} \circ ((g_1 / f_1)/f_2)
  \]
  because
  \[
  \omega_1(g_1\circ\ol{f_1})
  \quad<\quad
  \omega_1(g_2\circ g_1\circ\ol{f_1}\circ\ol{f_2})
  \qeq
  \omega_1(g\circ\ol{f})
  \]
  and
  \[
  \omega_1((g_1 / f_1) \circ \ol {f_2})
  \quad<\quad
  \omega_1\pa{g_2 \circ\ol{(f_1 / g_1)} \circ (g_1 / f_1)\circ \ol f_2}
  \quad<\quad
  \omega_1 (g \circ \ol f)
  \]
  Therefore,
  \begin{align*}
    g \circ \ol f
    & \overset * \qTo g_2 \circ \ol{(f_1 / g_1)} \circ (g_1 / f_1)\circ \ol f_2 \\
    & \overset * \qTo g_2 \circ \ol{(f_1 / g_1)} \circ \ol{(f_2 / (g_1 / f_1))} \circ ((g_1 / f_1)/f_2)\\
    & \qeq g_2 \circ \ol{(f / g_1)} \circ (g_1 / f)
  \end{align*}
  Similarly,
  \[
  \omega_1 (g_2 \circ \ol{(f / g_1)}\circ\ol{(f_2/(g_1/f_1))})
  \quad<\quad
  \omega_1 (g \circ \ol f)
  \]
  therefore
  \[
  g_2 \circ \ol{(f / g_1)}\circ\ol{(f_2/(g_1/f_1))} \overset*\qTo \ol{((f/g_1)/g_2)} \circ (g_2 / (f/g_1))
  \]
  and we have
  \begin{align*}
    g \circ \ol f
    & \overset * \qTo g_2 \circ \ol{(f / g_1)} \circ (g_1 / f) \\
    & \overset * \qTo \ol{((f/g_1)/g_2)} \circ (g_2 / (f/g_1)) \circ (g_1 / f) \\
    & \qeq \ol{(f/g)} \circ (g/f)
  \end{align*}
  from which we conclude.
\end{proof}

\begin{rem}
  \label{rem:1-termination}
  The termination Assumption~\ref{apt:1-termination} is not the only possible
  one. For instance, an abstract rewriting system is called \emph{strongly
    confluent} when $x\to y_1$ and $x\to y_2$ implies that there exists $z$ such
  that $y_1\to z$ (or $y_1=z$) and $y_2\overset*\to z$. Such an abstract
  rewriting system is always confluent~\cite{huet1980confluent}. This translates
  to our setting: if, in every residuation relation of the
  form~\eqref{eq:residuation-tile}, we have that~$f/g$ (\resp $g/f$) is always a
  generator or an identity, then the rewriting system on $Z_1^*$ with $Z_2$ as
  rules is confluent and $g\circ\ol f$ rewrites to $\ol{(f/g)}\circ(g/f)$.
\end{rem}

\noindent
As a direct corollary of the convergence of the rewriting system, one can show
that Definition~\ref{def:residual} makes sense:

\begin{lem}
  \label{lem:residual-well-defined}
  The residuation operation does not depend on the order in which equalities of
  Definition~\ref{def:residual} are applied.
\end{lem}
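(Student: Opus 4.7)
The plan is to derive the lemma as a direct corollary of the convergence of the zig-zag rewriting system established in Lemma~\ref{lem:residuation-zigzag}. The key observation I would exploit is that each application of one of the equalities of Definition~\ref{def:residual} can be read as a sequence of rewriting steps in the zig-zag presentation $Z$, operating on the interpretation $g \circ \ol{f}$ of the residual expression $g/f$.

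Concretely, I would introduce a translation $\phi$ sending a partially computed residual expression to an element of $Z_1^*$, by setting $\phi(h) = h$ for $h \in P_1^*$, extending compatibly with composition, and setting $\phi(E_1/E_2) = \phi(E_1) \circ \ol{\phi(E_2)}$, where $\ol{\cdot}$ is extended antihomomorphically to paths. I would then check, rule by rule, that each equality of Definition~\ref{def:residual}, read as a rewrite from left to right, maps under $\phi$ to a valid rewriting sequence in $Z$: the base cases $f/f = \id$, $g/\id = g$, and $\id/f = \id$ are immediate (the first using the rule $f \circ \ol{f} \To \id$ in $Z_2$), while the two compositional rules correspond precisely to the $Z$-derivations already carried out in the inductive step of the proof of Lemma~\ref{lem:residuation-zigzag}.

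Once this correspondence is in place, termination of any maximal sequence of rule applications follows from the termination in $Z$: a fully reduced expression (with no residual operator remaining) maps under $\phi$ to an element of the form $\ol{f'} \circ g'$ with $f' \in \tilde{P}_1^*$ and $g' \in P_1^*$, which is precisely a $Z$-normal form. Independence of the order of rule application then follows from convergence: since the normal form of $g \circ \ol{f}$ in $Z$ is unique, the ``right-hand part'' $g'$ is uniquely determined, and by construction $g'$ is the final value computed for $g/f$ (and $f'$ is the final value computed for $f/g$).

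The main subtlety I expect concerns the antihomomorphic behavior of $\ol{\cdot}$ under composition, which must be built carefully into $\phi$ so that the rule $g/(f_2 \circ f_1) = (g/f_1)/f_2$ translates as expected to the equality $g \circ \ol{f_1} \circ \ol{f_2} = \phi((g/f_1)/f_2)$. A secondary bookkeeping point is to verify that every intermediate expression produced during the computation remains within the fragment on which the inductive argument of Lemma~\ref{lem:residuation-zigzag} applies; this holds because the compositional rules only produce strictly smaller residual subexpressions with respect to the weight function of Assumption~\ref{apt:1-termination}.
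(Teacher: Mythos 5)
Your proposal takes essentially the same route as the paper, which obtains this lemma as a direct corollary of the convergence of the zig-zag rewriting system established in Lemma~\ref{lem:residuation-zigzag} (the paper gives no further detail beyond that remark). One caveat on your formalization: the compositional translation $\phi(E_1/E_2)=\phi(E_1)\circ\overline{\phi(E_2)}$ does not literally type-check on the rule $(g_2\circ g_1)/f=(g_2/(f/g_1))\circ(g_1/f)$, since $\overline{\phi(f/g_1)}$ would require reversing the non-equational generator $g_1$ and the resulting composite has mismatched endpoints; the standard repair is to represent the entire state of the computation as a single zig-zag word and read each application of a rule of Definition~\ref{def:residual} as an in-place normalization step on that word (so that after evaluating $g_1/f$ and $f/g_1$ the word is $g_2\circ\overline{(f/g_1)}\circ(g_1/f)$, with the pending subcomputation being its unnormalized prefix), exactly as in the inductive argument of Lemma~\ref{lem:residuation-zigzag}; with that adjustment the uniqueness of normal forms gives the order-independence as you describe.
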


\noindent
Moreover, a ``global'' version of the residuation property
(Assumption~\ref{apt:convergence}) holds:

\begin{prop}
  \label{prop:residual-relation}
  Given two coinitial morphisms $f:x\to y$ in $\tilde P_1^*$ and $g:x\to z$ in
  $P_1^*$, there exists a 2-cell
  $\alpha:(g/f)\circ f\overset *\ToT(f/g)\circ g$.
\end{prop}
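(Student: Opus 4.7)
The plan is to prove this by well-founded induction on $(\omega_1(g),\omega_1(f))$ ordered lexicographically, using the weight from Assumption \ref{apt:1-termination}, with the induction structure mirroring the recursive clauses of Definition \ref{def:residual}.

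I first dispatch the base cases. If either $f$ or $g$ is an identity, the clauses $g/\id=g$ and $\id/f=\id$ make $(g/f)\circ f$ and $(f/g)\circ g$ literally equal, and the identity 2-cell suffices. If $f=g$ as single generators, the clause $f/f=\id$ again trivialises the equation. If $f$ and $g$ are distinct single generators (with $f\in\tilde P_1$ and $g\in P_1$), Assumption \ref{apt:convergence}(1) furnishes the 2-generator $\rho_{f,g}:(g/f)\circ f\ToT(f/g)\circ g$, which is exactly the desired 2-cell.

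For the inductive step I decompose whichever of $g$ or $f$ is composite. When $g=g_2\circ g_1$ with $g_1,g_2$ non-identity, the clauses $(g_2\circ g_1)/f=(g_2/(f/g_1))\circ(g_1/f)$ and $f/g=(f/g_1)/g_2$ split the target rectangle into a bottom sub-square on $(f,g_1)$ and a top sub-square on $(f/g_1,g_2)$. Both have strictly smaller first lex component, so the induction hypothesis yields 2-cells $\alpha_1$ and $\alpha_2$ which I whisker by the appropriate identities and compose vertically to obtain the 2-cell from $(g/f)\circ f$ to $(f/g)\circ g$. The symmetric case $f=f_2\circ f_1$ (with $g$ a generator, the interesting case being reduced to by the outer induction on $\omega_1(g)$) uses the clauses $g/f=(g/f_1)/f_2$ and $f/g=(f_2/(g/f_1))\circ(f_1/g)$, and the crucial decrease for the sub-square on $(g/f_1,f_2)$ is supplied directly by Assumption \ref{apt:1-termination}, which guarantees $\omega_1(g/f_1)<\omega_1(g)$.

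The main obstacle is ensuring well-foundedness across every decomposition pattern: one needs the mild combinatorial lemma that residuation along an equational path of arbitrary length strictly decreases the weight of a non-identity generator, which is a routine extension of Assumption \ref{apt:1-termination} to $\tilde P_1^*$. An alternative route would be to extract the 2-cell from the normalization sequence for $g\circ\ol f$ produced in Lemma \ref{lem:residuation-zigzag}: each rewrite of the form $g'\circ\ol{f'}\To\overline{(f'/g')}\circ(g'/f')$ in $Z_2$ corresponds to an application of a 2-generator $\rho_{f',g'}$ in $P_2$, so by isolating precisely those rewrites (ignoring the formal inversion rules) one reconstructs a 2-cell in $P_2^*$ witnessing the required residual relation.
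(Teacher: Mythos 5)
Your proposal is correct in substance, and the ``alternative route'' you sketch at the end is precisely the proof the paper gives: Lemma~\ref{lem:residuation-zigzag} produces a finite rewriting path $\beta:g\circ\ol f\overset*\To\ol{(f/g)}\circ(g/f)$ in $Z_2^*$, and the proposition is then proved by induction on the length of $\beta$, converting each residuation step $g_1\circ\ol{f_1}\To\ol{(f_1/g_1)}\circ(g_1/f_1)$ into the 2-generator $(g_1/f_1)\circ f_1\ToT(f_1/g_1)\circ g_1$ supplied by Assumption~\ref{apt:convergence}, and each cancellation step $f_1\circ\ol{f_1}\To\id$ into an identity 2-cell. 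Your primary route---a direct well-founded induction following the clauses of Definition~\ref{def:residual}---assembles the same composite 2-cell, and its advantage is that it bypasses the zig-zag category entirely; its cost is that you must justify termination yourself, and this is where your argument is too optimistic. The claim that both sub-squares of the decomposition $g=g_2\circ g_1$ have strictly smaller first lexicographic component requires $\omega_1(g_1)>0$ and $\omega_1(g_2)>0$, which Assumption~\ref{apt:1-termination} does not guarantee (the minimum $0$ of the noetherian monoid may be attained by generators), and the ``mild combinatorial lemma'' that residuation along an equational \emph{path} strictly decreases the weight of a generator is false as stated: if an intermediate residual of the equational path is an identity, the generator's weight is unchanged by that step, so only $\leq$ holds in general. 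The paper sidesteps this by proving convergence of the zig-zag rewriting system once and for all (Lemma~\ref{lem:residuation-zigzag}) and then inducting on the length of an already-finite sequence. If you retain the direct induction, replace the lexicographic pair by a measure that genuinely decreases at every recursive call, or simply invoke the well-foundedness already established in Lemma~\ref{lem:residuation-zigzag}; otherwise your closing remark is already the paper's argument.
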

\begin{proof}
  By Lemma~\ref{lem:residuation-zigzag}, there exists a rewriting path
  $\beta : g \circ \ol f \To \ol{(f / g)} \circ (g/f)$ in $Z_2^*$.
  By induction on its length, we can construct a 2-cell
  $\alpha : (g / f) \circ f \overset * \ToT (f/g) \circ g$ in the following
  way. The case where~$\beta$ is empty is immediate, otherwise we have
  $f=f_2\circ f_1$ and $g = g_2 \circ g_1$ where $f_2$ is in $\tilde P_1 ^*$
  (\resp $g_2$ in $P_1 ^*$) and $f_1$ is a generator in $\tilde P_1$ (\resp
  $g_1$ in $P_1$). We distinguish two cases depending on the form of the first
  rule of~$\beta$:
  \[
  \vxym{
    \ar[rr]^{g / f = g_2 / f_2} && \\
    \ar[u]^{f_2} \ar[r]^{\id} & & \\
    \ar[u]^{f_1} \ar[r]_{g_1}\ar@{}[ur]|{\To} & \ar[r]_{g_2} \ar[u]_{\id} & \ar[uu]_{f / g = f_2 / g_2}
  }
  \qquad\qquad\qquad\qquad
  \vxym{
    \ar[rrr]^{g/f} &&& \\
    &&& \\
    \ar[uu]^{f_2} \ar[r]^{g_1 / f_1} & \ar[rr]^{g_2 /(f_1/g_1)}&& \ar[uu]_{f_2 / (g/f_1)} \\
    \ar[u]^{f_1} \ar[r]_{g_1}\ar@{}[ur]|{\overset{\phantom{*}}\To} & \ar[rr]_{g_2} \ar[u]_{f_1/g_1}\ar@{}[urr]|{\overset*\To} && \ar[u]_{f_1 /g}
  }
  \]
  If $f_1 = g_1$, \ie if the first step of~$\beta$ corresponds to rewriting
  $g_2 \circ g_1 \circ \ol{f_1} \circ \ol{f_2}$ to $g_2 \circ \ol{f_2}$ by
  applying the rewriting rule $f_1\circ\ol{f_1}\To \id$ of~$Z_2$ (we necessarily
  have $f_1=g_1$), then by induction hypothesis, there exists a 2-cell
  \[
  \alpha'
  \qcolon
  (g_2 / f_2) \circ f_2
  \quad\overset*\ToT\quad
  (f_2 / g_2) \circ g_2
  \]
  Since $f_2 / g_2 = f/g$ and $g_2 / f_2 = g/f$, this means that there exists a
  2-cell
  \[
  (g/f)\circ f
  \quad\overset*\ToT\quad
  (f/g)\circ g
  \]
  Otherwise $f_1 \neq g_1$, and $g_2 \circ g_1 \circ \ol{f_1} \circ \ol{f_2}$
  rewrites to $g_2 \circ \ol{(f_1/g_1)} \circ (g_1 /f_1) \circ \ol{f_2}$ by
  applying the rewriting rule
  $g_1 \circ \ol{f_1} \To \ol{(f_1/g_1)} \circ (g_1 /f_1)$ of $Z_2$. By
  definition of the 2-generators in~$Z_2$, there exists a 2-generator
  \[
  (g_1/f_1) \circ f_1
  \quad\ToT\quad
  (f_1 / g_1) \circ g_1
  \]
  in $P_2$. Moreover, by Lemma~\ref{lem:residuation-zigzag}, the morphism
  $g_2 \circ \ol{(f_1/g_1)}$ in~$Z_1^*$ rewrites to
  $\ol{(f_1 / g)} \circ (g_2 / (f_1 /g_1))$, and therefore by induction
  hypothesis, there exists a 2-cell
  \[
  (g_2 /(f_1 / g_1)) \circ (f_1 / g_1)
  \quad\overset*\ToT\quad
  ((f_1 / g_1) /g_2) \circ g_2
  \]
  in $P_2^*$. This means that there is a 2-cell in~$P_2^*$
  \[
  (g/f_1)\circ f_1
  \qeq
  (g_2 / (f_1 / g_1)) \circ (g_1 /f_1) \circ f_1
  \quad\overset*\ToT\quad
  ((f_1 / g_1)/g_2) \circ g_2 \circ g_1 
  \qeq
  (f_1 / g) \circ g
  \]
  Similarly, by lemma \ref{lem:residuation-zigzag}, $(g / f_1) \circ \ol {f_2}$
  rewrites to $\ol{(f_2 / (g / f_1)} \circ (g /f)$ by rules in $Z_2$, which
  means that there exists a 2-cell
  \[
  (g / f) \circ f_2
  \quad\overset*\ToT\quad
  (f_2 / (g / f_1)) \circ (g / f_1)
  \]
  in~$P_2^*$ and therefore, there exists a 2-cell in~$P_2^*$:
  \[
  (g / f) \circ f = (g / f) \circ f_2 \circ f_1 \overset * \ToT (f_2 / (g / f_1)) \circ (f_1 / g) \circ g = (f / g) \circ g
  \]
  from which we conclude, as indicated in the above diagram.
\end{proof}

\subsection{The cylinder property}
\label{sec:cylinder}
In  Section \ref{sec:residuation}, we have studied residuation, which enables one to
recover a residual $g/f$ of a morphism $g$ after a coinitial equational
morphism~$f$ (and similarly for $f/g$). We now strengthen our hypothesis in
order to ensure that if two morphisms are equal (\wrt the equivalence generated
by $P_2^*$) then their residuals after a same morphism are equal, \ie equality
is compatible with residuation.

\bigskip
\noindent
\begin{asm}
  \label{apt:cylinder}
  The presentation $(P,\tilde P_1)$ satisfies the \emph{cylinder property}: for
  every triple of coinitial morphism generators $f:x\to x'$ in $\tilde P_1$
  (\resp in $P_1$) and $g_1,g_2:x\to y$ in $P_1^*$ (\resp in $\tilde P_1^*$)
  such that there exists a generating 2-cell $\alpha:g_1\ToT g_2$, we have
  $f/g_1=f/g_2$ and there exists a 2-cell $g_1/f\overset*\ToT g_2/f$. We write
  $\alpha/f$ for an arbitrary choice of such a 2-cell.
  \begin{equation}
    \label{eq:cylinder}
    \vxym{
      x'\ar@/^/@{.>}[rr]^{g_1/f}\ar@/_/@{.>}[rr]_{g_2/f}\ar@{}[rr]|-{\alpha/f}&&y'\\
      \ar[u]^fx\ar@/^/[rr]^{g_1}\ar@/_/[rr]_{g_2}\ar@{}[rr]|-\alpha&&\ar@{.>}[u]_{f/g_1=f/g_2}y
    }
  \end{equation}
\end{asm}

\noindent
As in the previous section, we would like to extend this ``local'' property ($f$
and~$\alpha$ are supposed to be generators) to a ``global'' one (where $f$
and~$\alpha$ can be composites of cells):

\begin{prop}[Global cylinder property]
  \label{prop:global-cylinder}
  Given coinitial morphisms $f:x\to x'$ in $\tilde P_1^*$ (\resp in $P_1^*$) and
  $g_1,g_2:x\to y$ in $P_1^*$ (\resp in $\tilde P_1^*$) such that there exists a
  composite 2-cell $\alpha:g_1\overset*\ToT g_2$, we have $f/g_1=f/g_2$ and
  there exists a 2-cell $g_1/f\overset*\ToT g_2/f$.
\end{prop}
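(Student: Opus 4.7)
The plan is to prove the proposition by extending Assumption~\ref{apt:cylinder} through three successive structural reductions, followed by an induction on the length of the equational morphism. First, since $\alpha \in P_2^*$, decompose it as a vertical composition $\alpha = \alpha_n \cdot \cdots \cdot \alpha_1$ of elementary single-step 2-cells, each $\alpha_i : g^{(i-1)} \ToT g^{(i)}$ being a (possibly whiskered) generating 2-cell of $P_2$ or a formal inverse. Transitivity of equality and vertical composition of 2-cells reduce the problem to the case of a single elementary step. For such a step, write $g_1 = h \circ s \circ k$ and $g_2 = h \circ t \circ k$ with $\beta : s \ToT t$ a generator in $P_2$, and expand via Definition~\ref{def:residual}:
\[
f/g_i \qeq ((f/k)/s_i)/h,\qquad g_i/f \qeq (h/((f/k)/s_i)) \circ (s_i/(f/k)) \circ (k/f),
\]
with $s_1 = s$ and $s_2 = t$. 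The statement now reduces to the following \emph{core form}: for any $e \in \tilde P_1^*$ (here $e = f/k$) and any single generating 2-cell $\beta : s \ToT t$ in $P_2$, show $e/s = e/t$ and the existence of a 2-cell $s/e \overset{*}\ToT t/e$. Indeed, the equality of the equational residuals then forces the outer factors of $g_i/f$ to coincide, and whiskering the middle 2-cell yields the required $g_1/f \overset{*}\ToT g_2/f$.

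The core form is established by induction on $|e|$. The cases $|e| = 0$ (trivial) and $|e| = 1$ (Assumption~\ref{apt:cylinder}) are direct. For $|e| > 1$, decompose $e = e_2 \circ e_1$ with $e_1 \in \tilde P_1$ a generator, so that the residuation formulas give
\[
e/s = (e_2/(s/e_1)) \circ (e_1/s),\qquad s/e = (s/e_1)/e_2,
\]
and symmetrically for $t$. By Assumption~\ref{apt:cylinder} applied to $e_1$ and $\beta$, one obtains $e_1/s = e_1/t$ together with a composite 2-cell $\gamma : s/e_1 \overset{*}\ToT t/e_1$. Applying the \emph{full} proposition to $e_2$ (of length $|e|-1$) with the composite 2-cell $\gamma$ then yields $e_2/(s/e_1) = e_2/(t/e_1)$ and a 2-cell $(s/e_1)/e_2 \overset{*}\ToT (t/e_1)/e_2$, which combined with the previous data gives the required equalities and 2-cell for $e$.

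The main obstacle is the well-foundedness of the combined recursion. A plain induction on $|f|$ fails, because the whiskering reduction replaces $f$ with the possibly longer residual $f/k$. The resolution is to organize the proof as a single well-founded induction whose principal measure uses the weight function $\omega_1$ of Assumption~\ref{apt:1-termination}: the recursive invocation of the core form strictly decreases the weight $\omega_1(g_1) + \omega_1(g_2)$ of the non-equational morphisms, since $\omega_1(s/e_1) < \omega_1(s)$ and $\omega_1(t/e_1) < \omega_1(t)$ by Assumption~\ref{apt:1-termination}; the preparatory reductions of the first two steps only reshape $\alpha$ without increasing this weight. Noetherianity of $(N,+,0,\leq)$ then ensures termination, at the base case where Assumption~\ref{apt:cylinder} applies directly.
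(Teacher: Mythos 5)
Your overall strategy --- unfold the composite 2-cell into whiskered generating 2-cells, peel off the context to reduce to a ``core'' instance, apply Assumption~\ref{apt:cylinder} to the first equational generator, and recurse on the residual data --- is essentially the computation that the paper packages as the rewriting system~\eqref{eq:2-zigzag-rs} on the 2-zig-zag presentation, and your whiskering reductions (the formulas for $f/(h\circ s_i\circ k)$ and $(h\circ s_i\circ k)/f$, and the observation that equality of the equational residuals makes the outer factors agree) are correct.

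The gap is in the termination argument. You rightly note that induction on the length of the equational morphism fails, but the substitute measure $\omega_1(g_1)+\omega_1(g_2)$ does not decrease along your recursion. The recursive call replaces the generating 2-cell $\beta:s\ToT t$ by the 2-cell $\gamma:s/e_1\overset*\ToT t/e_1$ supplied by Assumption~\ref{apt:cylinder}, and $\gamma$ is an \emph{arbitrary composite}: when you decompose it into elementary steps $\beta_j:s_j\ToT t_j$ in context, the intermediate 1-cells of $\gamma$ --- and hence the weights $\omega_1(s_j)+\omega_1(t_j)$ governing the next round of core instances --- are in no way bounded by $\omega_1(s/e_1)+\omega_1(t/e_1)$, so the inequality $\omega_1(s/e_1)<\omega_1(s)$ buys you nothing. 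Moreover, in the second case of the statement (where $f\in P_1^*$ and $g_1,g_2\in\tilde P_1^*$), Assumption~\ref{apt:1-termination} does not even give $\omega_1(s/e_1)<\omega_1(s)$: it only bounds residuals \emph{along} equational generators, not residuals \emph{of} equational morphisms along arbitrary generators. This is precisely why the paper introduces the separate two-dimensional weight $\omega_2$ of Assumption~\ref{apt:2-termination}, defined on 2-cells rather than on their boundaries, required to decrease under $\alpha\mapsto\alpha/f$ and extended additively over composites; that hypothesis (or a strong-confluence substitute as in Remark~\ref{rem:2-termination}) is what makes the recursion well founded. Your proof never invokes Assumption~\ref{apt:2-termination}, and without it the induction does not terminate.
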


\noindent
The proof of the previous proposition requires generalizing, in dimension 2, the
termination condition (Assumption~\ref{apt:1-termination}) and the construction
of the zig-zag presentation (Definition~\ref{def:zig-zag}).

\begin{defi}
  The \emph{2-zig-zag presentation} associated to $(P,\tilde P_1)$ is
  $Y=(Y_0,Y_1,Y_2)$ with
  \begin{itemize}
  \item $Y_0=P_0$,
  \item $Y_1=\horiz{P_1}\uplus\verti{P_1}$ where $\horiz{P_1}=\verti{P_1}=P_1$,
    the superscripts ``$\textnormal{H}$'' and ``$\textnormal{V}$'' being used to
    distinguish between the two copies of the disjoint union: the morphisms of
    $\horiz{P_1}$ are called \emph{horizontal}, and noted $\horiz{f}:A\to B$ for
    some morphism $f:A\to B$ in $P_1$, and similarly for the morphisms in
    $\verti{P_1}$ which are called \emph{vertical}, and
  \item the
  2-cells in $Y_2=\horiz Y_2\uplus\verti Y_2$ are either\\
  \begin{itemize}
  \item horizontal 2-cells: $\horiz Y_2=\horiz{P_2}\uplus\horiz{\ol{P_2}}$ (\ie
    2-generators in $P_2$ taken forward or backward, and decorated
    by~$\mathrm{H}$), or
  \item vertical 2-cells: given two generators $f:x\to y$ and $g:x\to z$ in
    $P_1$ such that~$f$ or~$g$ belongs to~$\tilde P_1$, we have a 2-generator
    $\verti\rho_{f,g}:\horiz{(g/f)}\circ\verti{f} \To
    \verti{(f/g)}\circ\horiz{g}$ in~$\verti Y_2$.
    \[
    \vxym{
      x'\ar[r]^{\horiz{(g/f)}}&y'\\
      x\ar[u]^{\verti f}\ar[r]_{\horiz{g}}\ar@{}[ur]|-{\displaystyle\overset{\verti\rho_{f,g}}\Longrightarrow}&\ar[u]_{\verti{(f/g)}}
    }
    \]
  \end{itemize}
  \end{itemize}  
\end{defi}

\noindent
We consider the following rewriting system on the 2-cells in~$Y_2^*$ of the
2-category generated by the presentation: for every
1-cell $f:x\to x'$ in $P_1$ and coinitial generating 2-cell
$\alpha:g_1\ToT g_2:x\to y$ in~$P_2$, such that either $f$ or both $g_1$ and
$g_2$ belong to $\tilde P_1^*$, there is a rewriting rule
\begin{equation}
  \label{eq:2-zigzag-rs}
  \begin{array}{rcl}
  (\verti{(f/g_1)}\circ\horiz\alpha)\vcirc\verti\rho_{f,g_1}
  &\qTO&
  \verti\rho_{f,g_2}\vcirc(\horiz{(\alpha/f)}\circ\verti f)
  \\
  \vxym{
    x'\ar@/^/[rr]^{\horiz{g_1/f}}&\ar@{}[d]|-<<{\verti\rho_{f,g_1}}&y'\\
    \ar[u]^{\verti{f}}x\ar@/^/[rr]^{\horiz g_1}\ar@/_/[rr]_{\horiz g_2}\ar@{}[rr]|-{\horiz\alpha}&&\ar[u]_{\verti{(f/g_1)}}y
  }
  &\qTO&
  \vxym{
    x'\ar@/^/[rr]^{\horiz{(g_1/f)}}\ar@/_/[rr]_{\horiz{(g_2/f)}}\ar@{}[rr]|-{\horiz{(\alpha/f)}}&\ar@{}[d]|->>{\verti\rho_{f,g_2}}&y'\\
    \ar[u]^{\verti{f}}x\ar@/_/[rr]_{\horiz g_2}&&\ar[u]_{\verti{(f/g_1)}}y
  }
  \end{array}
\end{equation}
where $\circ$ (\resp $\vcirc$) denotes horizontal (\resp vertical) composition
in a 2-category. 

In order to ensure the termination of the rewriting system, we suppose the
following.

\begin{asm}
  \label{apt:2-termination}
  There is a weight function $\omega_2:\horiz{P_2}\to N$, where $N$ is a
  noetherian commutative monoid, such that for every $\alpha:g_1\To g_2$
  in~$\horiz{P_2}$ and $f$ in $P_1$ such that $\alpha/f$ exists, we have
  $\omega_2(\alpha/f)<\omega_2(\alpha)$, where $\omega_2$ is extended to
  $\pa{\horiz{P_2}\uplus\horiz{\ol{P_2}}}^*$ by
  $\omega_2(\ol\alpha)=\omega_2(\alpha)$, $\omega_2(\id)=0$, and both horizontal
  and vertical compositions are sent to addition.
  %
\end{asm}

\noindent
The assumption that the ordered monoid~$N$ is commutative ensures that the
definition of~$\omega_2$ is compatible with the axioms of 2-categories, such as
associativity or exchange law.

\begin{cor}
  The rewriting system~\eqref{eq:2-zigzag-rs} is convergent.
\end{cor}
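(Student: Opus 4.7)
My plan is to prove convergence (i.e., termination plus confluence) of the rewriting system~\eqref{eq:2-zigzag-rs} by addressing these two properties separately and then combining them via Newman's lemma.

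For termination, I would invoke Assumption~\ref{apt:2-termination} and its weight function~$\omega_2$. The assumption defines $\omega_2$ only on horizontal 2-cells, so as a preliminary step I would extend it to the full set of 2-cells of $Y_2^*$ by assigning weight~$0$ to every vertical generator $\verti\rho_{f,g}$, while still mapping horizontal and vertical composition to addition in the noetherian monoid~$N$. Under this extension, the left-hand side $(\verti{(f/g_1)}\circ\horiz\alpha)\vcirc\verti\rho_{f,g_1}$ of the rewrite rule has total weight $\omega_2(\alpha)$, and the right-hand side $\verti\rho_{f,g_2}\vcirc(\horiz{(\alpha/f)}\circ\verti f)$ has total weight $\omega_2(\alpha/f)$. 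Since $\omega_2(\alpha/f)<\omega_2(\alpha)$ by Assumption~\ref{apt:2-termination}, and since $+$ is strictly monotone on~$N$, the contextual closure of a rewriting step strictly decreases the total weight of a 2-cell. Noetherianity of~$N$ then rules out any infinite rewriting sequence.

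For confluence, the main idea is to mimic the argument used for Lemma~\ref{lem:residuation-zigzag} in the one-dimensional case and show that the left-hand sides of the rules admit no non-trivial overlaps. Each left-hand side is uniquely determined by a single $\verti\rho_{f,g_1}$ generator at its bottom, together with the horizontal 2-generator $\horiz\alpha$ sitting immediately above it (up to horizontal composition with an identity 1-cell~$\verti{(f/g_1)}$). Two distinct rule instances occurring in the same 2-cell of $Y_2^*$ must therefore involve distinct $\verti\rho$ generators, so by the exchange law they can be applied in either order: whether the two $\verti\rho$ generators are horizontally disjoint or stacked vertically, neither rewrite destroys the pattern required to apply the other. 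The absence of critical pairs makes local confluence trivial, and Newman's lemma combined with termination yields convergence.

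The step I expect to be the most delicate is this critical-pair analysis: one needs to verify carefully, using the precise form of the source and target of $\verti\rho_{f,g}$, that the pattern for the second rule really is preserved after performing the first rewrite, so that the two steps genuinely commute. If this simple commutation argument failed in some boundary case (e.g.\ when two vertical generators share boundary 1-cells in an unexpected way), I would fall back on establishing local confluence directly, using the cylinder property (Assumption~\ref{apt:cylinder}) to close the residual critical pair, and then invoke Newman's lemma as above.
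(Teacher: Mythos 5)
Your proposal is correct and follows exactly the route the paper intends (the paper leaves this corollary without an explicit proof): termination from Assumption~\ref{apt:2-termination} by extending $\omega_2$ with weight $0$ on the vertical generators $\verti\rho_{f,g}$ — mirroring the extension $\omega_1(\ol f)=0$ in the proof of Lemma~\ref{lem:residuation-zigzag} — and confluence from the absence of critical pairs between left-hand sides, followed by Newman's lemma. You also rightly flag the only delicate point, namely that "no overlaps" must be checked modulo the 2-category axioms, which is precisely what the paper's remark about the commutativity of $N$ being needed for compatibility with the exchange law is gesturing at.
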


\begin{rem}
  \label{rem:2-termination}
  In a similar way as in Remark~\ref{rem:1-termination}, the
  Assumption~\ref{apt:2-termination} is not the only possible one. Depending on
  the presentation, variants can be more adapted. For instance, if the residual
  $f/g_1=f/g_2$ of the vertical morphism~$f$ in a cylinder~\eqref{eq:cylinder}
  is always a generator or an identity, then the rewriting
  system~\eqref{eq:2-zigzag-rs} is confluent, which is weaker than the previous
  corollary but sometimes sufficient in practice. Also, notice that there are
  really two kinds of cylinders~\eqref{eq:cylinder} considered here: those for
  which $f$ is equational and those for which $g_1$ and $g_2$ are both
  equational. Both cases can be handled separately, \ie two different weights
  (or methods) can be used to handle each of the two cases.
\end{rem}

\noindent
Proposition~\ref{prop:global-cylinder} follows easily, by a reasoning similar to
Proposition~\ref{prop:residual-relation}.

\medskip
The cylinder property has many interesting consequences for the residuation
operation, as we now investigate.

\begin{prop}
  \label{prop:epi}
  In the category $\pcat{P}$, every equational morphism is epi.
\end{prop}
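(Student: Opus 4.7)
The plan is to prove cancellativity directly by residuating along $f$. Let $f : x \to y$ be an equational morphism, i.e.\ $f \in \tilde P_1^*$, and suppose $g_1, g_2 : y \to z$ in $\pcat{P}$ satisfy $g_1 \circ f = g_2 \circ f$; I want to conclude $g_1 = g_2$ in $\pcat{P}$. Picking representatives $g_1, g_2 \in P_1^*$, the hypothesis unfolds to a composite 2-cell $\beta : g_1 \circ f \overset*\ToT g_2 \circ f$ in $P_2^*$.

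The composites $g_1 \circ f$ and $g_2 \circ f$ are then parallel morphisms in $P_1^*$ starting at $x$, while $f$ itself is equational and also starts at $x$. This is exactly the configuration covered by the global cylinder property (Proposition~\ref{prop:global-cylinder}), taking $f$ as the equational morphism and $(g_1 \circ f, g_2 \circ f)$ as the parallel pair. Applying it yields a 2-cell $(g_1 \circ f)/f \overset*\ToT (g_2 \circ f)/f$ in $P_2^*$.

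It then remains to simplify these residuals. Using the compositional rule of Definition~\ref{def:residual} together with the base cases $f/f = \id_y$ and $g/\id_y = g$, one computes
\[
(g_i \circ f)/f \;=\; \bigl(g_i/(f/f)\bigr) \circ (f/f) \;=\; (g_i/\id_y) \circ \id_y \;=\; g_i.
\]
The 2-cell produced by the cylinder property is thus a witness $g_1 \overset*\ToT g_2$, which is precisely the equality $g_1 = g_2$ in $\pcat{P}$.

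There is essentially no obstacle beyond verifying that the cylinder property applies to the given situation (which it does, since $f \in \tilde P_1^*$ and $g_i \circ f \in P_1^*$) and carrying out the routine residual computation above; the proposition is in this sense a direct corollary of Proposition~\ref{prop:global-cylinder}.
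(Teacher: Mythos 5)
Your proof is correct and is essentially identical to the paper's: both apply the global cylinder property (Proposition~\ref{prop:global-cylinder}) to $f$ and the parallel pair $g_1\circ f$, $g_2\circ f$, and then use the identity $(g_i\circ f)/f=g_i$ to conclude. The only difference is that you spell out the routine residual computation that the paper leaves implicit.
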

\begin{proof}
  Suppose given $f:x\to y$ in $\tilde P_1^*$, and $g_1,g_2:y\to z$ in $P_1^*$
  such that $g_1\circ f\overset*\ToT g_2\circ f$. By
  Proposition~\ref{prop:global-cylinder}, we have
  \[
  g_1
  \qeq
  (g_1\circ f)/f
  \quad\overset*\ToT\quad
  (g_2\circ f)/f
  \qeq
  g_2
  \]
  from which we conclude.
\end{proof}

\noindent
Our axiomatization can also be used to show the following
proposition, which will not be used in the rest of
the article:

\begin{prop}[\cite{clerc2015presenting}]
  In the category $\pcat{P}$, every morphism~$g$ admits a pushout along a
  coinitial equational morphism $f$ given by $g/f$.
\end{prop}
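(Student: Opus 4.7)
The plan is to verify the universal property of the pushout for the square
\[
\vxym{
y \ar[r]^{g/f} & w \\
x \ar[u]^f \ar[r]_g & z \ar[u]_{f/g}
}
\]
which commutes in $\pcat{P}$ by Proposition~\ref{prop:residual-relation}. Given $h \colon y \to t$ and $k \colon z \to t$ in $\pcat{P}$ with $h \circ f = k \circ g$, I will construct a unique $\ell \colon w \to t$ such that $\ell \circ (g/f) = h$ and $\ell \circ (f/g) = k$.

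The key idea is to apply the global cylinder property (Proposition~\ref{prop:global-cylinder}) with the equational morphism $f \in \tilde{P}_1^*$ residuated against the parallel morphisms $h \circ f,\ k \circ g \colon x \to t$, which are related by a composite 2-cell because $h \circ f = k \circ g$ holds in $\pcat{P}$. Using the computation rules for residuals (Definition~\ref{def:residual}), one calculates $(h \circ f)/f = (h/(f/f)) \circ (f/f) = h \circ \id_y = h$ and $f/(h \circ f) = (f/f)/h = \id_y / h = \id_t$, as well as $(k \circ g)/f = (k/(f/g)) \circ (g/f)$ and $f/(k \circ g) = (f/g)/k$. The cylinder property then yields the equality $(f/g)/k = \id_t$ together with a 2-cell $h \overset{*}\ToT (k/(f/g)) \circ (g/f)$ in $P_2^*$.

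Setting $\ell \qeq k/(f/g)$, whose target is $t$ precisely because $(f/g)/k = \id_t$, the first triangle $\ell \circ (g/f) = h$ holds in $\pcat{P}$ by the 2-cell produced above. The second triangle $\ell \circ (f/g) = k$ follows from the basic residuation relation $((f/g)/k) \circ k = (k/(f/g)) \circ (f/g)$ of Proposition~\ref{prop:residual-relation}, combined with $(f/g)/k = \id_t$. For uniqueness, if $\ell_1, \ell_2 \colon w \to t$ both satisfy the triangles, then $\ell_1 \circ (f/g) = k = \ell_2 \circ (f/g)$; since $f/g$ lies in $\tilde{P}_1^*$, Proposition~\ref{prop:epi} asserts that it is epi in $\pcat{P}$, whence $\ell_1 = \ell_2$.

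The main conceptual step is recognizing that the universal property reduces to a single application of the global cylinder property, whose residuation identities produce exactly the equation $(f/g)/k = \id_t$ that makes the target of $k/(f/g)$ match $t$. The only potential subtlety is ensuring the residuals $(h \circ f)/f$, $(k \circ g)/f$, $f/(h \circ f)$, and $f/(k \circ g)$ are all well defined, which holds because $f$ is equational in each case, so that Definition~\ref{def:residual} and Lemma~\ref{lem:residual-well-defined} apply throughout.
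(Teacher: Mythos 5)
Your argument is correct. The paper itself does not print a proof of this proposition (it defers to the cited conference version), but your route is exactly the one this framework is set up for: a single application of the global cylinder property (Proposition~\ref{prop:global-cylinder}) to the equational morphism $f$ against a $2$-cell witnessing $h\circ f\overset*\ToT k\circ g$, combined with the residuation identities $(h\circ f)/f=h$, $f/(h\circ f)=\id_t$, $(k\circ g)/f=(k/(f/g))\circ(g/f)$ and $f/(k\circ g)=(f/g)/k$, simultaneously produces the mediating morphism $k/(f/g)$, the equality $(f/g)/k=\id_t$ that fixes its target, and the first triangle; the second triangle then follows from Proposition~\ref{prop:residual-relation}, and uniqueness from Proposition~\ref{prop:epi}. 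The only point you leave implicit is that $f/g$ is itself equational (being a residual of the equational morphism $f$ along $g$), which is what licenses both the residuation $k/(f/g)$ and the appeal to Proposition~\ref{prop:epi}; this is immediate from Assumption~\ref{apt:convergence} and the inductive clauses of Definition~\ref{def:residual}, but is worth stating.
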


\begin{rem}
  The careful reader will have noticed that, so far, we have only used the
  cylinder property in the case where the ``vertical morphism'' $f$ is
  equational. The case where both $g_1$ and $g_2$ are equational will be used in
  the proof of Theorem~\ref{thm:nf-quotient}.
\end{rem}

\section{Comparing presented categories}
\subsection{The category of normal forms}
\label{sec:cat-nf}
We first show that with our hypotheses on the rewriting system, the quotient
category~$\pcat{P}/\tilde P_1$ can be recovered as the following subcategory of
$\pcat{P}$, whose objects are those which are in normal form for $\tilde P_1$.

\begin{defi}
  \label{def:cat-nf}
  The \emph{category of normal forms}~$\nfcat{P}{\tilde P_1}$ is the full
  subcategory of~$\pcat{P}$ whose objects are the normal forms of elements of
  $P_0$ \wrt rules in $\tilde P_1$. We write
  $I:\nfcat{P}{\tilde P_1}\to\pcat{P}$ for the inclusion functor.
\end{defi}

\noindent
For every object $x$ of $\pcat{P}$, we shall denote the associated normal form
by $\nf{x}$, and for every such object~$x$ we shall fix a choice of an
equational morphism $u_x$ from~$x$ to its normal form. Note that, by Newman's
Lemma~\ref{lem:newman}, if $u_x':x\to\nf{x}$ is another choice of such a
morphism then there is a 2-cell $u_x\overset*\ToT u_x'$. Also, we always have
$u_{\nf{x}}=\id_{\nf{x}}$.

\begin{thm}
  \label{thm:nf-quotient}
  The category $\nfcat{P}{\tilde P_1}$ is isomorphic to the quotient category
  $\pcat P/\tilde P_1$.
\end{thm}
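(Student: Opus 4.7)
The strategy is to show that $\nfcat{P}{\tilde P_1}$, equipped with a suitable normalization functor, satisfies the universal property of the quotient $\pcat{P}/\tilde P_1$. I define a functor $F\colon\pcat{P}\to\nfcat{P}{\tilde P_1}$ by $F(x) = \nf{x}$ on objects and $F(f) = (u_y\circ f)/u_x$ on morphisms $f\colon x\to y$, which will play the role of the quotient functor.

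First I verify that $F$ is well-defined. The companion residual $u_x/(u_y\circ f)$ is equational (since residuals of an equational morphism against an arbitrary morphism remain equational, by iterated application of the residuation tile of Assumption~\ref{apt:convergence}) and starts at the normal form $\nf{y}$, hence must equal $\id_{\nf{y}}$; by cofinality this forces $F(f)\colon\nf{x}\to\nf{y}$ to land in $\nfcat{P}{\tilde P_1}$. Independence of the choice of 2-cell representative follows from the global cylinder property (Proposition~\ref{prop:global-cylinder}) applied to the equational morphism $u_x$: any $\alpha\colon f\overset*\ToT f'$ induces $u_y\circ\alpha\colon u_y\circ f\overset*\ToT u_y\circ f'$, whence $(u_y\circ f)/u_x\overset*\ToT(u_y\circ f')/u_x$ by cylinder.

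Functoriality hinges on the identity $F(f)\circ u_x = u_y\circ f$ in $\pcat{P}$, which follows from Proposition~\ref{prop:residual-relation} together with $u_x/(u_y\circ f) = \id_{\nf{y}}$. Applied twice, $F(g\circ f)\circ u_x = u_z\circ g\circ f = F(g)\circ u_y\circ f = F(g)\circ F(f)\circ u_x$, and Proposition~\ref{prop:epi} (every equational morphism is epi) allows cancellation of $u_x$, giving $F(g\circ f) = F(g)\circ F(f)$; identity preservation is immediate. The crucial remaining step is that $F$ sends equational morphisms to identities: for $h\colon x\to y$ equational, the paths $u_y\circ h$ and $u_x$ are both equational and coinitial, ending at the common normal form $\nf{x}=\nf{y}$; an application of Newman's lemma (Lemma~\ref{lem:newman}) together with the cylinder property in the case where both compared morphisms are equational (the use flagged at the end of Section~\ref{sec:cylinder}) forces $u_y\circ h = u_x$ in $\pcat{P}$, and epi cancellation in $F(h)\circ u_x = u_x$ gives $F(h) = \id_{\nf{x}}$.

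To complete the verification of the universal property, for any $G\colon\pcat{P}\to\D$ sending equational morphisms to identities, I set $\tilde G = G\circ I$ where $I\colon\nfcat{P}{\tilde P_1}\hookrightarrow\pcat{P}$ is the inclusion. Applying $G$ to $F(f)\circ u_x = u_y\circ f$ and using $G(u_x) = G(u_y) = \id$ yields $\tilde G\circ F = G$ on morphisms; uniqueness follows from $F\circ I = \id_{\nfcat{P}{\tilde P_1}}$, since for $x$ a normal form $u_x = \id$, so $F(x) = x$ and $F(f) = f$. The main obstacle in executing this plan is the equational-to-identity step, where the combined force of Newman's lemma, the cylinder property for pairs of equational morphisms, and the epi property is indispensable; the remaining steps are essentially residuation bookkeeping once the right identity is in hand.
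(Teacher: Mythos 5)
Your proof is correct and follows the same overall architecture as the paper's: build a normalization functor $\pcat{P}\to\nfcat{P}{\tilde P_1}$ and verify that it satisfies the universal property of the quotient, with uniqueness extracted from the retraction identity $F\circ I=\Id$. Two local choices differ, both harmless and arguably cleaner. First, you set $F(f)=(u_y\circ f)/u_x$ where the paper sets $N(f)=u_{y'}\circ(f/u_x)$ with $y'$ the target of $f/u_x$; by the composition rule for residuals, $(u_y\circ f)/u_x=(u_y/(u_x/f))\circ(f/u_x)$, and $u_y/(u_x/f)$ is an equational morphism from $y'$ to $\nf{y}$, hence $\overset*\ToT u_{y'}$ by the Newman-lemma remark preceding the theorem — so the two formulas agree in $\pcat{P}$, and yours has the small advantage that only the fixed paths $u_x,u_y$ appear, removing the dependence on the intermediate object $y'$ that the paper has to discharge. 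Second, you verify functoriality abstractly from the conjugation identity $F(f)\circ u_x=u_y\circ f$ (Proposition~\ref{prop:residual-relation} plus the vanishing of $u_x/(u_y\circ f)$) followed by epi-cancellation of $u_x$ (Proposition~\ref{prop:epi}), where the paper performs the explicit residuation computation; the same device also handles the equational-to-identity step, for which the paper instead observes directly that $(u\circ{-})/u_x$ is an equational morphism out of a normal form. Your route trades the residuation bookkeeping for an extra reliance on the epi property, which is available anyway; both arguments rest on the same ingredients (global cylinder property for well-definedness, Newman's lemma, residuation).
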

\begin{proof}
  We show that the category $\nfcat{P}{\tilde P_1}$ is a quotient of $P$ by
  $\tilde P_1$. We define a functor~$N:\pcat{P}\to\nfcat{P}{\tilde P_1}$ as the
  functor associating to each object~$x$ its normal form $\nf{x}$
  under~$\tilde P_1$, and to each morphism $f:x\to y$, the morphism
  $\nf{f}:\nf{x}\to\nf{y}$ where $\nf{f}=u_{y'}\circ(f/u_x)$ with $y'$ being the
  target of $f/u_x$:
  \[
  \xymatrix{
    & \hat{y} = \nf y' \\
    \hat{x} \ar@/^/@{..>}[ur]^{\nf{f}} \ar[r]^{f / u_x} & y' \ar[u]_{u_{y'}}\\
    x \ar[u]^{u_x} \ar[r]_f & y \ar[u]_{u_x / f}
  }
  \]
  Notice that, a priori, this definition depends on a choice of a representative
  in $P_1^*$ for $f$, and in $\tilde P_1^*$ for~$u_x$ and~$u_{y'}$, in the
  equivalence classes of morphisms modulo the relations in $P_2$. The global
  cylinder property shown in Proposition~\ref{prop:global-cylinder} ensures that
  the definition is independent of the choice of such representatives (in
  particular, for~$u_x$ we use the consequence of Newman's lemma mentioned above
  and the cylinder property in the case where the basis is equational).
  Given two composable morphisms $f:x\to y$ and $g:y\to z$ we have
  \[
  \begin{array}{r@{\ =\ }l}
    Ng\circ Nf
    &u_{z'}\circ(g/u_y)\circ u_{y'}\circ (f/u_x)\\
    &u_{z'}\circ(g/(u_{y'}\circ(u_x/f)))\circ u_{y'}\circ (f/u_x)\\
    &u_{z'}\circ (g/(u_x/f))/u_{y'}\circ u_{y'}\circ (f/u_x)\\
    &u_{z'}\circ u_{y'}/(g/(u_x/f))\circ g/(u_x/f)\circ (f/u_x)\\
    &u_{z''}\circ((g\circ f)/u_x)\\
    &N(g\circ f)
  \end{array}
  \vcenter{
    \xymatrix@C=7ex{
      && \hat{z} \\
      & \hat{y} \ar@/^/@{.>}[ur]|{Ng} \ar[r] ^{g / u_y} & z' \ar[u]_{u_{z'}} \\
      \hat{x} \ar@/^6ex/@{.>}[uurr]^{N(g\circ f)} \ar@/^/@{.>}[ur]|{Nf} \ar[r]^{f / u_x} & y' \ar[u]_{u_{y'}}\ar[r]^{g/(u_x/f)} & z''\ar[u]_{u_{y'}/(g/(u_x/f))}\\
      x \ar[u]^{u_x} \ar[r]_f & y \ar[u]_{u_x / f} \ar[r]_g & z \ar[u]_{u_x/(g\circ f)}
    }
  }
  \]
  The image of an equational morphism~$u:x\to y$ under the functor $N$ is an
  identity. Namely, we have $Nu=\nf{u}=u_{y'}\circ(u/u_x)$, with
  $u/u_x:\nf{x}\to y'$: since $u/u_x$ is an equational morphism (as the residual
  of an equational morphism) whose source is a normal form, necessarily
  $u/u_x=\id_{\nf{x}}$, $y'=\nf{x}$ and $u_{y'}=\id_{\nf{x}}$. In particular,
  $N$ preserves identities.

  Suppose given a functor $F:\pcat{P}\to\C$ sending the equational morphisms to
  identities. We have to show that there exists a
  unique functor $G:\nfcat{P}{\tilde P_1}\to\C$ such that $G\circ N=F$.
  Writing $I:\nfcat{P}{\tilde P_1}\to\pcat{P}$ for the inclusion functor, it is
  easy to show $I$ is a section of~$N$, \ie
  $N\circ I=\Id_{\nfcat{P}{\tilde P_1}}$, and we define $G=F\circ I$.
  \[
  \vxym{
    \pcat{P}\ar[d]^N\ar[r]^F&\C\\
    \ar@/^/[u]^I\nfcat{P}{\tilde P_1}\ar@{.>}[ur]_G
  }
  \]
  Since $F$ sends equational morphisms to identities, it is easy to check that
  $G\circ N=F$: given an object $x$, we have
  \[
  G\circ N(x)=G(\nf{x})=F\circ I(\nf{x})=F(\nf{x})=F(x)
  \]
  the last equality, being due to the fact that
  $F(u_x)=\id_{F(\nf{x})}=\id_{F(x)}$, and similarly for morphisms.
  %
  %
  Finally, we check the uniqueness of the functor~$G$. Suppose given another
  functor $G':\nfcat{P}{\tilde P_1}\to\C$ such that $G'\circ N=F=G\circ N$. We
  have $G'=G'\circ N\circ I=G\circ N\circ I=G$.
\end{proof}

\subsection{Equivalence with the localization}
\label{sec:equiv-loc}
We now show that the two previous constructions (quotient and normal forms) also
coincide with the third possible construction which consists in formally adding
inverses for equational morphisms.

\begin{defi}
  \label{def:pres-coh}
  A presentation modulo $(P,\tilde P_1)$ is called \emph{coherent} when the
  canonical functor~$\loc{\pcat P}{\tilde P_1}\to\pcat P/\tilde P_1$ is an
  equivalence of categories.
\end{defi}

\noindent
First, notice that we can use the description of the localization
$\loc{\pcat{P}}{\tilde P_1}$ as a category of fractions given in
Theorem~\ref{thm:category-of-fractions}:

\begin{lem}
  The set $\fcat{\tilde P_1}/P_2$ of equational morphisms of~$\pcat{P}$ is a left
  calculus of fractions.
\end{lem}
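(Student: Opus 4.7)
The plan is to verify, in order, the four conditions of Definition~\ref{def:left-calculus-of-fractions} for the class $\Sigma = \fcat{\tilde P_1}/P_2$ of equational morphisms in $\pcat{P}$. Most of the substantive work has already been done in the previous section, so the argument amounts to repackaging those results in the language of calculi of fractions.

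Conditions (1) and (2) are essentially bookkeeping: the set $\tilde P_1^*$ of equational paths in $\fcat{P}$ is closed under composition and contains all empty paths, and these properties clearly survive the quotient by $P_2$. I would dispatch them in one sentence.

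For condition (3), I would apply Proposition~\ref{prop:residual-relation}. Given an equational morphism $u\colon x\to y$ and any morphism $f\colon x\to z$ of $\pcat{P}$, choose representatives $\hat u \in \tilde P_1^*$ and $\hat f \in P_1^*$. Since $\hat u$ is equational, Definition~\ref{def:residual} provides the residuals $\hat u/\hat f \colon z\to t$ and $\hat f/\hat u\colon y\to t$, and $\hat u/\hat f$ remains in $\tilde P_1^*$ (it is the residual of an equational path). Proposition~\ref{prop:residual-relation} then yields a 2-cell
\[
(\hat u/\hat f)\circ \hat f \quad\overset*\ToT\quad (\hat f/\hat u)\circ \hat u
\]
in $P_2^*$, which descends to the required equality $v\circ f = g\circ u$ in $\pcat{P}$ with $v = \hat u/\hat f \in \Sigma$ and $g = \hat f/\hat u$.

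For condition (4), I would invoke Proposition~\ref{prop:epi}: every equational morphism of $\pcat{P}$ is epi. Consequently, if $u\in \Sigma$ and $f\circ u = g\circ u$, then already $f=g$, and condition (4) holds trivially with $v=\id$, as was pointed out in Remark~\ref{rem:epi-cf}. There is no real obstacle here: the lemma is a direct corollary of the residuation-relation proposition (for the square-completion axiom) and the fact that equational morphisms are epi (for the cancellation axiom), and no further construction beyond selecting representatives is required.
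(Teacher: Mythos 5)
Your proof is correct and follows exactly the same route as the paper's: conditions (1) and (2) are dismissed as immediate, condition (3) is derived from Proposition~\ref{prop:residual-relation} via residuation, and condition (4) from Proposition~\ref{prop:epi} together with Remark~\ref{rem:epi-cf}. The only difference is that you spell out the choice of representatives for condition (3), which the paper leaves implicit.
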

\begin{proof}
  We have to show that the set of equational morphisms satisfies the four
  conditions of Definition~\ref{def:left-calculus-of-fractions}: the first two
  (closure under composition and identities) are immediate, the third one
  follows from Proposition~\ref{prop:residual-relation}, and the last one is
  ensured by the fact that all equational morphisms are epi by Proposition
  \ref{prop:epi}, see Remark~\ref{rem:epi-cf}.
\end{proof}


\begin{thm}
  \label{thm:quotient-loc}
  A presentation modulo $(P,\tilde P_2)$ which satisfies
  assumptions~\ref{apt:convergence} to \ref{apt:2-termination} is coherent.
\end{thm}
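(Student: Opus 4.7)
The plan is to factor the comparison functor $\tilde Q \colon \loc{\pcat P}{\tilde P_1} \to \pcat{P}/\tilde P_1$ through the normal-form category $\nfcat{P}{\tilde P_1}$. Writing $I$ for the inclusion $\nfcat{P}{\tilde P_1} \to \pcat{P}$, Theorem~\ref{thm:nf-quotient} implies that $K := Q \circ I \colon \nfcat{P}{\tilde P_1} \to \pcat{P}/\tilde P_1$ is an isomorphism of categories (it is the inverse of the isomorphism induced from $N$ by the universal property of $Q$). Let $J := L \circ I$. Since $\tilde Q \circ L = Q$ by definition of $\tilde Q$, we obtain the commuting triangle $\tilde Q \circ J = K$, so once $J$ is proved to be an equivalence, $\tilde Q$ itself is one (being isomorphic to the composite of $K$ with any quasi-inverse of $J$), and $(P, \tilde P_1)$ is coherent in the sense of Definition~\ref{def:pres-coh}.

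It therefore remains to show $J \colon \nfcat{P}{\tilde P_1} \to \loc{\pcat P}{\tilde P_1}$ is an equivalence. Essential surjectivity is immediate: for any object $x$ of $\pcat P$, the canonical equational morphism $u_x \colon x \to \nf x$ is sent by $L$ to an isomorphism (by definition of the localization), so $x \cong \nf x = J(\nf x)$ in the localization. For fully-faithfulness I would exploit the concrete category-of-fractions description of $\loc{\pcat P}{\tilde P_1}$ supplied by Theorem~\ref{thm:category-of-fractions}, which is legitimate thanks to the lemma immediately preceding the present theorem (that lemma is where Assumptions~\ref{apt:convergence}--\ref{apt:2-termination} enter, via Propositions~\ref{prop:residual-relation} and~\ref{prop:epi}). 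The crucial elementary observation is that \emph{any equational morphism of $\pcat P$ whose source is a normal form must be the identity}: such a morphism is by definition representable by a path in $\tilde P_1^*$ starting at that normal form, and since no generator of $\tilde P_1$ emanates from a normal form, the only such path is the trivial one.

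Applied to morphisms $\nf x \to \nf y$ in the category of fractions, this observation forces any fraction $(f, u)$ to satisfy $u = \id_{\nf y}$, and hence to equal $L(f)$ for a unique $f \colon \nf x \to \nf y$ in $\pcat P$, yielding fullness; and it forces any mediating equational pair $w_1, w_2 \colon \nf y \to j$ witnessing an equivalence of two fractions $(f_1, \id_{\nf y})$ and $(f_2, \id_{\nf y})$ to both equal $\id_{\nf y}$, so that $f_1 = f_2$ already in $\pcat P$, yielding faithfulness. This completes the verification that $J$ is an equivalence, whence $\tilde Q$ is one and coherence follows. The only genuinely non-routine ingredient is the calculus-of-fractions lemma just above; once this is in hand, the argument is a short structural consequence of the \emph{terminality} of normal forms among equational morphisms, and the main work consists in assembling the commuting triangle $\tilde Q \circ J = K$ correctly.
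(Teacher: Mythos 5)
Your proof is correct and follows essentially the same route as the paper: your $J = L \circ I$ is exactly the paper's functor $F$, and both arguments hinge on the same observations that a fraction $(f,u)$ with normal-form target forces $u=\id$ and that $u_x$ becomes invertible in the localization. Your explicit verification that the canonical functor $\tilde Q$ itself is an equivalence, via the triangle $\tilde Q \circ J = K$, is a slightly more careful rendering of the paper's closing sentence, but not a different approach.
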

\begin{proof}
  By Theorem~\ref{thm:nf-quotient}, the statement can be rephrased as the claim
  that $\nfcat{P}{\tilde P_1}$ and $\loc{\pcat P}{\tilde P_1}$ are equivalent
  categories.
  
  Suppose given a morphism $(f,u)$ from $x$ to~$\nf{y}$ in the category of
  fractions~$\loc{\pcat P}{\tilde P_1}$, where~$\nf{y}$ is a normal form
  under~$\tilde P_1$, as on the left below
  \[
  \vxym{
    x\ar[r]^f&i&\ar[l]_u\nf{y}
  }
  \qquad\qquad\qquad\qquad
  \svxym{
    &i_1\ar[d]^(.6){w_1}&\\
    x\ar[ur]^{f_1}\ar[dr]_{f_2}&j&\ar[dl]^{u_2}\ar[ul]_{u_1}\nf{y}\\
    &i_2\ar[u]_(.6){w_2}&\\
  }
  \]
  Since $u$ is equational and $\nf{y}$ is a normal form, one necessarily has
  $i=\nf{y}$ and $u=\id_{\nf{y}}$. Similarly, given two equivalent morphisms
  $(f_1,u_1)$ and $(f_2,u_2)$ whose targets are both a normal form $\nf{y}$, as
  on the right above, one has $i_1=i_2=\nf{y}$ and
  $u_1=u_2=w_1=w_2=\id_{\nf{y}}$, and therefore~$f_1=f_2$.
  Now, consider the functor $F:\nfcat{P}{\tilde P_1}\to\loc{\pcat P}{\tilde P_1}$
  defined as the composite of the inclusion functor
  $I:\nfcat{P}{\tilde P_1}\to\pcat{P}$, see Definition~\ref{def:cat-nf}, with
  the localization functor~$L:\pcat{P}\to\loc{\pcat P}{\tilde P_1}$, see
  Definition~\ref{def:localization}:
  \[
  \vxym{
    \nfcat{P}{\tilde P_1}\ar[dr]_I\ar[rr]^F&&\loc{\pcat P}{\tilde P_1}\\
    &\pcat{P}\ar[ur]_L
  }
  \]
  The functor~$F$ sends a morphism $f:\nf{x}\to\nf{y}$ in the category of normal
  forms to the morphism~$(f,\id_{\nf{y}})$ in the category of fractions. The
  preceding remarks imply immediately that the functor~$F$ is full and faithful.
  %
  %
  Finally, given an object~$y\in\loc{\pcat P}{\tilde P_1}$, there is a
  morphism~$u:y\to\hat{y}$ in $\tilde P_1^*$ to its normal form which induces an
  isomorphism $y\cong\hat{y}$ in $\loc{\pcat P}{\tilde P_1}$. The functor~$F$
  thus provides a weak inverse to the canonical functor
  $\loc{\pcat P}{\tilde P_1}\to \pcat P/\tilde P_1$, which is therefore an
  equivalence of categories.
\end{proof}

\noindent
An illustration of this theorem is provided in~\cite{clerc2015presenting}, on
the presentation of a ``dihedral category'' (note that the assumptions on the
opposite presentation~$P^\op$ mentioned there were superfluous, as shown by the
new proof of the above theorem). Here, in Section~\ref{sec:monoidal}, we will
provide a detailed example, in the refined setting of a presentation of a
monoidal category.

\subsection{Embedding into the localization}
\label{sec:loc-embedding}
In this section, we show another direct application of our techniques. It is
sometimes useful to show that a category embeds into its localization. When the
category is equipped with a calculus of fractions, this can be shown using the
following proposition~\cite[Exercise~5.9.2]{borceux1994handbook}:

\begin{prop}
  \label{prop:factions-embedding}
  Given a left calculus of fractions~$\Sigma$ for a category~$\C$, all the
  morphisms of~$\Sigma$ are mono if and only if the inclusion functor
  $L:\C\to\loc\C\Sigma$ is faithful.
\end{prop}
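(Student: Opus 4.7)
The plan is to unfold the construction of $\loc\C\Sigma$ as a category of fractions provided by Theorem~\ref{thm:category-of-fractions}, reduce the statement to a concrete criterion on equalities of fractions, and then check both implications directly.

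First I would record the following key observation. Fix parallel morphisms $f,g:x\to y$ in $\C$. By definition of the localization functor, $L(f) = (f,\id_y)$ and $L(g) = (g,\id_y)$. Using the description of the equivalence relation on fractions recalled in~\eqref{eq:fractions}, the equality $L(f) = L(g)$ holds in $\loc\C\Sigma$ if and only if there exist $w_1, w_2 \in \Sigma$ with $w_1 \circ \id_y = w_2 \circ \id_y$ and $w_1 \circ f = w_2 \circ g$. The first condition forces $w_1 = w_2$, so we obtain the criterion:
\[
L(f) = L(g) \quad\Longleftrightarrow\quad \exists\, w \in \Sigma,\ w\circ f = w\circ g.
\]

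From this, the two implications of the proposition are immediate. Suppose every $w \in \Sigma$ is mono. Then any identity $L(f) = L(g)$ produces some $w \in \Sigma$ with $w\circ f = w\circ g$, and cancelling $w$ yields $f = g$; thus $L$ is faithful. Conversely, suppose $L$ is faithful. Given $w \in \Sigma$ and parallel morphisms $f, g : x \to y$ with $w \circ f = w \circ g$, the criterion gives $L(f) = L(g)$, whence $f = g$ by faithfulness; thus $w$ is mono.

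There is no real obstacle here: the content is entirely in the criterion above, which simply amounts to spelling out the equivalence relation of Theorem~\ref{thm:category-of-fractions} in the special case where both denominators are identities. The only point to check carefully is that when $u_1 = u_2 = \id_y$, the relation collapses to the existence of a single $w \in \Sigma$ left-equalizing $f$ and $g$, which is what makes the equivalence with monicity of elements of $\Sigma$ exact.
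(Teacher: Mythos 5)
Your proposal is correct and follows essentially the same route as the paper: both directions reduce to the observation that $Lf=Lg$ for parallel $f,g$ means precisely that some $w\in\Sigma$ satisfies $w\circ f=w\circ g$ (since the denominators $\id_y$ force $w_1=w_2$ in the fraction equivalence). The paper carries out the same computation inline for each implication rather than isolating the criterion as a biconditional, but the content is identical.
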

\begin{proof}
  Suppose that the elements of~$\Sigma$ are monos. Given two morphisms
  $f_1,f_2:x\to y$ in~$\C$ such that $Lf_1=Lf_2$, we have a diagram as on the
  left of~\eqref{eq:fractions} with $u_1=u_2=\id_{y}$, and therefore
  $w_1=w_2$. Commutation of the left part of the diagram gives
  $w_1\circ f_1=w_2\circ f_2$ and therefore $f_1=f_2$ since $w_1=w_2$ is
  mono. The functor~$L$ is faithful.

  Conversely, suppose that~$L$ is faithful. Given morphisms~$w$, $f_1$ and~$f_2$
  such that $w\in\Sigma$ and $w\circ f_1=w\circ f_2$, one has $Lf_1=Lf_2$ and
  therefore $f_1=f_2$. The morphism $w$ is thus mono.
\end{proof}

\noindent
Showing that the elements of~$\Sigma$ are monos can however be difficult. In the
case where~$\C=\pcat{P}$ and $\Sigma=\fcat{P_1}$, for some presentation
modulo~$(P,\tilde P_1)$, it can be proved as follows.

\begin{lem}
  Suppose given a presentation modulo $(P,\tilde P_1)$ such that the opposite
  presentation modulo $(P^\op,\tilde P^\op)$ satisfies
  Assumptions~\ref{apt:convergence}, \ref{apt:1-termination}, \ref{apt:cylinder}
  and \ref{apt:2-termination}. Then the localization functor
  $\pcat{P}\to\loc{\pcat{P}}{\tilde P_1}$ is faithful.
\end{lem}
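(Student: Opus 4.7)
The plan is to combine Proposition~\ref{prop:factions-embedding} with Proposition~\ref{prop:epi} applied to the opposite presentation. Since the standing assumptions of this section apply to $(P,\tilde P_1)$, the lemma stated just before Theorem~\ref{thm:quotient-loc} already provides that the equational morphisms form a left calculus of fractions for $\pcat P$. By Proposition~\ref{prop:factions-embedding}, faithfulness of the localization functor~$L$ is then equivalent to showing that every equational morphism of $\pcat P$ is mono, so I reduce the problem to establishing this monomorphism property.

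To obtain it, I exploit the canonical identification $\pcat{P^\op}\cong\pcat P^\op$, under which the equational generators $\tilde P_1^\op$ correspond, with direction reversed, to those of $\tilde P_1$. Under this duality, monomorphisms of $\pcat P$ correspond precisely to epimorphisms of $\pcat{P^\op}$. By hypothesis, $(P^\op,\tilde P_1^\op)$ satisfies Assumptions~\ref{apt:convergence}--\ref{apt:2-termination}, so Proposition~\ref{prop:epi}, applied this time to the opposite presentation, yields that every equational morphism of $\pcat{P^\op}$ is epi. Transporting back through the duality, every equational morphism of $\pcat P$ is mono, and Proposition~\ref{prop:factions-embedding} delivers the desired faithfulness of $L$.

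The argument is essentially bookkeeping around the duality $\pcat{P^\op}\cong\pcat P^\op$ and contains no substantive obstacle; the only point to check is the compatibility of the $2$-generators of $P$ and $P^\op$ with this reversal, which is immediate from the definition of the opposite presentation. The real content of the lemma is the observation that the coherence conditions on the opposite presentation supply exactly the missing ingredient (that equational morphisms are monos) needed to invoke the embedding criterion on the original side, dual to the epi ingredient that was used in the lemma preceding Theorem~\ref{thm:quotient-loc} to verify the fourth axiom of a calculus of fractions.
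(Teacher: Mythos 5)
Your argument is correct and coincides with the paper's own proof, which likewise deduces that equational morphisms of $\pcat{P}$ are mono by applying (the dual of) Proposition~\ref{prop:epi} to the opposite presentation and then concludes via Proposition~\ref{prop:factions-embedding}. Your version merely spells out the duality bookkeeping that the paper leaves implicit.
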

\begin{proof}
  By the dual of Proposition~\ref{prop:epi}, all equational morphisms are mono,
  and we apply Proposition~\ref{prop:factions-embedding}.
\end{proof}

\noindent
Again, an example of application is provided in~\cite{clerc2015presenting}.

\begin{rem}
  The result in the previous proposition is close to Dehornoy's theorem,
  see~\cite{dehornoy2000completeness}
  and~\cite[Section~II.4]{dehornoy2013foundations}, stating that a monoid with a
  presentation satisfying suitable conditions (our assumptions are variants of
  those) embeds into the enveloping groupoid.  Dehornoy's setting is more
  restricted, since taking the enveloping groupoid corresponds to localizing
  \wrt every morphism, while we consider localization with respect to a class of
  morphisms. However, we also need stronger conditions: in
  Assumption~\ref{apt:convergence}, we require the equational rewriting system
  to be terminating, which is never the case for presentations of monoids since
  they have only one object when seen as categories. Dehornoy's conditions also
  impose termination properties (called there \emph{Noetherianity}), but only
  ``locally''. A detailed comparison, together with conditions unifying the two
  approaches, is left for future work.
\end{rem}

\newcommand{\exch}{\overset\chi\Leftrightarrow} 

\section{Coherent presentations of monoidal categories}
\label{sec:monoidal}

\subsection{Presentations of monoidal categories}
\label{sec:mon-pres}
We now turn our attention to presentations of \emph{monoidal categories} and
describe how the previous developments can be adapted to this setting. Only
strict and small such categories will be considered in this article. We start
from premonoidal categories~\cite{power1997premonoidal}, which will be of some
use later on.
In fact, all the developments performed in this section could have been carried out in
the slightly more general setting of 2-categories. However, we feel that the
shift in dimension would have obscured the comparison with the previous sections.

\newcommand{\monunit}{I}
\begin{defi}
  A (strict) \emph{premonoidal category} $(\C,\otimes,\monunit)$ consists of a
  category~$\C$ together with
  \begin{enumerate}
  \item for every object $x\in\C$, a functor
    $x\otimes -:\C\to\C$ called \emph{left action},
  \item for every object $x\in\C$, a functor
    $-\otimes x:\C\to\C$ called \emph{right action},
  \item an object $\monunit\in\C$, called \emph{unit object},
  \end{enumerate}
  such that
  \begin{itemize}
  \item the left and right actions coincide on objects: for every objects
    $x,y\in\C$, $x\otimes y$ is the same whether the $\otimes$ operation is the
    left or the right action, thus justifying the use of the same notation,
  \item the set of objects of~$\C$ is a monoid when equipped with $\otimes$ as
    multiplication and $\monunit$ as neutral element: for every objects $x,y,z\in\C$,
    \[
    (x\otimes y)\otimes z=x\otimes(y\otimes z)
    \qquad\qquad
    \monunit\otimes x=x=x\otimes\monunit
    \]
  \item the left action is a monoid action: for every objects $x,y\in\C$ and
    morphism $f$,
    \[
    x\otimes(y\otimes f)=(x\otimes y)\otimes f
    \qquad\qquad
    \monunit\otimes f=f
    \]
  \item the right action is a monoid action: for every objects $x,y\in\C$ and
    morphism~$f$,
    \[
    (f\otimes x)\otimes y=f\otimes(x\otimes y)
    \qquad\qquad
    f\otimes \monunit=f
    \]
  \item the left and right actions are compatible: for every objects $x,y\in\C$
    and morphism~$f$,
    \[
    (x\otimes f)\otimes y=x\otimes(f\otimes y)
    \]
  \end{itemize}
  A (strict) \emph{monoidal category} is a premonoidal category as above
  satisfying the exchange law: for every morphisms $f:x\to x'$ and $g:y\to y'$,
  \[
  (x'\otimes g)\circ(f\otimes y)
  =
  (f\otimes y')\circ(x\otimes g)
  \]
  allowing us to denote by $f\otimes g$ this morphism, and for every objects
  $x,y\in\C$,
  \[
  x\otimes\id_y=\id_{x\otimes y}=\id_x\otimes y
  \]
  We sometimes omit the tensor and simply write $xy$ instead of $x\otimes y$.
\end{defi}

\begin{defi}
  \label{def:mon-funct}
  A \emph{monoidal functor} $F:\C\to\D$ between two (pre)monoidal categories is
  a functor equipped with a morphism $\eta:\monunit_\D\to F(\monunit_\C)$ and a
  natural transformation of components
  \[
  \mu_{x,y}
  \qcolon
  F(x)\otimes_\D F(y)
  \qto
  F(x\otimes_\C y)
  \]
  making the following diagrams commute for every $x,y,z\in\C$:
  \[
  \xymatrix@C=8ex{
    F(x)\otimes F(y)\otimes F(z)\ar[d]_{\mu_{x,y}\otimes F(z)}\ar[r]^-{F(x)\otimes\mu_{y,z}}&F(x)\otimes F(y\otimes z)\ar[d]^{\mu_{x,y\otimes z}}\\
    F(x\otimes_\C y)\otimes F(z)\ar[r]_-{\mu_{x\otimes y,z}}&F(x\otimes y\otimes z)\\
  }
  \]
  \[
  \xymatrix{
    \monunit\otimes F(x)\ar[d]_{\eta\otimes F(x)}\ar@{=}[r]&F(x)\ar@{=}[d]\\
    F(\monunit)\otimes F(x)\ar[r]_-{\mu_{\monunit,x}}&F(\monunit\otimes x)
  }
  \qquad\qquad\qquad
  \xymatrix{
    F(x)\otimes\monunit\ar[d]_{F(x)\otimes\monunit}\ar@{=}[r]&F(x)\ar@{=}[d]\\
    F(x)\otimes F(\monunit)\ar[r]_-{\mu_{x,\monunit}}&F(x\otimes\monunit)
  }
  \]
  A monoidal functor is \emph{strong} (\resp \emph{strict}) when $\eta$ and
  $\mu_{x,y}$ are isomorphisms (\resp identities).
\end{defi}

\noindent
Since giving a monoidal structure on a category adds a structure of monoid on
the objects, this suggests introducing the following generalization of graphs
and presentations, in order to present monoidal categories.

\begin{defi}
  A \emph{monoidal graph} $(P_0,s_0,t_0,P_1)$ consists of a diagram
  \[
  \vxym{
    P_0\ar[d]^{i_0}&\ar@<-.5ex>[dl]_{s_0}\ar@<.5ex>[dl]^{t_0}P_1\\
    P_0^*&
  }
  \]
  in~$\Set$, where $P_0^*$ is the free monoid on~$P_0$ and $i_0:P_0\to P_0^*$ is
  the canonical injection (sending an element to the corresponding word with one
  letter).
\end{defi}

\noindent
Note that a monoidal graph is simply another name for the data of a \emph{string
  rewriting system}: the set~$P_0$ is the \emph{alphabet}, with $P_0^*$ as set
of \emph{words} over it,
and~$P_1$ is the set of \emph{rewriting rules} along with their source and
target respectively indicated by the functions~$s_0$ and $t_0$. This allows us
to consider classical notions in string rewriting theory (such as critical
pairs) in this context, see~\cite{baader1999term,terese} for details about
those.

A monoidal graph freely generates a monoidal category. If we write $P_1^*$ for
its set of morphisms $i_1:P_1\to P_1^*$ for the canonical injection of
generators into morphisms, and $s_0^*,t_0^*:P_1^*\to P_0^*$ for the source and
target maps, we obtain a diagram
\[
\vxym{
  P_0\ar[d]^{i_0}&\ar@<-.5ex>[dl]_{s_0}\ar@<.5ex>[dl]^{t_0}P_1\ar[d]^{i_1}\\
  P_0^*&\ar@<-.5ex>[l]_{s_0^*}\ar@<.5ex>[l]^{t_0^*}P_1^*
}
\]
where $s_0^*\circ i_1=s_0$ and $t_0^*\circ i_1=t_0$. An explicit description of
the morphisms in~$P_1^*$ is given by the following lemma:
the morphisms of the free premonoidal category are easy to describe and those of
the free monoidal category can be obtained by explicitly quotienting by axioms
imposing that the exchange law holds.

\begin{lem}
  \label{lem:non-mon-pres}
  Suppose fixed a monoidal graph~$P$.
  \begin{enumerate}
  \item The underlying category of the free premonoidal category generated by $P$ is
    the free category generated (in the sense of Section~\ref{sec:pres-cat}) by
    the graph $Q$
    \begin{itemize}
    \item with $Q_0=P_0^*$ as vertices
    \item edges in~$Q_1$ are triples
      \[
      (x,f,z)
      \qcolon
      xyz
      \qto
      xy'z
      \]
      with $x,z\in P_0^*$ and $f:y\to y'$ in~$P_1$.
    \end{itemize}
    and is equipped with the expected premonoidal structure, whose left and
    right actions are given by
    \[
    x'\otimes(x,f,z)\otimes z'
    \qeq
    (x'x,f,zz')
    \]
    In the following, we write $x\otimes f\otimes z$, or even $xfz$, instead of
    $(x,f,z)$ for edges, and the morphisms in~$Q_1^*$ will be denoted
    by~$P_1^\otimes$.
  \item The underlying category of the free monoidal category generated by~$P$
    is the category presented (in the sense of
    Definition~\ref{def:presentation}) by ~$Q=(Q_0,Q_1,Q_2)$, where $Q_2$ is the
    set of all relations
    \begin{equation}
      \label{ex:exch-rel}
      \chi_{z_1fz_2,z_3gz_4}
      \qcolon
      (z_1x'z_2z_3gz_4)\circ(z_1fz_2z_3yz_4)
      \qTo
      (z_1fz_2z_3y'z_4)\circ(z_1xz_2z_3gz_4)
    \end{equation}
    called \emph{exchange relations}, where $z_1fz_2,z_3gz_4\in Q_1$, with
    $f:x\to x'$ and $g:y\to y'$ in $P_1$. We write $\exch$ for the equivalence
    relation generated by~$Q_2$. The morphisms of this category are denoted by
    $P_1^*$ and its monoidal structure is induced by the previous premonoidal
    structure.
  \end{enumerate}
\end{lem}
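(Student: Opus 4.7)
The plan is to verify the two structural descriptions by equipping the described (path or quotient) category directly with the appropriate structure and checking the expected universal property against any (pre)monoidal target.

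For~(1), I equip the free category on~$Q$ with a premonoidal structure by defining, for each $x',z' \in P_0^*$, the functor $x' \otimes (-) \otimes z' : Q_1^* \to Q_1^*$ on edges by $x' \otimes (x, f, z) \otimes z' = (x'x, f, zz')$ and extending to paths by functoriality. The premonoidal axioms — action by the monoid~$P_0^*$, left-right compatibility, and unitality — hold verbatim on generators since concatenation in $P_0^*$ is strictly associative and unital, and they propagate to arbitrary paths because each action is a functor. To identify this with the free premonoidal category on~$P$, given a premonoidal category~$\D$ and a morphism of monoidal graphs $\phi : P \to \D$, one defines $\bar\phi$ on objects by $\bar\phi(x_1 \cdots x_n) = \phi_0(x_1) \otimes \cdots \otimes \phi_0(x_n)$ and on edges $(x,f,z)$ by $\bar\phi(x) \otimes \phi_1(f) \otimes \bar\phi(z)$; functoriality determines $\bar\phi$ uniquely on all paths, and any other strict premonoidal extension must agree on generators.

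For~(2), write $\C_P^{\mathrm{pre}}$ for the premonoidal category of~(1) and form the quotient $\C_P := \C_P^{\mathrm{pre}}/\exch$. The relations $\chi_{\alpha,\beta}$ are stable under whiskering because $x' \otimes \chi_{\alpha,\beta} \otimes z'$ is another instance of~$\chi$ with enlarged context words, so the two actions descend to~$\C_P$ and give it a premonoidal structure. It remains to show that the exchange law holds in~$\C_P$ for arbitrary~$(f,g)$. I argue by induction on the total number of edges of~$Q_1$ comprising $f$ and~$g$. The base case, where both are single generators, is exactly~\eqref{ex:exch-rel}. For the inductive step, writing $g = g_2 \circ g_1$ with $g_1$ strictly shorter, one computes
\[
(x' \otimes g) \circ (f \otimes y)
\ =\ (x' \otimes g_2) \circ (x' \otimes g_1) \circ (f \otimes y)
\ \exch\ (x' \otimes g_2) \circ (f \otimes y_1) \circ (x \otimes g_1)
\ \exch\ (f \otimes y') \circ (x \otimes g),
\]
where $y_1$ is the intermediate object between $g_1$ and~$g_2$ and the two $\exch$ steps come from applying the inductive hypothesis to $(f,g_1)$ and $(f,g_2)$; the case $f = f_2 \circ f_1$ is symmetric, and identities are handled by $x \otimes \id_y = \id_{x \otimes y}$ from the definition of identity paths. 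The monoidal universal property then follows by combining~(1) with the quotient: any morphism of monoidal graphs into a monoidal category~$\D$ automatically satisfies the exchange law, so the unique strict premonoidal extension from $\C_P^{\mathrm{pre}}$ factors uniquely through~$\C_P$.

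The main technical point is the inductive extension of the exchange law from generators to composites: one must track the correct intermediate objects~$y_1$ (or~$x_1$) and ensure that the whiskerings produced at each step land in exactly the form that the next exchange relation requires. Everything else reduces to routine verification of the premonoidal and monoidal axioms, carried by the strict associativity and unitality of concatenation in~$P_0^*$.
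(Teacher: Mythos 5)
Your proof is correct and follows exactly the route the paper has in mind: the paper states this lemma without a detailed proof, offering only the one-line remark that the free premonoidal category is easy to describe and that the free monoidal category is obtained by quotienting it by the exchange axioms. Your write-up supplies the details that remark leaves implicit --- in particular the induction extending the exchange law from pairs of generating edges to arbitrary composites, which is the only step requiring real care --- and that induction is carried out correctly.
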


\begin{exa}
  Consider the monoidal graph with~$P_0=\set{\gen{a}}$ and $P_1=\set{\gen{m}:\gen{a}\gen{a}\to\gen{a}}$. The
  following are morphisms in the free (pre)monoidal category:
  \[
  \gen{ma}\circ\gen{maa}\circ\gen{aaam}
  \qquad\qquad\qquad
  \gen{am}\circ\gen{maa}\circ\gen{maaa}
  \]
  and can be represented using string diagrams as
  \[
  \fig{exm1}
  \qquad\qquad\qquad
  \fig{exm2}
  \]
  These are equal in the free monoidal category, but not in the free premonoidal
  category: one needs the exchange rules in order to transform one into the
  other.
\end{exa}

\noindent
With the notations of the previous lemma, a generator~$xfz:xyz\to xy'z$ in~$Q_1$
can also be called a \emph{rewriting step}: it corresponds to the rewriting rule
$f:y\to y'$ used in a context with the word $x$ on the left and $z$ on the
right. We sometimes write
\[
Q_1\qeq P_0^*P_1P_0^*
\]
for the set of rewriting steps.
From this point of view, the morphisms in~$P_1^\otimes$ are
\emph{rewriting paths} and the morphisms in~$P_1^*$ are rewriting paths up to
commutation of independent rewriting steps, \ie up to the equivalence relation
$\exch$.

\begin{defi}
  A \emph{monoidal presentation} $P=(P_0,s_0,t_0,P_1,s_1,t_1,P_2)$ consists of a
  diagram
  \[
  \vxym{
    P_0\ar[d]^{i_0}&\ar@<-.5ex>[dl]_{s_0}\ar@<.5ex>[dl]^{t_0}P_1\ar[d]^{i_1}&\ar@<-.5ex>[dl]_{s_1}\ar@<.5ex>[dl]^{t_1}P_2\\
    P_0^*&\ar@<-.5ex>[l]_{s_0^*}\ar@<.5ex>[l]^{t_0^*}P_1^*
  }
  \]
  in~$\Set$, where
  \begin{itemize}
  \item $P_0$ is a set of \emph{object generators};
  \item $P_0^*$ is the free monoid on~$P_0$ and $i_0:P_0\to P_0^*$ is the
    canonical injection;
  \item $P_1$ is a set of \emph{morphism generators}, with
    $s_0,t_0:P_1\to P_0^*$ indicating their source and target;
  \item $P_1^*$ is as in Lemma~\ref{lem:non-mon-pres},
    with corresponding source and target maps $s_0^*,t_0^*:P_1^*\to P_0^*$.
  \item $P_2$ is a set of \emph{relations} (or \emph{2-cell generators}), with
    $s_1,t_1:P_2\to P_1^*$ indicating their source and target, which should
    satisfy the globular identities $s_0^*\circ s_1=s_0^*\circ t_1$ and
    $t_0^*\circ s_1=t_0^*\circ t_1$.
  \end{itemize}
  The monoidal category $\pcat{P}$ \emph{presented} by~$P$ is the monoidal
  category with $P_0^*$ as set of objects and whose morphisms are the elements of
  $P_1^*$, quotiented by the smallest congruence (\wrt both composition and
  tensor product) identifying any two morphisms $f$ and $g$ such that there is a
  relation $\alpha:f\To g$.
\end{defi}

\noindent
We also introduce the notation $P_2^*$ (\resp $P_2^\otimes$) for the set of
2-cells in the monoidal (2,1)\nobreakdash-cate\-gory (\resp premonoidal (2,1)-category)
whose underlying monoidal (\resp premonoidal) category is freely generated by
the underlying monoidal graph of~$P$, and 2-cells are generated by~$P_2$. We do
not detail these constructions: all the reader needs to remember for the sequel
is that these 2-cells are formal (vertical) composites of 2-cells of the form
\begin{equation}
  \label{eq:2-gen-ctx}
  x\alpha z
  \qcolon
  xfz
  \qTo
  xgz
  \qcolon
  xyz
  \qto
  xy'z
\end{equation}
 for $x,z\in P_0^*$ and $\alpha:f\To g:y\to y'$ in~$P_2$, or their
inverses. The set of 2-cells of the form~\eqref{eq:2-gen-ctx} is denoted
$P_0^*P_2P_0^*$.

Note that a presented monoidal category has an underlying monoid of objects
which is free. Therefore, not every monoidal category admits a presentation, \eg
the category with $\N/2\N$ as monoid of objects and only identities as
morphisms. In this setting, the use of coherent presentation is really
necessary: there is no associated notion of ``quotient presentation'' (as in
Definition~\ref{def:pres-quotient}).

\begin{defi}
  A \emph{monoidal presentation modulo}  consists of a monoidal
  presentation together with a set $\tilde P_1\subseteq P_1$ of \emph{equational
    generators}  (notation $(P,\tilde P_1)$).
\end{defi}

\noindent
As before, we say that a morphism in~$P_1^*$ (or in $P_1^\otimes$) is
\emph{equational} when it can be obtained by composing and tensoring equational
generators and identities. We write $\tilde P_1^*\subseteq P_1^*$
(or~$\tilde P_1^\otimes\subseteq P_1^\otimes$) for the set of equational
morphisms.

We now generalize the notions of quotient and localization to monoidal
categories.

\begin{defi}
  The \emph{quotient} of a monoidal category~$\C$ by a set~$\Sigma$ of morphisms
  is a monoidal category~$\C/\Sigma$ together with a strict monoidal functor
  $\C\to\C/\Sigma$ sending elements of~$\Sigma$ to identities, which is
  universal with this property.
\end{defi}

\begin{defi}
  \label{def:mon-localization}
  The \emph{localization} of a monoidal category~$\C$ by a set~$\Sigma$ of
  morphisms is a monoidal category~$\loc\C\Sigma$ together with a strict
  monoidal functor $L:\C\to\loc\C\Sigma$ sending the elements of~$\Sigma$ to
  isomorphisms, which is universal with this property.
\end{defi}

\noindent
The localization of a presented monoidal category always admits a monoidal
presentation as in Lemma~\ref{lem:pres-loc}. Moreover, the description as a
category of fractions under suitable conditions
(Theorem~\ref{thm:category-of-fractions}) is still valid~\cite{day1973note}:

\begin{prop}
  Suppose given a left calculus of fractions~$\Sigma$ for a monoidal
  category~$\C$, which is closed under tensor product, \ie for every $f,g\in\Sigma$,
  we have $f\otimes g\in\Sigma$. The associated category of fractions is
  canonically monoidal and isomorphic to the localization in the sense of
  Definition~\ref{def:mon-localization}.
\end{prop}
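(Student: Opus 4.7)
The plan is to transport the monoidal structure from $\C$ to the category of fractions $\loc{\C}{\Sigma}$ (as described in Theorem~\ref{thm:category-of-fractions}) and then verify the universal property. On objects we set $x\otimes y$ to be the tensor product of $\C$, so the monoid of objects is the same. On morphisms, given two fractions $(f_1,u_1)\colon x_1\to y_1$ and $(f_2,u_2)\colon x_2\to y_2$ with $f_i\colon x_i\to i_i$ in $\C$ and $u_i\colon y_i\to i_i$ in $\Sigma$, we set
\[
(f_1,u_1)\otimes(f_2,u_2)\qeq(f_1\otimes f_2,u_1\otimes u_2)\qcolon x_1\otimes x_2\to y_1\otimes y_2,
\]
which is a legitimate fraction because $u_1\otimes u_2\in\Sigma$ by the closure hypothesis, and because $f_1\otimes f_2$ and $u_1\otimes u_2$ are cofinal.

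The first key step is to check that this tensor descends to equivalence classes: if $(f_1,u_1)\sim(f_1',u_1')$ via morphisms $w_1,w_1'\in\Sigma$ (and similarly for the second pair with $w_2,w_2'$), then the morphisms $w_1\otimes w_2,w_1'\otimes w_2'\in\Sigma$ witness $(f_1\otimes f_2,u_1\otimes u_2)\sim(f_1'\otimes f_2',u_1'\otimes u_2')$. I would then check functoriality: preservation of identities is immediate, and preservation of composition is the genuine work. Given composable fractions $(f_i,u_i)$ and $(g_i,v_i)$, their compositions are computed by producing, via axiom (3) of a left calculus of fractions, cofinal fillers $(h_i,w_i)$ with $w_i\in\Sigma$ and $w_i\circ v_i=h_i\circ u_i$. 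Tensoring produces a square with sides $w_1\otimes w_2\in\Sigma$ and $h_1\otimes h_2$ that fills the tensored situation; while this choice of filler is generally different from the one chosen independently for the tensor of compositions, one checks using the equivalence relation defining fractions that the two resulting classes coincide. Strict associativity and unit laws are inherited directly from $\C$.

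Next I would verify that the localization functor $L\colon\C\to\loc{\C}{\Sigma}$, which is the identity on objects and sends $f$ to $(f,\id)$, is strict monoidal: indeed $L(f\otimes g)=(f\otimes g,\id)=(f,\id)\otimes(g,\id)=Lf\otimes Lg$, and $L(\id)=\id$. For the universal property, suppose $F\colon\C\to\D$ is a strict monoidal functor sending elements of $\Sigma$ to isomorphisms. From the ordinary universal property of the localization there is a unique functor $\tilde F\colon\loc{\C}{\Sigma}\to\D$ with $\tilde F\circ L=F$, given explicitly by $\tilde F(f,u)=F(f)\circ F(u)^{-1}$. Its strict monoidality on objects is immediate from that of $F$; on morphisms, since $F$ is strict monoidal,
\[
\tilde F\bigl((f_1,u_1)\otimes(f_2,u_2)\bigr)=F(f_1\otimes f_2)\circ F(u_1\otimes u_2)^{-1}=\bigl(F(f_1)\circ F(u_1)^{-1}\bigr)\otimes\bigl(F(f_2)\circ F(u_2)^{-1}\bigr),
\]
using the monoidal interchange law in $\D$ together with the fact that inverses of tensors of isomorphisms are tensors of inverses. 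Uniqueness of $\tilde F$ as a strict monoidal functor then follows from the uniqueness in the ordinary universal property.

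The main obstacle I expect is the verification that the tensor product is compatible with the composition of fractions, since composition requires choosing fillers via axiom (3), and these choices are not canonical. The resolution is that any two valid choices are identified by the equivalence relation on fractions, so the result is independent of them; making this argument rigorous requires producing an explicit zig-zag of $\Sigma$-morphisms between the two candidate fillers and their tensor, which is exactly where the closure of $\Sigma$ under $\otimes$ is indispensable.
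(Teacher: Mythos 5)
Your proposal is correct and follows essentially the same route as the paper: the tensor of fractions is defined componentwise, and well-definedness on equivalence classes is checked by tensoring the mediating $\Sigma$-morphisms (which is exactly where closure of $\Sigma$ under $\otimes$ enters); the paper leaves the remaining verifications (compatibility with composition of fractions, the monoidal axioms, and the universal property) as ``easily deduced'', whereas you spell them out. One small slip worth fixing: since a fraction $(f,u)$ with $f:x\to i$ and $u:y\to i$ represents $u^{-1}\circ f$, the induced functor should be $\tilde F(f,u)=F(u)^{-1}\circ F(f)$ rather than $F(f)\circ F(u)^{-1}$.
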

\begin{proof}
  The unit object is the one of~$C$, and given two morphisms $(f,u)$ and $(g,v)$
  in the category of fractions $\loc\C\Sigma$, we define their tensor product as
  $(f,u)\otimes(f',u')=(f\otimes f',u\otimes u')$. Suppose that $(f_1,u_1)$ and
  $(f_2,u_2)$ (\resp $(f'_1,u'_1)$ and $(f'_2,u'_2)$) are two representatives of
  the same morphism, \ie that we have mediating morphisms as on the left and the
  middle below:
  \[
  \svxym{
    &i_1\ar[d]^(.6){w_1}&\\
    x\ar[ur]^{f_1}\ar[dr]_{f_2}&j&\ar[dl]^{u_2}\ar[ul]_{u_1}y\\
    &i_2\ar[u]_(.6){w_2}&\\
  }
  \qquad\qquad
  \svxym{
    &i_1'\ar[d]^(.6){w_1'}&\\
    x'\ar[ur]^{f_1'}\ar[dr]_{f_2'}&j'&\ar[dl]^{u_2'}\ar[ul]_{u_1'}y'\\
    &i_2'\ar[u]_(.6){w_2'}&\\
  }
  \qquad\qquad
  \svxym{
    &i_1\otimes i_1'\ar[d]|-(.4){w_1\otimes w_1'}&\\
    x\otimes x'\ar[ur]^{f_1\otimes f_1'}\ar[dr]_{f_2}&j\otimes j'&\ar[dl]^{u_2}\ar[ul]_{u_1\otimes u_1'}y\otimes y'\\
    &i_2\otimes i_2'\ar[u]|-(.4){w_2\otimes w_2'}&\\
  }
  \]
  The diagram on the right shows that $(f_1,u_1)\otimes(f_1',u_1')$ and
  $(f_2,u_2)\otimes(f_2',u_2')$ represent the same morphism. The fact that the
  axioms of a monoidal category are satisfied is easily deduced from the fact
  that $\C$ does satisfy those axioms and from the closure of $\Sigma$ under
  tensor product.
\end{proof}

\subsection{Residuation in monoidal presentations}
\label{sec:mon-res}
We now explain how to extend the residuation techniques developed in
Section~\ref{sec:confl} to presentations of monoidal categories. By
Lemma~\ref{lem:non-mon-pres}, a presentation of a monoidal category can be seen
as a presentation of a premonoidal category together with explicit exchange
rules~$\chi_{f,g}$. The general strategy is thus to apply the previous
constructions and to show that they are compatible with the exchange law: this
strategy turns out to work in our running example, but we explain in
Section~\ref{sec:mon-cyl'} that further generalizations of the axioms are
sometimes needed, requiring to deal explicitly with exchange relations.
From now on, we thus consider that~$P_2$ contains relations of the
form~\eqref{ex:exch-rel}. This of course makes the presentation infinite;
however, these relations will be handled in a specific way, and we will only
need to consider a finite number of those (by only considering ``critical
situations'').

\begin{rem}
  In fact, it is easily shown that we can restrict to relations of the
  form~\eqref{ex:exch-rel} with $z_1$, $z_3$ and $z_4$ empty (all the others can be
  deduced). We will do so in the sequel in order to simplify computations.
\end{rem}

%
As an illustrative example, we will study a presentation of a category simpler
than the example of~$\Delta\times\Delta$ mentioned in the introduction, in order
to have a smaller number of conditions to check. We consider the category
$\Delta_s$ whose objects are natural numbers $p\in\N$ and morphisms $f:p\to q$ are
surjective functions $f:\intset{p}\to\intset{q}$, with
$\intset{p}=\set{0,\ldots,p-1}$. As in the case of $\Delta$, this category is
monoidal with tensor product  given on objects by addition, and with $0$ as
neutral element (such a category is often called a PRO). As a simple variation
on the example of~$\Delta$, this category admits the following presentation.

\begin{lem}
  The category~$\Deltas$ admits the monoidal presentation~$P$ with
  \begin{align*}
    P_0&=\set{\gen a}&
    P_1&=\set{\gen m:\gen{aa}\to\gen a}&
    P_2&=\set{\gen\alpha:\gen m\circ(\gen{ma})\To\gen m\circ(\gen{am})}
  \end{align*}
\end{lem}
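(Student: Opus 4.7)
I would prove this by exhibiting a strict monoidal isomorphism $F : \pcat{P} \to \Delta_s$. On generators, set $F(\gen{a}) = 1$ and let $F(\gen m) : 2 \to 1$ be the unique monotone surjection. To see this extends to a (strict monoidal) functor on $\pcat{P}$, one checks that the relation $\gen{\alpha}$ is satisfied: both $F(\gen{m}) \circ (F(\gen m) \otimes \id_1)$ and $F(\gen{m}) \circ (\id_1 \otimes F(\gen m))$ equal the unique monotone surjection $3 \to 1$. The monoid of objects on both sides is the free monoid on one generator (\ie $\N$), so $F$ is a bijection on objects; it remains to show it is fully faithful.

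The strategy is to introduce a normal form for morphisms of $\pcat{P}$ that visibly matches the combinatorics of a monotone surjection. A monotone surjection $f : p \to q$ is entirely determined by the tuple $(p_1,\ldots,p_q)$ of fiber sizes $p_i = |f^{-1}(i{-}1)|$, with $p_i \geq 1$ and $\sum p_i = p$. To such a tuple I associate the morphism $\mu_{p_1} \otimes \cdots \otimes \mu_{p_q}$ in $\pcat{P}$, where $\mu_k : \gen a^k \to \gen a$ is the right comb defined by $\mu_1 = \id_{\gen a}$ and $\mu_k = \gen m \circ (\id_{\gen a} \otimes \mu_{k-1})$. This assignment defines a candidate inverse $G : \Delta_s \to \pcat P$ on morphisms, compatible with $F$ by construction; fullness of $F$ is then immediate. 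Functoriality of $G$ (preservation of composition) will follow once the auxiliary coherence lemma below is in place, since $G(g \circ f)$ and $G(g) \circ G(f)$ both represent, fiber by fiber, the same monotone surjection built from $\gen m$.

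The main obstacle is faithfulness, \ie the fact that every morphism $f$ of $\pcat P$ is equal to its normal form $G(F(f))$. I would establish this by induction on the number $n$ of generators $\gen m$ appearing in a representative of $f$, after first proving the key coherence lemma: any two composites of $\gen m$'s and identities representing a morphism $\gen a^k \to \gen a$ are equal in $\pcat P$. This is the classical Mac Lane-type coherence for associativity, proved by induction on $k$, rewriting an arbitrary binary tree of $\gen m$'s into the right comb $\mu_k$ by repeated applications of $\gen\alpha$. With this in hand, given a morphism $f : \gen a^p \to \gen a^q$, use the exchange relations of Lemma~\ref{lem:non-mon-pres} to factor a representative of $f$ into a vertical stack of rewriting steps $(\id \otimes \gen m \otimes \id)$; grouping those steps according to which output position they ultimately feed into, $f$ decomposes as $f_1 \otimes \cdots \otimes f_q$ with $f_i : \gen a^{p_i} \to \gen a$. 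By the coherence lemma, each $f_i$ equals $\mu_{p_i}$, yielding $f = G(F(f))$ and hence faithfulness.
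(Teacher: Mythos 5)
The paper does not actually prove this lemma: it is stated as ``a simple variation on the example of~$\Delta$'', deferring to the classical presentation of the simplicial category cited earlier (Mac~Lane, Lafont), so there is no proof in the text to compare yours against. Your argument is the standard one for such statements and is essentially sound: a strict monoidal functor $F$ defined on generators, bijectivity on objects because both object monoids are free on one generator, fullness via the right-comb normal forms $\mu_{p_1}\otimes\cdots\otimes\mu_{p_q}$ indexed by fiber sizes, and faithfulness via a Mac~Lane-style coherence lemma reducing every tree of $\gen m$'s to the right comb. Two points deserve attention. First, the paper's definition of $\Delta_s$ literally says ``surjective functions''; your proof silently reads this as \emph{monotone} surjections, which is certainly what is intended (otherwise the lemma is false, since the presentation has no symmetry generator), but you should say so. Second, the step where most of the real content hides is the claim that an arbitrary morphism $f:\gen a^p\to\gen a^q$ decomposes, up to the exchange relations, as $f_1\otimes\cdots\otimes f_q$ with each $f_i:\gen a^{p_i}\to\gen a$ landing in a single output: this requires arguing that the ``wires feeding a given output'' form a contiguous block of inputs and that the rewriting steps attached to distinct outputs can be permuted past each other by exchange. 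This is true (each generator $\gen m$ has exactly one output, so the string diagram is a disjoint union of planar trees), but as written it is asserted rather than proved, and it is the one place where a referee would ask for detail. With that step filled in, the proof is complete.
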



\begin{exa}
  \label{ex:pres-ds2}
  We are  interested in presenting the category~$\Deltas\times\Deltas$ using
  a presentation modulo. For reasons explained in the introduction, it is
  natural to expect that this category admits the monoidal presentation
  modulo~$P$ with generators
  \begin{align*}
    P_0&=\set{\gen a,\gen b}&
    P_1&=\set{\gen m:\gen{aa}\to\gen a,\gen n:\gen{bb}\to\gen b,\gen g:\gen{ba}\to\gen{ab}}
  \end{align*}
  and relations in $P_2$ being
  \begin{align*}
    \gen\alpha\qcolon \gen m\circ(\gen{ma})&\qTo\gen m\circ(\gen{am})\\
    \gen\beta\qcolon\gen n\circ(\gen{nb})&\qTo\gen n\circ(\gen{bn})\\
    \gen\gamma\qcolon\gen g\circ\gen{bm}&\qTo\gen{mb}\circ\gen{ag}\circ\gen{ga}\\
    \gen\delta\qcolon\gen g\circ\gen{na}&\qTo\gen{an}\circ\gen{gb}\circ\gen{bg}
  \end{align*}
  (plus the mandatory exchange relations) which can be depicted, in categorical
  notation, as
  \[
  \svxym{
    &\ar[dl]_{\gen{ma}}\gen{aaa}\ar[dr]^{\gen{am}}&\\
    \gen{aa}\ar[dr]_{\gen m}&\overset{\gen\alpha}\To&\ar[dl]^{\gen m}\gen{aa}\\
    &\gen a&
  }
  \qquad
  \svxym{
    &\ar[dl]_{\gen{nb}}\gen{bbb}\ar[dr]^{\gen{bn}}&\\
    \gen{bb}\ar[dr]_{\gen n}&\overset{\gen\beta}\To&\ar[dl]^{\gen n}{\gen bb}\\
    &\gen b&
  }
  \qquad
  \svxym{
    &\ar[dl]_{\gen{bm}}\gen{baa}\ar[dr]^{\gen{ga}}&\\
    \ar[ddr]_{\gen g}\gen{ba}&\ar@{}[d]|{\displaystyle\overset{\gen\gamma}\To}&\gen{aba}\ar[d]^{\gen{ag}}\\
    &&\gen{aab}\ar[dl]^{\gen{mb}}\\
    &\gen{ab}
  }
  \qquad
  \svxym{
    &\ar[dl]_{\gen{na}}\gen{bba}\ar[dr]^{\gen{bg}}&\\
    \ar[ddr]_{\gen g}\gen{ba}&\ar@{}[d]|{\displaystyle\overset{\gen\delta}\To}&\gen{bab}\ar[d]^{\gen{gb}}\\
    &&\gen{abb}\ar[dl]^{\gen{an}}\\
    &\gen{ab}
  }
  \]
  In string-diagrammatic notation, the generators $m$, $n$ and $g$ can be
  respectively drawn as
  \[
  \fig{gen_m}
  \qquad\qquad\qquad
  \fig{gen_n}
  \qquad\qquad\qquad
  \fig{gen_g}
  \]
  (the notation is the same for the  first two, but the typing of wires makes the
  notation unambiguous), and the relations can be depicted as
  \begin{align*}
    \fig{assoc_a_l}&\overset{\gen\alpha}\To\fig{assoc_a_r}&
    \fig{assoc_b_l}&\overset{\gen\beta}\To\fig{assoc_b_r}\\
    \fig{gamma_l}&\overset{\gen\gamma}\To\fig{gamma_r}&
    \fig{delta_l}&\overset{\gen\delta}\To\fig{delta_r}
  \end{align*}
  The set of equational generators is $\tilde P_1=\set{\gen g}$.
\end{exa}

First, consider Assumption~\ref{apt:convergence} on our presentation modulo. The
first condition of this assumption asserts that coinitial rewriting steps are
confluent whenever one of them is equational. However, we now have a monoidal
structure and the exchange axioms provide obvious ways to close diagrams in many
cases. For instance, given an equational generator $f:x\to x'$ and a generator
$g:y\to y'$, we can always show the confluence of the pair $(fy,xg)$ of
coinitial morphisms:
\begin{equation}
  \label{eq:res-exchange}
  \vxym{
    x'y\ar@{.>}[r]^{x'g}\ar@{}[dr]|{\displaystyle\overset{\chi_{f,g}}\Longleftrightarrow}&x'y'\\
    xy\ar[u]^{fy}\ar[r]_{xg}&xy'\ar@{.>}[u]_{fy'}
  }
\end{equation}
Moreover, whenever there is a diagram as on the left, there is also one as on
the right:
\begin{equation}
  \label{eq:res-context}
  \vxym{
    y_1\ar@{.>}[r]^{g'}\ar@{}[dr]|{\displaystyle\overset\alpha\Longleftrightarrow}&y'\\
    y\ar[u]^f\ar[r]_g&\ar@{.>}[u]_{f'}y_2
  }
  \qquad\qquad\qquad
  \vxym{
    xy_1z\ar@{.>}[r]^{xg'z}\ar@{}[dr]|{\displaystyle\overset{x\alpha z}\Longleftrightarrow}&xy'z\\
    xyz\ar[u]^{xfz}\ar[r]_{xgz}&\ar@{.>}[u]_{xf'z}xy_2z
  }
\end{equation}
For this reason, one only has to ensure that diagrams can be closed for pairs of
coinitial morphisms which are ``minimal'' (\wrt left and right context) and not
in exchange position.
This observation is well-known in rewriting theory, and used to show that, in a
string rewriting system, the confluence of critical pairs implies local
confluence, which is reformulated in Lemma~\ref{lem:mon-local-confl}
below. This suggests adapting Assumption~\ref{apt:convergence} as follows.

\begin{defi}
  A pair of coinitial rewriting steps $f:x\to y$ and $g:x\to z$ is called a
  \emph{critical pair} when
  \begin{itemize}
  \item $f$ and $g$ are distinct,
  \item for every pair of coinitial rewriting steps $f':x'\to y'$ and
    $g':x'\to z'$ and words $u$ and $v$ such that $f=uf'v$ and
    $g=ug'v$, the words $u$ and $v$ are empty,
  \item there is no pair of rewriting steps $f':x'\to y'$ and $g':x''\to z'$
    such that $f=f'x''$ and $g=x'g'$,
  \item the previous condition also holds if we exchange the roles of~$f$
    and~$g$.
  \end{itemize}
\end{defi}

\setcounter{assumption}{0}
\begin{asm}
  We suppose fixed a presentation modulo $(P,\tilde P_1)$ such that
  \begin{enumerate}
  \item for every pair of coinitial rewriting steps $f:x\to y_1$
    in~$P_0^*\tilde P_1P_0^*$ and $g:x\to y_2$ in~$P_0^*P_1P_0^*$ forming a
    critical pair, there exists a pair of cofinal morphisms $g/f:y_1\to z$
    in~$P_1^\otimes$ and $f/g:y_2\to z$ in $\tilde P_1^\otimes$ and a generating
    2-cell $\alpha:g/f\circ f\ToT f/g\circ g$ in~$P_2$:
    \begin{equation}
      \label{eq:cp-res}
    \vxym{
      y_1\ar@{.>}[r]^{g/f}&z\\
      x\ar[u]^-f\ar[r]_-g\ar@{}[ur]|-{\displaystyle\overset\alpha\Longleftrightarrow}&y_2\ar@{.>}[u]_{f/g}
    }
    \end{equation}
  \item there is no infinite path in~$\tilde P_1^\otimes$.
  \end{enumerate}
\end{asm}

\begin{exa}
In Example~\ref{ex:pres-ds2}, the two critical
  pairs between an equational rewriting rule and another rule correspond to the
  relations~$\gen\gamma$ and~$\gen\delta$, and we have
  \begin{align}
    \label{eq:ds2-res}
    \gen{bm}/\gen{ga}&=\gen{mb}\circ\gen{ag}&\gen{ga}/\gen{bm}&=\gen g&
    \gen{na}/\gen{bg}&=\gen{an}\circ\gen{gb}&\gen{bg}/\gen{na}&=\gen g
  \end{align}
  Given a word~$x$ in $P_0^*$, its \emph{transposition number} is the sum, over
  each occurrence of $\gen a$ in~$x$, of the number of occurrences of~$\gen b$ before
  that~$\gen a$. For instance the transposition number of $\gen{babbaa}$ is
  $1+3+3=7$. Given any morphism of the form
  $x\gen{g}y:x\gen{ba}y\to x\gen{ab}y$, the transposition number of $x\gen{ba}y$
  is strictly greater than the one of $x\gen{ab}y$, which shows that there is no
  infinite rewriting path in~$\tilde P_1^\otimes$. Hence
  Assumption~\ref{apt:convergence} is verified.
\end{exa}

\noindent
By the previous discussion, the existence of residuals on critical pairs implies
the existence of residuals of any pair of coinitial rewriting steps.

\begin{lem}
  \label{lem:mon-local-confl}
  Any pair of coinitial rewriting steps,  one of them being equational,
  admits a residual, given as follows:
  \begin{itemize}
  \item given $f:x\to y_1$ and $g:x\to y_2$ forming a critical pair, one of them
    being equational, their residuals are given by
    Assumption~\ref{apt:convergence},
  \item given $f:x\to x'$ and $g:y\to y'$, we have the
     residual
    \begin{align*}
      (xg)/(fy)\qeq fy'
      \qquad\qquad\qquad
      (gx)/(yf)\qeq y'f
    \end{align*}
    with the corresponding relation as in~\eqref{eq:res-exchange},
  \item given $f:y\to y_1$, $g:y\to y_2$ and an object $x$ and $z$, we have
   the residual
    \begin{align*}
      (xgz)/(xfz)\qeq x(g/f)z
    \end{align*}
    with the corresponding relation as in~\eqref{eq:res-context}.
  \end{itemize}
\end{lem}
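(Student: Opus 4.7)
My plan is to prove the lemma by case analysis on the relative positions of the two generator occurrences in the common source. Given coinitial rewriting steps $f : x \to y_1$ and $g : x \to y_2$, I decompose $f = u_1 \alpha_1 v_1$ and $g = u_2 \alpha_2 v_2$ with $\alpha_1, \alpha_2 \in P_1$. Assuming without loss of generality $|u_1| \leq |u_2|$, the equality $u_1 s_0(\alpha_1) v_1 = u_2 s_0(\alpha_2) v_2$ forces $u_2 = u_1 w$ for a unique word $w \in P_0^*$ satisfying $s_0(\alpha_1) v_1 = w \cdot s_0(\alpha_2) v_2$, and the size of $w$ relative to $|s_0(\alpha_1)|$ dictates the construction of the residual.

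If $|w| \geq |s_0(\alpha_1)|$, then $w = s_0(\alpha_1) \cdot s$ for some $s \in P_0^*$, and the two generator occurrences are disjoint, separated by $s$. I set $g/f = u_1 \cdot t_0(\alpha_1) \cdot s \cdot \alpha_2 \cdot v_2$ and $f/g = u_1 \cdot \alpha_1 \cdot s \cdot t_0(\alpha_2) \cdot v_2$; the mediating 2-cell is obtained by wrapping the exchange relation $\chi_{\alpha_1,\alpha_2}$ (which belongs to $P_2$ by Lemma~\ref{lem:non-mon-pres}) in the context $(u_1, s, v_2)$, which realises case 2 of the lemma combined with the context wrapping of case 3. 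Otherwise $|w| < |s_0(\alpha_1)|$, so the source of $\alpha_2$ begins strictly inside that of $\alpha_1$ and the two occurrences strictly overlap. I then strip the maximal common left prefix $u$ of $u_1, u_2$ and the maximal common right suffix $v$ of $v_1, v_2$, producing a coinitial pair $(f', g')$ together with contexts $u, v$ such that $f = u \cdot f' \cdot v$ and $g = u \cdot g' \cdot v$.

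The crux of this subcase is verifying that $(f', g')$ is a critical pair. Distinctness is inherited from $(f, g)$; the ``no common context'' condition holds by maximality of the stripping, together with the observation that each $\alpha_i$ is a single generator admitting no further internal decomposition; the ``not side-by-side'' condition is precisely the strict overlap $|w| < |s_0(\alpha_1)|$, which prevents the left--right separation whose existence would violate it. Since a rewriting step's being equational depends only on its underlying generator and is therefore preserved under stripping of context, at least one of $f', g'$ remains equational, so Assumption~\ref{apt:convergence} supplies residuals $g'/f'$, $f'/g'$ together with a generating 2-cell $\alpha$; case 3 of the lemma then wraps these by the stripped context to yield residuals of $(f, g)$ and the mediating 2-cell $u \cdot \alpha \cdot v$. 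The main obstacle I anticipate is the careful bookkeeping of degenerate configurations in the overlapping subcase (when one generator's source is contained in the other's, or when one of the stripping words $u, v$ collapses to the empty word), where I have to verify that all three conditions of the critical-pair definition genuinely hold after maximal stripping.
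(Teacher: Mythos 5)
Your proposal is correct and follows essentially the same route as the paper, which justifies this lemma only by the preceding discussion of the exchange squares~\eqref{eq:res-exchange}, the context-wrapping squares~\eqref{eq:res-context}, and the standard critical-pair analysis for string rewriting; your case split on the relative positions of the two generator occurrences (disjoint versus overlapping, followed by maximal context stripping down to a critical pair) is exactly that argument made explicit.
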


\noindent
Finally, we extend residuation to any pair of  coinitial rewriting paths,
by Definition~\ref{def:residual}.

\begin{exa}
  \label{ex:mon-res}
  In our  Example~\ref{ex:pres-ds2}, consider the morphism
  \[
  f
  \qeq
  \gen{bm}\circ\gen{naa}
  \qcolon
  \gen{bbaa}
  \qto
  \gen{ba}
  \]
 Its residuals with $\gen{bga}:\gen{bbaa}\to\gen{baba}$ are
  \[
  f/\gen{bga}
  \qeq
  \gen{mb}\circ\gen{ag}\circ\gen{ana}\circ\gen{gba}
  \qquad\qquad\qquad
  \gen{bga}/f\qeq\gen{g}
  \]
  the first one being computed by
  \begin{align*}
    (\gen{bm}\circ\gen{naa})/\gen{bga}&
    =(\gen{bm}/(\gen{bga}/\gen{naa}))\circ(\gen{naa}/\gen{bga})
    =(\gen{bm}/(\gen{bg}/\gen{na})\gen{a})\circ(\gen{na}/\gen{bg})\gen{a}\\&
    =(\gen{bm}/\gen{ga})\circ(\gen{an}\circ\gen{gb})\gen{a}
    =\gen{mb}\circ\gen{ag}\circ\gen{ana}\circ\gen{gba}
  \end{align*}
  using the residuation rules of Definition~\ref{def:residual} and
  relations~\eqref{eq:ds2-res} (this is also illustrated in the third cylinder
  of Example~\ref{ex:pres-ds2-cyl}). In string diagrammatic form, we have
  \[
  f=\fig{res_f}
  \qquad\qquad\qquad\qquad
  f/\gen{bga}=\fig{res_f_res}
  \]
  Note that residuation is defined on rewriting paths (in~$P_1^\otimes$), but we
  did not claim it was well-defined on morphisms in~$P_1^*$. In fact, it is not
  generally compatible with exchange as we now illustrate. Obviously, the
  morphism~$f$ above is equivalent, up to exchange, to the
  morphism~$f'=\gen{na}\circ\gen{bbm}$. But the residuals $f/\gen{bga}$ and
  $f'/\gen{bga}$ are not:
  \[
  f'/\gen{bga}
  \qeq
  \gen{an}\circ\gen{gb}\circ\gen{bmb}\circ\gen{bbm}
  \]
  (see again Example~\ref{ex:pres-ds2-cyl} for details). Graphically,
  \[
  f'=\fig{res_f2}
  \qquad\qquad\qquad
  f'/\gen{bga}=\fig{res_f2_res}
  \]
\end{exa}

\noindent
We recall that, in order for the definition of residual to make sense (\ie for
Lemma~\ref{lem:residual-well-defined} and
Proposition~\ref{prop:residual-relation} to hold), we need a termination
assumption, which directly translates as follows in the monoidal setting:

\begin{asm}
  There is a weight function $\omega_1:P_0^*P_1P_0^*\to N$, where $N$ is a
  noetherian monoid, such that for every rewriting step
  $f\in\tilde P_0^*P_1P_0^*$ and $g\in P_0^*\tilde P_1P_0^*$, we have
  $\omega_1(g/f)<\omega_1(g)$, where we extend the weight as a function
  $\omega_1:P_1^\otimes\to N$ on rewriting paths by
  $\omega_1(g\circ f)=\omega_1(g)+\omega_1(f)$ and $\omega_1(\id)=0$.
\end{asm}

\begin{exa}
  \label{ex:mon-1-termination}
 For our  example, we define a weight
  function
  \[
  \omega_1
  \qcolon
  P_1^\otimes
  \qto
  \N\times\N
  \]
  with $\N\times\N$ equipped with the pointwise sum and lexicographic ordering.
  The weight is defined on rewriting steps by
  \begin{itemize}
  \item $\omega_1(x\gen{m}y)=(p,0)$ where $p$ is the number of occurrences of~$\gen b$
    in~$x$,
  \item $\omega_1(x\gen{n}y)=(p,0)$ where $p$ is the number of occurrences of~$\gen a$
    in~$y$,
  \item $\omega_1(x\gen{g}y)=(0,q)$ where $q$ is the transposition number of~$xy$.
  \end{itemize}
  It is easily checked that the residuals in~\eqref{eq:ds2-res} are strictly
  decreasing:
  \begin{align*}
    (1,0)=\omega_1(\gen{bm})&>\omega_1(\gen{bm}/\gen{ga})=\omega_1(\gen{mb}\circ\gen{ag})=(0,0)\\
    (1,0)=\omega_1(\gen{na})&>\omega_1(\gen{na}/\gen{bg})=\omega_1(\gen{an}\circ\gen{gb})=(0,0)    
  \end{align*}
  Moreover, this also holds for the residuals along equational rewriting steps
  which are obtained by exchange cells, \eg in
  \[
  \vxym{
    \gen{ab}x\gen{aa}\ar[r]^{\gen{ab}x\gen{m}}\ar@{}[dr]|{\displaystyle\overset{\chi_{\gen gx,\gen m}}\Longleftrightarrow}&\gen{ab}x\gen{a}\\
    \gen{ba}x\gen{aa}\ar[u]^{\gen{g}x\gen{aa}}\ar[r]_{\gen{ba}x\gen{m}}&\gen{ba}x\gen{a}\ar[u]_{\gen{g}x\gen{a}}
  }
  \]
  we have $\omega_1(\gen{ba}x\gen{m})>\omega_1(\gen{ab}x\gen{m})$ because the
  first component is the same but the transposition number decreases. Also, the
  order is compatible with left and right actions in the sense that
  $\omega_1(f)>\omega_1(g)$ implies $\omega_1(xfy)>\omega_1(xgy)$. Thus the weight
  $\omega_1$  fulfills our assumption.
\end{exa}


\subsection{The cylinder property}
\label{sec:mon-cyl}
In the previous section, we have explained how the monoidal structure could help
us to handle more easily the existence of residuals: one only has to ensure that
they exist for critical pairs in order to have their existence for pairs of
coinitial rewriting steps. The situation is very similar for the cylinder
property. For instance, suppose that we have a cylinder as on the left.
\begin{equation}
  \label{eq:cylinder-context}
  \vxym{
    x'\ar@/^/@{.>}[rr]^{g_1/f}\ar@/_/@{.>}[rr]_{g_2/f}\ar@{}[rr]|-{\alpha/f}&&y'\\
    \ar[u]^fx\ar@/^/[rr]^{g_1}\ar@/_/[rr]_{g_2}\ar@{}[rr]|-\alpha&&\ar@{.>}[u]_{f/g_1=f/g_2}y
  }
  \qquad\qquad\qquad
  \vxym{
    zx'z'\ar@/^/@{.>}[rr]^{z(g_1/f)z'}\ar@/_/@{.>}[rr]_{z(g_2/f)z'}\ar@{}[rr]|-{z(\alpha/f)z'}&&zy''z'\\
    \ar[u]^{zfz'}zxz'\ar@/^/[rr]^{zg_1z'}\ar@/_/[rr]_{zg_2z'}\ar@{}[rr]|-{z\alpha z'}&&\ar@{.>}[u]_{z(f/g_1)z'=z(f/g_2)z'}zyz'
  }
\end{equation}
Then for every 0-cells $z$ and $z'$, we also have a cylinder as on the right,
which shows that we only have to show the cylinder property for those which are
minimal \wrt contexts on the left and on the right.

Similarly, consider a situation as above where the bottom cell~$\alpha$ is an
exchange rule
\[
\chi_{g_1,g_2}
\qcolon
(y_1g_2)\circ(g_1x_2)
\qTo
(g_1y_2)\circ(x_1g_2)
\qcolon
x_1x_2
\qto
y_1y_2
\]
with $g_1:x_1\to y_1$ and $g_2:x_2\to y_2$. Also, suppose that the vertical
arrow on the left is of the form $fx_2:x_1x_2\to x'_1x_2$ with $f:x_1\to
x_1'$. In this case, one can always complete the cylinder on the left
\begin{equation}
  \label{eq:cyl-triv}
  \vxym{
    x_1'x_2\ar@/^/@{.>}[rr]^{}\ar@/_/@{.>}[rr]_{}\ar@{}[rr]|-{}&&y_1'y_2\\
    \ar[u]^{fx_2}x_1x_2\ar@/^/[rr]^{(y_1g_2)\circ(g_1x_2)}\ar@/_/[rr]_{(g_1y_2)\circ(x_1g_2)}\ar@{}[rr]|-{\chi_{g_1,g_2}}&&\ar@{.>}[u]_{(f/g_1)y_2}y_1y_2
  }
  \qquad\qquad
  \vxym{
    x_1x_2'\ar@/^/@{.>}[rr]^{}\ar@/_/@{.>}[rr]_{}\ar@{}[rr]|-{}&&y_1y_2'\\
    \ar[u]^{x_1f}x_1x_2\ar@/^/[rr]^{(y_1g_2)\circ(g_1x_2)}\ar@/_/[rr]_{(g_1y_2)\circ(x_1g_2)}\ar@{}[rr]|-{\chi_{g_1,g_2}}&&\ar@{.>}[u]_{y_1(f/g_2)}y_1y_2
  }
\end{equation}
as follows (the same argument will of course apply to a cylinder as on the
right). We write
\[
\alpha
\qcolon
(g_1/f)\circ f
\qTo
(f/g_1)\circ g_1
\qcolon
x_1'
\qto
y_1'
\]
for the 2-cell mediating $f$ and $g_1$ with their residual, obtained by
Assumption~\ref{apt:convergence}. The missing cells above are as follows:
\begin{equation*}
  \vcenter{
    \xymatrix@R=6ex@C=8ex{
      &&y_1'x_2\ar[ddrr]|{y_1'g_2}\\
      &&y_1x_2\ar[dr]|{y_1g_2}\ar[u]|{(f/g_1)x_2}&\\
      x_1'x_2\ar[uurr]|{(g_1/f)x_2}\ar[ddrr]|{x_1'g_2}&\ar[l]|{fx_2}\ar@{}[u]|{\Uparrow\alpha x_2}\ar@{}[d]|{\Downarrow\chi_{f,g_2}}x_1x_2\ar[ur]|{g_1x_2}\ar[dr]|{x_1g_2}\ar@{}[rr]|{\Downarrow\chi_{g_1,g_2}}&&\ar@{}[u]|{\Downarrow\chi_{f/g_1,g_2}}\ar@{}[d]|{\Uparrow\alpha y_2}y_1y_2\ar[r]|-{(f/g_1)y_2}&y_1'y_2\\
      &&x_1y_2\ar[d]|{fy_2}\ar[ur]|{g_1y_2}&\\
      &&x_1'y_2\ar[uurr]|{(g_1/f)y_2}
    }
  }
  \qrsa
  \vcenter{
    \xymatrix{
      &y_1'x_2\ar[dr]|{y_1' g_2}\\
      x_1'x_2\ar[ur]|{(g_1/f)x_2}\ar[dr]|{x_1'g_2}\ar@{}[rr]|{\Downarrow\chi_{g_1/f,g_2}}&&y_1'y_2\\
      &x_1'y_2\ar[ur]|{(g_1/f)y_2}
    }
  }
\end{equation*}

\begin{rem}
  The above diagram should be read as follows, in reference to the notations of
  the cylinder diagram on the left of~\eqref{eq:cylinder-context} (we detail
  this here since this convention will be used again in the following). In the
  center of each picture on the left is figured the 2-cell $\alpha$ (which is
  here~$\chi_{g_1/f,g_2}$), and the morphisms~$f$ and $f/g_1=f/g_2$ are
  represented horizontally as pointing to the left and the right, respectively.
  The rest of the picture on the left exhibits~$g_1/f$ and $g_2/f$.  The
  residual~$\alpha/f$ is displayed on the matching picture on the right:
  \[
  \vxym{
    x'\ar@/^8ex/[rrrr]^{g_1/f}\ar@/_8ex/[rrrr]_{g_2/f}&\ar[l]_-fx\ar@/^3ex/[rr]^{g_1}\ar@/_3ex/[rr]_{g_2}\ar@{}[rr]|{\phantom\alpha\Downarrow\alpha}\ar@{{}{ }{}}@/^6ex/[rr]|\Uparrow\ar@{{}{ }{}}@/_6ex/[rr]|\Downarrow&&y\ar[r]^{f/g_1}_{f/g_2}&y'
  }
  \qrsa
  \vxym{
    x'\ar@/^8ex/[rrrr]^{g_1/f}\ar@/_8ex/[rrrr]_{g_2/f}&\ar@{}[rr]|{\phantom{\alpha/f}\Downarrow{\alpha/f}}&&&y'
  }
  \]
  The ``$\rightsquigarrow$'' sign between the two diagrams indicates here that
  the diagram on the right is the ``top'' of the cylinder whose ``bottom'' and
  ``walls'' are shown on the left; it does not indicate an equality between
  cells, since the diagram on the left cannot be composed and thus does not even
  denote a 2-cell.
\end{rem}


\noindent
The above discussion motivates the introduction of the following definition and
adaptation of the cylinder property.

\begin{defi}
  \label{def:critical-cylinder}
  Suppose given a morphism
  $f:x\to x'$ and a 2-cell $\alpha:g_1\To g_2:x\to y$ in~$P_0^*P_2P_0^*$
  (consisting of one relation in context), as in the left
  of~\eqref{eq:cylinder-context}. Such a pair is \emph{critical} when
  \begin{itemize}
  \item $f$ is different from both $g_1$ and $g_2$,
  \item it is minimal \wrt  contexts: if there is another such pair $(f',\alpha')$
    and $z,z'\in P_0^*$ such that $f=zf'z'$ and $\alpha=z\alpha'z'$ then $z$ and
    $z'$ are both empty,
  \item it is not of the form \eqref{eq:cyl-triv}.
  \end{itemize}
\end{defi}

\begin{rem}
  The critical pairs, in the sense of the previous definition, can easily be
  computed by an adaptation of the usual critical pair algorithm for string
  rewriting systems.
  This is illustrated in Example~\ref{ex:pres-ds2-cyl}.
\end{rem}

\begin{asm}
  The presentation $(P,\tilde P_1)$ satisfies the \emph{cylinder property}: for
  every rewriting step $f:x\to x'$ in $P_0^*\tilde P_1P_0^*$ (\resp in
  $P_0^*P_1P_0^*$) and 2-cell $\alpha:g_1\ToT g_2:x\to y$ in $P_0^*P_2P_0^*$
  with $g_1$ and $g_2$ in $P_0^*P_1P_0^*$ (\resp $P_0^*\tilde P_1P_0^*$) which
  are critical in the sense of Definition~\ref{def:critical-cylinder}, we have
  $f/g_1=f/g_2$ and there exists a 2-cell $g_1/f\overset*\ToT g_2/f$. We write
  $\alpha/f$ for an arbitrary choice of such a 2-cell.
  \begin{equation}
    \label{eq:apt-cylinder}
    \vxym{
      x'\ar@/^/@{.>}[rr]^{g_1/f}\ar@/_/@{.>}[rr]_{g_2/f}\ar@{}[rr]|-{\alpha/f}&&y'\\
      \ar[u]^fx\ar@/^/[rr]^{g_1}\ar@/_/[rr]_{g_2}\ar@{}[rr]|-\alpha&&\ar@{.>}[u]_{f/g_1=f/g_2}y
    }
  \end{equation}
\end{asm}

\noindent
We have restricted the cylinder property to critical pairs in order to have less
computations to perform,
but the previous discussion shows
that the cylinder property holds even for non-critical pairs when the assumption
is valid.

\begin{exa}
  \label{ex:pres-ds2-cyl}
  The presentation of Example~\ref{ex:pres-ds2} satisfies the cylinder property.
 The critical cylinder diagrams are
  \[%
  \,\vxym{%
    &&\gen{aba}\ar[dr]|{\gen{ag}}\\
    &\gen{aaba}\ar[ur]|{\gen{mba}}&\gen{baa}\ar[u]|{\gen{ga}}\ar[dr]|{\gen{bm}}&\gen{aab}\ar[dr]|{\gen{mb}}\\
    \gen{abaa}\ar[ur]|{\gen{aga}}\ar[ddrr]|{\gen{abm}}&\ar@{}[u]|{\Uparrow\gen{\gamma a}}\ar@{}[d]|{\Uparrow\chi_{\gen g,\gen m}}\ar[l]|{\gen{gaa}}\gen{baaa}\ar[ur]|{\gen{bma}}\ar[dr]|{\gen{bam}}\ar@{}[rr]|{\Downarrow\gen{b\alpha}}&&\ar@{}[u]|{\Uparrow\gen\gamma}\ar@{}[d]|{\Downarrow\gen\gamma}\gen{ba}\ar[r]|{\gen g}&\gen{ab}\\
    &&\gen{baa}\ar[d]|{\gen{ga}}\ar[ur]|{\gen{bm}}&\gen{aab}\ar[ur]|{\gen{mb}}\\
    &&\gen{aba}\ar[ur]|{\gen{ag}}
  }%
  \qrsa\,%
  \vxym{%
    &&\gen{aba}\ar@{}[dd]|{\Downarrow\chi_{\gen m,\gen g}}\ar[dr]|{\gen{ag}}\\
    &\gen{aaba}\ar@{}[dddr]|{\Uparrow\gen{a\gamma}}\ar[ur]|{\gen{mba}}\ar[dr]|{\gen{aag}}&&\gen{aab}\ar@{}[dd]|{\Downarrow\gen{\alpha b}}\ar[dr]|{\gen{mb}}\\
    \gen{abaa}\ar[ur]|{\gen{aga}}\ar[ddrr]|{\gen{abm}}&&\gen{aaab}\ar[ur]|{\gen{mab}}\ar[dr]|{\gen{amb}}&&\gen{ab}\\
    &&&\gen{aab}\ar[ur]|{\gen{mb}}\\
    &&\gen{aba}\ar[ur]|{\gen{ag}}
  }%
  \]
  \[%
  \vxym{%
    &&\gen{bab}\ar[dr]|{\gen{gb}}\\
    &&\gen{bba}\ar[u]|{\gen{bg}}\ar[dr]|{\gen{na}}&\gen{abb}\ar[dr]|{\gen{an}}\\
    \gen{bbab}\ar[uurr]|{\gen{nab}}\ar[dr]|{\gen{bgb}}&\ar[l]|{\gen{bbg}}\ar@{}[u]|{\Uparrow\chi_{\gen n,\gen g}}\ar@{}[d]|{\Downarrow\gen{b\delta}}\gen{bbba}\ar[ur]|{\gen{nba}}\ar[dr]|{\gen{bna}}\ar@{}[rr]|{\Downarrow\gen{\beta a}}&&\ar@{}[u]|{\Uparrow\gen\delta}\ar@{}[d]|{\Downarrow\gen\delta}\gen{ba}\ar[r]|{\gen g}&\gen{ab}\\
    &\gen{babb}\ar[dr]|{\gen{ban}}&\gen{bba}\ar[d]|{\gen{bg}}\ar[ur]|{\gen{na}}&\gen{abb}\ar[ur]|{\gen{an}}\\
    &&\gen{bab}\ar[ur]|{\gen{gb}}\\
  }%
  \qrsa\!%
  \vxym{%
    &&\ar@{}[dddl]|{\Downarrow\gen{\delta b}}\gen{bab}\ar[dr]|{\gen{gb}}\\
    &&&\ar@{}[dd]|{\Downarrow\gen{a\beta}}\gen{abb}\ar[dr]|{\gen{an}}\\
    \gen{bbab}\ar[uurr]|{\gen{nab}}\ar[dr]|{\gen{bgb}}&&\ar@{}[dd]|{\Downarrow\chi_{\gen g,\gen n}}\gen{abbb}\ar[ur]|{\gen{anb}}\ar[dr]|{\gen{abn}}&&\gen{ab}\\
    &\gen{babb}\ar[dr]|{\gen{ban}}\ar[ur]|{\gen{gbb}}&&\gen{abb}\ar[ur]|{\gen{an}}\\
    &&\gen{bab}\ar[ur]|{\gen{gb}}\\
  }\,%
  \]%
  \[%
  \vxym{%
    &&\gen{aba}\ar[dr]|{\gen{ag}}\\
    &\gen{abba}\ar[ur]|{\gen{ana}}&\gen{baa}\ar[u]|{\gen{ga}}\ar[dr]|{\gen{bm}}&\gen{aab}\ar[dr]|{\gen{mb}}\\
    \gen{baba}\ar[ur]|{\gen{gba}}\ar[dr]|{\gen{bag}}&\ar@{}[u]|{\Uparrow\gen{\delta a}}\ar@{}[d]|{\Downarrow\gen{b\gamma}}\ar[l]|{\gen{bga}}\gen{bbaa}\ar[ur]|{\gen{naa}}\ar[dr]|{\gen{bbm}}\ar@{}[rr]|{\Downarrow\chi_{\gen n,\gen m}}&&\ar@{}[u]|{\Uparrow\gen\gamma}\ar@{}[d]|{\Downarrow\gen\delta}\gen{ba}\ar[r]|{\gen g}&\gen{ab}\\
    &\gen{baab}\ar[dr]|{\gen{bmb}}&\gen{bba}\ar[d]|{\gen{bg}}\ar[ur]|{\gen{na}}&\gen{abb}\ar[ur]|{\gen{an}}\\
    &&\gen{bab}\ar[ur]|{\gen{gb}}
  }%
  \qrsa%
  \vxym{%
    &&\gen{aba}\ar[dr]|{\gen{ag}}\ar@{}[dd]|{\Downarrow\gen{a\delta}}\\
    &\gen{abba}\ar[ur]|{\gen{ana}}\ar[d]|{\gen{abg}}&&\gen{aab}\ar[dr]|{\gen{mb}}\\
    \gen{baba}\ar[ur]|{\gen{gba}}\ar[dr]|{\gen{bag}}\ar@{}[r]|{\Downarrow\chi_{\gen g,\gen g}}&\gen{abab}\ar[rr]|{\gen{agb}}&&\gen{aabb}\ar[u]|{\gen{aan}}\ar[d]|{\gen{mbb}}\ar@{}[r]|{\Uparrow\chi_{\gen m,\gen n}}&\gen{ab}\\
    &\gen{baab}\ar[dr]|{\gen{bmb}}\ar[u]|{\gen{gab}}&&\gen{abb}\ar[ur]|{\gen{an}}\\
    &&\gen{bab}\ar[ur]|{\gen{gb}}\ar@{}[uu]|{\Uparrow\gen{\gamma b}}
  }%
  \]
  Let us explain how these were computed. Given a critical cylinder as
  in~\eqref{eq:apt-cylinder}, either the vertical morphism ($f$) or the
  horizontal arrows ($g_1$ and $g_2$ at the source and target of the relation
  $\alpha$) are equational:
  \begin{itemize}
  \item if the vertical arrow is equational, then it is of the form
    $x\gen{g}y:x\gen{ba}y\to x\gen{ab}y$; therefore the horizontal relation
    should have a $0$-source which ``intersects'' $\gen{ba}$ in a non-trivial
    way; this source thus either
    \begin{itemize}
    \item begins by an $\gen a$: this gives rise to the first cylinder,
    \item ends by a $\gen b$: this gives rise to the second cylinder, or
    \item contains $\gen{ba}$: this gives rise to the third cylinder.
    \end{itemize}
  \item if the horizontal arrows are equational, then the horizontal relation is
    necessarily an exchange between two morphisms of the form $x\gen{g}y$,
    because no generating relation has equational source and target. We can then
    examine all the possibilities for $x$ and~$y$, and vertical rewriting steps,
    and show that they are all trivial, for instance
    \[
    \vxym{
      \gen{baba}&&\\
      \ar[u]^{\gen{naba}}\gen{bbaba}\ar@/^/[rr]^{}\ar@/_/[rr]_{}\ar@{}[rr]|-{\gen b\chi_{\gen g,\gen g}}&&\gen{babab}
    }
    \qquad\qquad\qquad
    \vxym{
      \gen{baba}&&\\
      \ar[u]^{\gen{bmba}}\gen{baaba}\ar@/^/[rr]^{}\ar@/_/[rr]_{}\ar@{}[rr]|-{\chi_{\gen{ga},\gen g}}&&\gen{abaab}
    }
    \]
    are both of the form~\eqref{eq:cyl-triv}.
  \end{itemize}
\end{exa}


\noindent
In order for the global cylinder property to hold
(Proposition~\ref{prop:global-cylinder}), we need again a termination
assumption, which can be reformulated as follows in the monoidal setting.

\begin{asm}
  There is a weight function $\omega_2:P_0^*P_2P_0^*\to N$, where $N$ is a
  noetherian commutative monoid, such that for every $\alpha:g_1\To g_2$
  in~$P_2$ and $f$ in $P_1$ such that $\alpha/f$ exists, we have
  $\omega_2(\alpha/f)<\omega_2(\alpha)$, where $\omega_2$ is extended to
  arbitrary 2-cells by acting the same on inverses, sending both compositions to
  addition and identities to the neutral element.
\end{asm}

\begin{exa}
  \label{ex:mon-2-termination}
  Going back to Example~\ref{ex:pres-ds2-cyl}, we define
  $\omega_2:P_2\to\N\times\N$ in a similar way as in
  Example~\ref{ex:mon-1-termination} by
  \begin{itemize}
  \item $\omega_2(x\gen\alpha y)=(p,0)$ where $p$ is the number of occurrences
    of~$\gen b$ in~$x$,
  \item $\omega_2(x\gen\beta y)=(p,0)$ where $p$ is the number of occurrences
    of~$\gen a$ in~$y$,
  \item $\omega_2(x\chi y)=(0,q)$ where $q$ is the transposition number of~$xy$.
  \end{itemize}
  It is easy to check that this interpretation is compatible with contexts, \ie
  $\omega_2(\alpha)>\omega_2(\beta)$ implies
  $\omega_2(x\alpha y)>\omega_2(x\beta y)$, that the cylinders of
  Example~\ref{ex:pres-ds2-cyl} are strictly decreasing (the ``top'' is smaller
  than the ``bottom''), and that residual of exchange relations are decreasing.
\end{exa}

\noindent
The global cylinder property folllows from these assumptions (replacing $P_1^*$
by $\tilde P_1^\otimes$ in Proposition~\ref{prop:global-cylinder}, see
Example~\ref{ex:mon-res}), as well as other properties mentioned in
Section~\ref{sec:cylinder}.
Moreover, Theorem~\ref{thm:nf-quotient} holds in our context, in a way which is
compatible with monoidal structure. Namely, the category of normal forms is
monoidal with the tensor product defined on objects by
\[
\nf{x}\otimes\nf{y}
\qeq
\widehat{\nf{x}\nf{y}}
\]
and the action of objects $\nf{x}$ and $\nf{z}$ on a morphism $f:y\to y'$ is
defined (similarly to the proof of Theorem~\ref{thm:nf-quotient}) by
\[
\nf{x}\otimes f\otimes\nf{z}
\qeq
u_{y''}\circ(\nf{x}f\nf{z}/u_{\nf{x}\nf{y}\nf{z}})
\]
where $y''$ is the target of $\nf{x}f\nf{z}/u_{\nf{x}\nf{y}\nf{z}}$. Graphically,
\[
\xymatrix@C=8ex{
  &\widehat{\nf x\nf y'\nf z}\\
  \widehat{\nf x\nf y\nf z}\ar[r]^{\nf{x}f\nf{z}/u_{\nf{x}\nf{y}\nf{z}}}\ar@/^/@{.>}[ur]^{\nf{x}\otimes f\otimes\nf{y}}\ar@{}[dr]|{\displaystyle\overset*\ToT}&y''\ar[u]_{u_{y''}}\\
  \nf{x}\nf{y}\nf{z}\ar[u]^{u_{\nf{x}\nf{y}\nf{z}}}\ar[r]_{\nf{x}f\nf{z}}&\nf{x}\nf{y}'\nf{z}\ar[u]_{u_{\nf{x}\nf{y}\nf{z}}/\nf{x}f\nf{z}}
}
\]
Similarly, the quotient category is monoidal as a quotient of a monoidal
category by a congruence respecting tensor product.

\begin{thm}
  \label{thm:mon-nf-quot}
  The canonical monoidal functor $\nfcat{P}{\tilde P_1}\to\pcat P/\tilde P_1$ is
  a monoidal isomorphism of categories.
\end{thm}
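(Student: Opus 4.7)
The plan is to extend Theorem~\ref{thm:nf-quotient} to the monoidal setting. By that theorem, applied to the underlying category, there is already an isomorphism of categories between $\nfcat P{\tilde P_1}$ and $\pcat P/\tilde P_1$, induced by a normalization functor $N\colon\pcat P\to\nfcat P{\tilde P_1}$ sending $x$ to $\nf x$ and $f:x\to y$ to $u_{y'}\circ(f/u_x)$. The new content is to verify that $N$ is strict monoidal with respect to the tensor product on $\nfcat P{\tilde P_1}$ defined just before the theorem, so that the induced inverse functor $\pcat P/\tilde P_1\to\nfcat P{\tilde P_1}$ is itself strict monoidal, making the canonical functor $\nfcat P{\tilde P_1}\to\pcat P/\tilde P_1$ a monoidal isomorphism.

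I would first check strict monoidality on objects. The empty word is its own normal form, so $N(I)=I$. For $x,y\in P_0^*$, confluence of the equational rewriting system on objects (Assumption~\ref{apt:convergence}) implies that $\widehat{xy}=\widehat{\nf x\,\nf y}$, since both are reached from $xy$ by the equational morphism $u_x\otimes u_y$ and normal forms are unique. Hence $N(x\otimes y)=Nx\otimes Ny$ in $\nfcat P{\tilde P_1}$, matching the definition $\nf x\otimes\nf y=\widehat{\nf x\,\nf y}$ given before the theorem.

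The main obstacle is strict monoidality on morphisms, that is $N(f\otimes g)=Nf\otimes Ng$ for $f:x_1\to x_2$ and $g:y_1\to y_2$ in $\pcat P$. By functoriality of $N$ already established, it suffices to treat the cases $f\otimes\id_{y_1}$ and $\id_{x_2}\otimes g$ and then compose; equivalently, one checks it for rewriting steps $f,g$. Unfolding the definition given just before the theorem, both $N(f\otimes g)$ and $Nf\otimes Ng$ are obtained as equational-closures of residuations of $f\otimes g$ along two parallel equational morphisms $x_1y_1\to\widehat{x_1y_1}$: namely $u_{x_1y_1}$ on one hand, and a composite built from $u_{x_1}\otimes u_{y_1}$ followed by $u_{\nf{x_1}\nf{y_1}}$ on the other. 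Newman's lemma (Lemma~\ref{lem:newman}, monoidal version) produces a 2-cell between these two equational morphisms, and the global cylinder property (Proposition~\ref{prop:global-cylinder} in its monoidal form from Section~\ref{sec:mon-cyl}, which relies crucially on the treatment of exchange cells in critical cylinders) then forces the two residuations to coincide in $\nfcat P{\tilde P_1}$. This is the only place where the full strength of the monoidal cylinder machinery is invoked, and will be the key computation to carry out carefully.

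Once $N$ is shown to be strict monoidal and to send equational morphisms to identities (which is immediate from the non-monoidal argument since $u/u_x=\id_{\nf x}$ for any equational $u$ out of $x$), the universal property of the monoidal quotient $\pcat P/\tilde P_1$ produces a unique strict monoidal functor $\tilde N\colon\pcat P/\tilde P_1\to\nfcat P{\tilde P_1}$ satisfying $\tilde N\circ Q=N$. The same computation as in the proof of Theorem~\ref{thm:nf-quotient}, namely that the inclusion $I\colon\nfcat P{\tilde P_1}\to\pcat P$ is a section of $N$, shows that $\tilde N$ is inverse to the canonical strict monoidal functor $\nfcat P{\tilde P_1}\to\pcat P/\tilde P_1$, yielding the desired monoidal isomorphism of categories.
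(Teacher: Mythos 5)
Your proposal is correct and follows exactly the route the paper intends: the paper states this theorem without an explicit proof, relying on the preceding setup (the tensor product on normal forms defined by residuation, and the remark that Theorem~\ref{thm:nf-quotient} ``holds in our context, in a way which is compatible with monoidal structure''), and your plan — reuse the normalization functor $N$, check strict monoidality on objects via uniqueness of normal forms and on morphisms via Newman's lemma together with the monoidal global cylinder property, then invoke the universal property of the monoidal quotient — is precisely that argument spelled out. The one step you rightly flag as needing care, namely that $N(f\otimes g)=Nf\otimes Ng$ reduces to comparing residuations along $2$-cell–connected equational morphisms (including exchange cells), is indeed where the monoidal cylinder machinery is essential, and your identification of it is accurate.
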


\subsection{The coherence theorem}
\label{sec:mon-coh}
We can finally extend the coherence Theorem~\ref{thm:quotient-loc}, by verifying
that it is compatible with the monoidal structure of the categories:

\begin{thm}
  \label{thm:mon-quotient-loc}
  A presentation modulo $(P,\tilde P_2)$ which satisfies
  Assumptions~\ref{apt:convergence} to \ref{apt:2-termination} is coherent, in
  the sense that there exists a pair of functors
  \[
  F
  \qcolon
  \pcat P/\tilde P_1
  \quad\rightleftarrows\quad
  \loc{\pcat P}{\tilde P_1}
  \qcolon
  G
  \]
  forming an equivalence of categories, with $F$ strong monoidal and $G$ strict
  monoidal.
\end{thm}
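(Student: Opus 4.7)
The plan is to reduce the theorem, via Theorem~\ref{thm:mon-nf-quot}, to constructing a monoidal equivalence between $\nfcat{P}{\tilde P_1}$ and $\loc{\pcat P}{\tilde P_1}$, and then to upgrade the underlying equivalence of categories provided by Theorem~\ref{thm:quotient-loc} by equipping both functors with compatible monoidal structures. The key observation is that the inclusion $I:\nfcat{P}{\tilde P_1}\hookrightarrow\pcat{P}$ is not strict monoidal: for two normal forms $\nf x$ and $\nf y$ one has $I(\nf x\otimes\nf y)=\widehat{\nf x\nf y}$ while $I(\nf x)\otimes I(\nf y)=\nf x\nf y$, and these are only connected in $\pcat P$ by the equational morphism $u_{\nf x\nf y}:\nf x\nf y\to\widehat{\nf x\nf y}$. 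This morphism becomes invertible in the localization, which is precisely what is needed to turn $L\circ I$ into a strong monoidal functor.

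Concretely, I would take $F:\pcat P/\tilde P_1\to\loc{\pcat P}{\tilde P_1}$ to be the composite of the monoidal isomorphism $\pcat P/\tilde P_1\xrightarrow{\sim}\nfcat{P}{\tilde P_1}$ from Theorem~\ref{thm:mon-nf-quot}, the inclusion $I$, and the strict monoidal localization functor $L$, equipping it with coherence data $\mu_{\nf x,\nf y}:=L(u_{\nf x\nf y})^{-1}$ and $\eta:=\id$ (the unit is already a normal form). For $G$, I would use the canonical functor $\tilde Q:\loc{\pcat P}{\tilde P_1}\to\pcat P/\tilde P_1$ induced by applying the universal property of $L$ to the strict monoidal quotient functor $Q:\pcat P\to\pcat P/\tilde P_1$; since both $L$ and $Q$ are strict monoidal and $L$ is the identity on objects, the mediating $\tilde Q$ inherits a strict monoidal structure.

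The main obstacle will be verifying the coherence axioms for $\mu$ and, more delicately, its naturality in both arguments. Naturality reduces to showing that, for a morphism $f:\nf y\to\nf y'$ of $\nfcat{P}{\tilde P_1}$ and an object $\nf x$, the square defining $\nf x\otimes f=u_{y''}\circ(\nf x f/u_{\nf x\nf y})$ in the construction of the monoidal action in the proof of Theorem~\ref{thm:mon-nf-quot} commutes in $\loc{\pcat P}{\tilde P_1}$; this follows from Proposition~\ref{prop:global-cylinder} in its monoidal form (Section~\ref{sec:mon-cyl}) together with the invertibility of $L(u_{\nf x\nf y})$. The associativity pentagon and unit triangles similarly reduce to equalities between composites of $L$-images of equational morphisms with matching source and target among normal forms: by the global cylinder property such parallel equational morphisms are identified in $\pcat P/\tilde P_1$, hence (by faithfulness of $F$ established in the argument of Theorem~\ref{thm:quotient-loc}) their images coincide in the localization.

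Once $F$ is shown to be strong monoidal and $G$ strict monoidal, the underlying equivalence of categories from Theorem~\ref{thm:quotient-loc} upgrades automatically to a monoidal equivalence: $G\circ F=\mathrm{Id}$ on the nose at the level of normal forms, and the natural isomorphism $\mathrm{Id}\Rightarrow F\circ G$ has components $L(u_y):L(y)\to L(\nf y)=F\circ G(y)$, whose monoidality is yet another application of the cylinder property to compare the two parallel equational morphisms $u_{x\otimes y}$ and $u_x\otimes u_y$ followed by $u_{\nf x\nf y}$.
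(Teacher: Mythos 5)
Your proposal is correct and follows essentially the same route as the paper: reduce to the normal-forms category via Theorem~\ref{thm:mon-nf-quot}, take $F$ to be the localized inclusion with coherence morphism the image under $L$ of the normalization path $u:\nf{x}\nf{y}\to\widehat{\nf{x}\nf{y}}$ (invertible because equational), verify the axioms by uniqueness up to 2-cells of parallel equational morphisms (Newman's lemma / the cylinder property), and observe that the canonical functor $G$, acting as $x\mapsto\nf{x}$ on objects, is strict monoidal on the nose. The only discrepancy is the direction of your coherence datum: with the paper's convention $\mu_{x,y}:F(x)\otimes F(y)\to F(x\otimes y)$ from Definition~\ref{def:mon-funct} it should be $L(u_{\nf{x}\nf{y}})$ rather than its inverse, an immaterial point since that morphism is invertible in the localization.
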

\begin{proof}
  The functors are constructed in the proofs of Theorems~\ref{thm:nf-quotient}
  and~\ref{thm:quotient-loc}; we only have to check that they are monoidal. We
  have $F(\monunit)=\monunit$, so we can take $\eta=\id_I$. Given two
  objects~$\nf{x}$ and~$\nf{y}$ in $\nfcat{P}{\tilde P_1}$ (which is monoidally
  isomorphic to $\pcat P/\tilde P_1$ by Theorem~\ref{thm:mon-nf-quot}), we have
  $F(\nf{x})\otimes F(\nf{y})=\nf{x}\nf{y}$ and
  $F(\nf{x}\otimes\nf{y})=\widehat{\nf{x}\nf{y}}$. There exists a normalization
  path $u:\nf{x}\nf{y}\to\widehat{\nf{x}\nf{y}}$ in $\pcat{P}$ and we define
  $\mu_{x,y}=Lu$, where $L:\pcat{P}\to\loc{\pcat P}{\tilde P_1}$ is the
  localization functor, and $\mu_{x,y}$ is invertible because~$u$ is
  equational. The axioms for monoidal functors are easily verified by
  convergence of the equational rewriting system
  (Assumption~\ref{apt:convergence}). For instance, the first diagram of
  Definition~\ref{def:mon-funct} boils down to
  \[
  \vxym{
    \widehat{\nf{x}\nf{y}}\nf{z}\ar[r]\ar@{}[dr]|{\displaystyle\overset*\Leftrightarrow}&\widehat{\nf{x}\nf{y}\nf{z}}\\
    \nf{x}\nf{y}\nf{z}\ar[u]\ar[r]&\nf{x}\widehat{\nf{y}\nf{z}}\ar[u]
  }
  \]
  which follows from Newman's Lemma~\ref{lem:newman}. Conversely, the functor
  $G$ is defined on objects by~$G(x)=\nf{x}$ so that we have
  $G(\monunit)=\monunit$ and
  \[
  G(x)\otimes G(y)
  \qeq
  \widehat{\nf{x}\nf{y}}
  \qeq
  \widehat{xy}
  \qeq
  G(x\otimes y)
  \]
  from which we deduce that we can take $\eta=\id_\monunit$ and
  $\mu_{x,y}=\id_{\widehat{xy}}$.
\end{proof}

In particular, the presentation of Example~\ref{ex:pres-ds2} is coherent.

\subsection{A variant of the cylinder property}
\label{sec:mon-cyl'}
As we saw in Example~\ref{ex:mon-res}, residuation is not in general compatible
with exchange, so that we cannot expect the cylinder property (and
Assumption~\ref{apt:cylinder} in particular) to hold in every case. In fact, a
reasonable generalization of the global cylinder property
(Proposition~\ref{prop:global-cylinder}) could be: given coinitial
morphisms~$f:x\to x'$ in~$\tilde P_1^*$ (\resp in $P_1^*$) and $g_1,g_2:x\to y$
in~$P_1^*$ (\resp in $\tilde P_1^*$) such that there exists a composite 2-cell
$\alpha:g_1\overset*\ToT g_2$, we have $f/g_1\overset*\ToT f/g_2$ and there
exists a 2-cell $g_1/f\overset*\ToT g_2/f$.
\[
\vxym{
  x'\ar@/^/@{.>}[rr]^{g_1/f}\ar@/_/@{.>}[rr]_{g_2/f}\ar@{}[rr]|-{\phantom *\Updownarrow*}&&y'\\
  \ar[u]^fx\ar@/^/[rr]^{g_1}\ar@/_/[rr]_{g_2}\ar@{}[rr]|-{\phantom *\Updownarrow*}&&\ar@{.>}@/^/[u]^{f/g_1}\ar@{}[u]|{\overset*\ToT}\ar@{.>}@/_/[u]_{f/g_2}y
}
\]
Note that we do not require $g_1/f$ and $g_2/f$ to be equal, but only merely
equivalent. However, such a general global cylinder property seems to be
difficult to be deduced from a local property that would  generalize
Assumption~\ref{apt:cylinder} and could easily be checked in practice, so that we
have to restrict to particular cases for now. As an illustration, in this
section, we study the dual of the presentation modulo of
Example~\ref{ex:pres-ds2} and show that it can be handled using a a different 
local cylinder property.

\begin{exa}
  \label{ex:pres-ds2-op}
  We write now~$P$ for the opposite of the presentation of
  Example~\ref{ex:pres-ds2}:
  it has $P_0=\set{\gen a,\gen b}$ as set of generators for objects and the
  generators for morphisms are the dual of those of Example~\ref{ex:pres-ds2}
  (we write $\ol f$ for the dual of a generator~$f$):
  \[
  P_1\qeq\set{\ol{\gen m}:\gen a\to\gen{aa},\ol{\gen n}:\gen b\to\gen{bb},\ol{\gen g}:\gen{ab}\to\gen{ba}}
  \]
  where $\ol{\gen g}$ is the only equational generator:
  $\tilde P_1=\set{\ol{\gen g}}$.  We would like to show that this presentation
  satisfies Assumptions~\ref{apt:convergence} to~\ref{apt:2-termination}, in
  order to be able to apply our main Theorem~\ref{thm:mon-quotient-loc}. Notice
  that, here, it is important that the termination assumptions are restricted to
  equational morphisms, since there is no hope to have a terminating rewriting
  system with all generators. For instance, we have
  \[
  \gen a
  \overset{\ol{\gen m}}\longrightarrow
  \gen{aa}
  \overset{\ol{\gen m}\gen a}\longrightarrow
  \gen{aaa}
  \overset{\ol{\gen m}\gen{aa}}\longrightarrow
  \gen{aaaa}
  \overset{\ldots}\longrightarrow
  \ldots
  \]
  The relations in~$P_2$ are the dual of those of Example~\ref{ex:pres-ds2}:
  \begin{align*}
    \ol{\gen\alpha}\qcolon \ol{\gen m}\gen a\circ\ol{\gen m}&\qTo\gen a\ol{\gen m}\circ\ol{\gen m}\\
    \ol{\gen\beta}\qcolon \ol{\gen n}\gen b\circ\ol{\gen n}&\qTo\gen b\ol{\gen n}\circ\ol{\gen n}\\
    \ol{\gen\gamma}\qcolon\gen b\ol{\gen m}\circ\ol{\gen g}&\qTo\ol{\gen g}\gen a\circ\gen a\ol{\gen g}\circ\ol{\gen m}\gen b\\
    \ol{\gen\delta}\qcolon\ol{\gen n}\gen a\circ\ol{\gen g}&\qTo\gen b\ol{\gen g}\circ\ol{\gen g}\gen b\circ\gen a\ol{\gen n}
  \end{align*}
  The orientation of the source (\resp target) cell has been reversed, and the
  orientation of the relation does not really matter here since we are
  interested in the generated equivalence relation (here, we chose to keep the
  same orientation). Assumption~\ref{apt:convergence} can be checked by
  constructing the two critical residuation squares:
  \[
  \vxym{
    \gen{aab}\ar[r]^{\gen a\ol{\gen g}}\ar@{}[drr]|{\displaystyle\overset{\ol{\gen\gamma}}\Leftrightarrow}&\gen{aba}\ar[r]^{\ol{\gen g}\gen a}&\gen{baa}\\
    \gen{ab}\ar[u]^{\ol{\gen m}\gen b}\ar[rr]_{\ol{\gen g}}&&\gen{ba}\ar[u]_{\gen b\ol{\gen m}}
  }
  \qquad\qquad\qquad
  \vxym{
    \gen{abb}\ar[r]^{\ol{\gen g}\gen b}\ar@{}[drr]|{\displaystyle\overset{\ol{\gen\delta}}\Leftrightarrow}&\gen{bab}\ar[r]^{\gen b\ol{\gen g}}&\gen{bba}\\
    \gen{ab}\ar[u]^{\gen a\ol{\gen n}}\ar[rr]_{\ol{\gen g}}&&\gen{ba}\ar[u]_{\ol{\gen n}\gen a}
  }
  \]
  Termination of the equational rewriting system is easily checked using a
  transposition number as before (counting now the number of occurrences of
  $\gen b$ after occurrences of~$\gen a$). Assumption~\ref{apt:1-termination}
  can be checked by a variation of Example~\ref{ex:mon-1-termination} (obtained
  by exchanging the role of $\gen a$ and $\gen b$ in $\omega_1$). However, there
  is no hope that Assumption~\ref{apt:cylinder} will hold. Namely, consider the
  ``cylinder'' formed by $\ol{\gen g}$ and $\chi_{\ol{\gen m},\ol{\gen n}}$ as
  depicted below:
  \begin{multline}
    \label{eq:pres-ds2-op-cyl}
    \vxym{
      &&\gen{baa}\ar@/^7ex/[rrrrrddd]|{\ol{\gen n}\gen{aa}}\ar@{}[ddrrr]|{\Uparrow\ol{\gen\delta}\gen a}&&&\\
      &&\gen{aba}\ar[u]|{\ol{\gen g}\gen a}\ar@/^/[drrr]|{\gen a\ol{\gen n}\gen a}\\
      &&\gen{aab}\ar@{}[dd]|{\Downarrow\chi_{\ol{\gen m},\ol{\gen n}}}\ar[u]|{\gen a\ol{\gen g}}\ar[dr]|{\gen{aa}\ol{\gen n}}\ar@{}[rrr]|{\Uparrow\gen a\ol{\gen\delta}}&&&\gen{abba}\ar[dr]|{\ol{\gen g}\gen{ba}}\\
      \gen{ba}\ar@/^/[uuurr]|{\gen{b}\ol{\gen m}}\ar@/_/[dddrr]|{\ol{\gen n}\gen a}&\ar[l]|{\ol{\gen g}}\ar@{}[uu]|{\Uparrow\ol{\gen\gamma}}\ar@{}[dd]|{\Downarrow\ol{\gen\delta}}\gen{ab}\ar[ur]|{\ol{\gen m}\gen b}\ar[dr]|{\gen a\ol{\gen n}}&&\gen{aabb}\ar[r]|<<<<{\gen a\ol{\gen g}\gen b}&\gen{abab}\ar[ur]|{\gen{ab}\ol{\gen g}}\ar[dr]|{\ol{\gen g}\gen{ab}}&&\gen{baba}\ar[r]|<<<<{\gen b\ol{\gen g}\gen a}&\gen{bbaa}\\
      &&\gen{abb}\ar[d]|{\ol{\gen g}\gen b}\ar[ur]|{\ol{\gen m}\gen{bb}}\ar@{}[ddrrr]|{\Downarrow\gen b\ol{\gen\gamma}}\ar@{}[rrr]|{\Downarrow\ol{\gen\gamma}\gen b}&&&\gen{baab}\ar@{}[uu]|{\Uparrow\chi_{\ol{\gen g},\ol{\gen g}}}\ar[ur]|{\gen{ba}\ol{\gen g}}\\
      &&\gen{bab}\ar[d]|{\gen b\ol{\gen g}}\ar@/_/[urrr]|{\gen b\ol{\gen m}\gen b}\\
      &&\gen{bba}\ar@/_7ex/[rrrrruuu]|{\gen{bb}\ol{\gen m}}&&&
    }
    \displaybreak[0]\\
    \qrsa
    \vxym{
      &\gen{baa}\ar[dr]|{\ol{\gen n}\gen{aa}}&\\
      \gen{ba}\ar[ur]|{\gen b\ol{\gen m}}\ar[dr]|{\ol{\gen n}\gen a}\ar@{}[rr]|{\Uparrow\chi_{\ol{\gen n},\ol{\gen m}}}&&\gen{bbaa}\\
      &\gen{bba}\ar[ur]|{\gen{bb}\ol{\gen m}}
    }
  \end{multline}
  This is not a proper cylinder because we have
  \[
  \ol{\gen g}/(\gen{aa}\ol{\gen n}\circ \ol{\gen m}\gen b)
  \qeq
  \gen b\ol{\gen g}\gen a\circ\ol{\gen g}\gen{ba}\circ\gen{ab}\ol{\gen g}\circ\gen a\ol{\gen g}\gen b
  \qneq
  \gen b\ol{\gen g}\gen a\circ\gen{ba}\ol{\gen g}\circ\ol{\gen g}\gen{ab}\circ\gen a\ol{\gen g}\gen b
  \qeq
  \ol{\gen g}/(\ol{\gen m}\gen{bb}\circ\gen a\ol{\gen n})
  \]
  The two morphisms in the middle are not equal, they are only equivalent up to
  exchange (up to the relation~$\exch$). Also notice that the residual of the
  exchange relation $\chi_{\ol{\gen m},\ol{\gen n}}$ after $\ol{\gen g}$ is an
  exchange relation ($\chi_{\ol{\gen n},\ol{\gen m}}$ as pictured on the right
  in the above figure).
\end{exa}

\noindent
This example suggests modifying Assumption~\ref{apt:cylinder} to

\renewcommand\theassumption{\ref{apt:cylinder}'}
\begin{asm}
  \label{apt:cylinder-exch}
  The presentation $(P,\tilde P_1)$ satisfies the following conditions.
  \begin{enumerate}
  \item The cylinder property holds up to $\exch$: for every rewriting step
    $f:x\to x'$ in $P_0^*\tilde P_1P_0^*$ (\resp in $P_0^*P_1P_0^*$) and 2-cell
    $\alpha:g_1\ToT g_2:x\to y$ in $P_0^*P_2P_0^*$ with $g_1$ and $g_2$ in
    $P_0^*P_1P_0^*$ (\resp $P_0^*\tilde P_1P_0^*$) which are critical in the
    sense of Definition~\ref{def:critical-cylinder}, we have $f/g_1\exch f/g_2$
    and there exists a 2-cell $g_1/f\overset*\ToT g_2/f$. We write $\alpha/f$
    for an arbitrary choice of such a 2-cell.
    \begin{equation*}
      \vxym{
        x'\ar@/^/@{.>}[rr]^{g_1/f}\ar@/_/@{.>}[rr]_{g_2/f}\ar@{}[rr]|-{\alpha/f}&&y'\\
        \ar[u]^fx\ar@/^/[rr]^{g_1}\ar@/_/[rr]_{g_2}\ar@{}[rr]|-\alpha&&\ar@{.>}@/^/[u]^{f/g_1}\ar@{}[u]|{\exch}\ar@{.>}@/_/[u]_{f/g_2}y
      }
    \end{equation*}
  \item Residuation is compatible with the relation $\exch$: in the cases above
    where $\alpha$ is an exchange cell in context, its residual~$\alpha/f$ is
    also a composite of exchange cells in context.
  \end{enumerate}
\end{asm}

\noindent
Note that the second condition implies that we can consider morphisms up to
exchange, and compute their residuals:

\begin{lem}
  For every coinitial morphisms $f,f'$ and $g,g'$ such that $f\exch f'$ and
  $g\exch g'$, with $f$ and $f'$ equational, we have $g/f\exch g'/f'$.
\end{lem}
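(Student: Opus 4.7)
The plan is to split the statement into two independent pieces using transitivity of $\exch$: if I can show (A) $g/f \exch g'/f$ whenever $g \exch g'$ for fixed equational $f$, and (B) $g/f \exch g/f'$ whenever $f \exch f'$ are both equational for fixed $g$, then the composite chain $g/f \exch g'/f \exch g'/f'$ establishes the lemma. Each of (A) and (B) is then proved by induction on the length of the zig-zag of generating exchange 2-cells in context witnessing $\exch$, which reduces the problem to the case of a single elementary exchange step.

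For (A), let $\chi : g \To g'$ be a single exchange 2-cell in context and $f$ an equational morphism (possibly a composite). Applying the global form of Assumption~\ref{apt:cylinder-exch}(1), obtained by an argument analogous to Proposition~\ref{prop:global-cylinder}, with $f$ as the vertical morphism and $\chi$ as the horizontal 2-cell, I obtain a 2-cell $\chi/f : g/f \ToT^* g'/f$. A priori this is only a 2-cell in $P_2^*$, but Assumption~\ref{apt:cylinder-exch}(2) ensures that the residual of an exchange cell in context is itself a composite of exchange cells in context; iterating along the rewriting steps making up $f$ then guarantees that $\chi/f$ is built entirely from exchanges, which is exactly what is required for $g/f \exch g'/f$.

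For (B), let $\chi' : f \To f'$ be a single exchange 2-cell in context between equational morphisms, say $\chi' = \chi_{u,v}$ for equational rewriting steps $u,v$ in further context. Applying the cylinder property in the complementary case (vertical $g$, horizontal~$\chi'$ with both endpoints equational), Assumption~\ref{apt:cylinder-exch}(1) directly yields the equivalence $g/f \exch g/f'$, which is the ``$f/g_1 \exch f/g_2$'' side of the cylinder conclusion after a renaming of variables. No further appeal to~\ref{apt:cylinder-exch}(2) is needed here, and the induction on the zig-zag length for (B) closes immediately.

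The main obstacle is the passage from the local form of Assumption~\ref{apt:cylinder-exch} (stated on critical pairs of rewriting steps) to a global form valid for composite morphisms, since both $f$ and $g$ in the statement are arbitrary paths. This requires recycling the 2-dimensional zig-zag rewriting argument underlying Proposition~\ref{prop:global-cylinder}, but verifying that it still terminates and converges with the conclusion weakened from the equality $f/g_1 = f/g_2$ to the $\exch$-equivalence $f/g_1 \exch f/g_2$, and that compatibility with the exchange congruence is preserved at every step of the reduction. Once this global cylinder property is in hand, the two reductions above assemble the lemma routinely.
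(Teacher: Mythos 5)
The paper states this lemma without any proof, so there is nothing to compare against line by line; it is offered as a consequence of condition (2) of Assumption~\ref{apt:cylinder-exch}, as the sentence preceding it indicates. Your argument is the natural way to fill that in, and your decomposition is sound: transitivity of $\exch$ lets you treat the two variations separately, part (A) is the case of the cylinder property where the vertical morphism is equational and the horizontal cell is an exchange (with condition (2), plus the trivial completions of the form~\eqref{eq:cyl-triv} and closure under context, guaranteeing that the top of the cylinder is again built from exchange cells), and part (B) is exactly the ``$f/g_1\exch f/g_2$'' half of the conclusion in the complementary case where the horizontal morphisms are equational. Both reductions to a single generating exchange cell in context are legitimate.

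The one place where your proof is not actually carried out is the one you flag yourself: the passage from the local Assumption~\ref{apt:cylinder-exch} to a global cylinder property for composite $f$ and composite witnesses of $\exch$. This is not a routine replay of Proposition~\ref{prop:global-cylinder}: the rewriting rule~\eqref{eq:2-zigzag-rs} only typechecks because $f/g_1=f/g_2$ holds on the nose, and once this is weakened to $f/g_1\exch f/g_2$ the two sides of the rule no longer have equal boundaries, so mediating exchange 2-cells must be inserted and the termination argument of Assumption~\ref{apt:2-termination} (or the strong-confluence variant of Remark~\ref{rem:2-termination}) re-examined. The paper itself concedes at the start of Section~\ref{sec:mon-cyl'} that deducing such a global property from a local one ``seems to be difficult,'' and it is precisely condition (2) that is supposed to make the bookkeeping close up, since it keeps all the inserted mediating cells within $\exch$. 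So: right strategy, consistent with what the authors evidently intend, but the deferred global step is where essentially all of the remaining mathematical content lives, and a complete proof would have to spell it out rather than cite Proposition~\ref{prop:global-cylinder} by analogy.
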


\begin{rem}
  In the presentation of Example~\ref{ex:pres-ds2}, residuation is not
  compatible with exchange because the last cylinder of
  Example~\ref{ex:pres-ds2-cyl} shows that an exchange relation can have a
  residual which does not consist of exchange relations (in context) only. 
Thus, while 
  first condition of Assumption~\ref{apt:cylinder-exch} is a relaxed version of
  Assumption~\ref{apt:cylinder},  the second condition is a
  strengthening, and the two assumptions are thus incomparable.
\end{rem}

\begin{exa}
  In our Example~\ref{ex:pres-ds2-op}, one easily checks that the only critical
  cylinder is \eqref{eq:pres-ds2-op-cyl}. The only residuals of exchange
  relations are thus of the form~\eqref{eq:cyl-triv} (in context) and are
  therefore exchange relations in context: residuation is compatible with
  $\exch$. As mentioned before, the critical cylinder~\eqref{eq:pres-ds2-op-cyl}
  is of the right shape, up to exchange. For the termination
  Assumption~\ref{apt:2-termination}, we distinguish two cases depending on
  whether the vertical arrow is equational or not, as explained in
  Remark~\ref{rem:2-termination}. When the vertical arrow is equational (\ie of
  the form $x\ol{\gen g}y$), termination is shown using the variant of
  Example~\ref{ex:mon-2-termination} obtained by exchanging the role of $\gen a$
  and $\gen b$ in $\omega_2$. However, this same weight~$\omega_2$ will not work
  when the vertical arrow is not equational.
  For instance, the residual of the relation
  \[
  \gen{ab}\chi_{\ol{\gen g},\ol{\gen g}}
  \qcolon
  \gen{abba}\ol{\gen g}\circ\gen{ab}\ol{\gen g}\gen{ab}
  \qTo
  \gen{ab}\ol{\gen g}\gen{ba}\circ\gen{abab}\ol{\gen g}
  \qcolon
  \gen{ababab}
  \qto
  \gen{abbaba}
  \]
  after the morphism
  \[
  \ol{\gen m}\gen{babab}
  \qcolon
  \gen{ababab}
  \qto
  \gen{aababab}
  \]
  is
  \[
  \gen{aab}\chi_{\ol{\gen g},\ol{\gen g}}
  \qcolon
  \gen{aabba}\ol{\gen g}\circ\gen{aab}\ol{\gen g}\gen{ab}
  \qTo
  \gen{aab}\ol{\gen g}\gen{ba}\circ\gen{aabab}\ol{\gen g}
  \qcolon
  \gen{aababab}
  \qto
  \gen{aabbaba}
  \]
  and we have
  \[
  (0,1)
  \qeq
  \omega_1(\gen{ab}\chi_{\ol{\gen g},\ol{\gen g}})
  \quad\not>\quad
  \omega_1(\gen{aab}\chi_{\ol{\gen g},\ol{\gen g}})
  \qeq
  (0,2)
  \]
  Intuitively, in $\gen{ab}\chi_{\ol{\gen g},\ol{\gen g}}$ there is one
  transposition left to do in the context, whereas after residuation the
  $\gen a$ was duplicated and therefore there are two transpositions left in the
  context in $\gen{aab}\chi_{\ol{\gen g},\ol{\gen g}}$.  However, it can be
  noticed that in cases of the form~\eqref{eq:cyl-triv} (in context) the
  residual of the vertical rewriting step is always a rewriting step (as opposed
  to a rewriting path) and therefore the global cylinder property can be shown
  as explained in Remark~\ref{rem:2-termination}. Finally, it can be shown as in
  Theorem~\ref{thm:mon-quotient-loc} that the presentation
  modulo~$(P,\tilde P_1)$ is coherent.
\end{exa}

\section{Conclusion}
\label{sec:concl}
We have introduced a notion of presentation of a (monoidal) category modulo an
``equational'' rewriting system, and provided
coherence conditions ensuring that the equational rules are well-behaved \wrt
the generators. In particular, we show that, under those assumptions, all the
three possible natural constructions for the presented category are
equivalent. These assumptions are ``local'' in the sense that they are given
directly on the presentations, and can thus be used in practice in order to
perform computations, as illustrated in the article. A more general theory of
situations where quotient coincides with localization is left for future work.

In the future, we would like to investigate more applications, by studying
generic situations. For instance, given two monoidal categories with a coherent
presentation, can we always construct a monoidal presentation of their product?
Having more illustrative examples is also important to evaluate how generic the
assumptions we proposed are. As we have explained in Section~\ref{sec:mon-cyl},
the general methodology seems to be quite stable, but there are many possible
local conditions in order to implement it (\eg local cylinder assumptions such
as Assumptions~\ref{apt:cylinder} or \ref{apt:cylinder-exch} in order to show
the global cylinder property). In particular, we would like to have more general
conditions which would encompass both Assumptions~\ref{apt:cylinder}
and~\ref{apt:cylinder-exch}.
On the practical side, it would be interesting to study extensions of the
Knuth-Bendix procedure which could transform a presentation in order to
hopefully complete it into one satisfying our assumptions.
Finally, we would like to study applications to coherence of various algebraic
structures: presentations modulo allow one to turn some of the generators into
isomorphisms, while remaining equivalent to the situation where those generators
are identities, which is what the coherence theorems (such as MacLane's theorem
for monoidal categories) ensure, in a slightly different formal context.


\bibliographystyle{abbrv}
\bibliography{papers}
\end{document}